\newif\ifdraft\draftfalse
\newif\ifjournal\journalfalse
\newcommand{\cnt}[2]{[\![\,#1\,]\!]_{#2}}
\newcommand{\cntb}[2]{\{\!\!\{\,#1\,\}\!\!\}_{#2}}
\renewcommand{\max}[1]{{\bm{\tau}}(#1)}
\newcommand{\maxfun}{{\bm{\tau}}}
\newcommand{\ie}{\textrm{i.e.}}
\newcommand{\length}[1]{L_{#1}}
\newtheorem{fact}{Fact}
\newcommand{\Zimin}[1]{\mathrm{Zimin}(#1)}
\newcommand{\ZiminType}[1]{\mathrm{ZType}(#1)}
\newcommand{\intro}[1]{\emph{#1}}
\newcommand{\N}{\mathbb{N}}
\newcommand{\X}{\mathcal{X}}
\newcommand{\modified}[1]{#1}
\newcommand\sg[1]{\todo[inline,size=\scriptsize,backgroundcolor=Magenta]{#1 - \textbf{Stefan}}}
\newcommand\review[1]{\todo[inline,size=\scriptsize,backgroundcolor=Yellow]{#1 - \textbf{Reviewer}}}
\newcommand\ac[1]{\todo[inline,size=\scriptsize,backgroundcolor=SpringGreen]{#1 - \textbf{Arnaud}}}
\newcommand\ac[1]{}
\newcommand\sg[1]{}
\newcommand\review[1]{}
\begin{document}

\title{On long words avoiding Zimin patterns
}
\subtitle{}


\author{Arnaud Carayol \and
        Stefan G\"oller 
}


\institute{
A. Carayol \at
              Laboratoire d'informatique Gaspard Monge (UMR 8049) \\
Cit{\'e} Descartes \\
5, boulevard Descartes\\
Champs-sur-Marne \\
77454 MARNE-LA-VALLEE Cedex 2 \\
France\\
              \email{carayol@u-pem.fr}           
           \and
           S. G\"oller \at
	   Fachgebiet Theoretische Informatik / Komplexe Systeme\\
	   Fachbereich Elektrotechnik / Informatik\\
Universit\"at Kassel\\
Wilhelmsh\"oher Allee 73\\
34121 Kassel\\
Germany\\
\email{stefan.goeller@uni-kassel.de}           
}

\date{Received: date / Accepted: date}

\maketitle

\begin{abstract}
A pattern is encountered in a word if some infix of the word is the image of the pattern under some non-erasing morphism.
A pattern $p$ is unavoidable if, over every finite alphabet, every sufficiently long word encounters $p$.
A theorem by Zimin and independently by Bean, Ehrenfeucht and McNulty states that a pattern over $n$ distinct variables is 
unavoidable if, and only if, $p$ itself is encountered in the $n$-th Zimin pattern. Given an alphabet size $k$, we study the minimal length $f(n,k)$ such that every word of length $f(n,k)$
encounters the $n$-th Zimin pattern.
It is known that $f$ is upper-bounded by a tower of exponentials.
Our main result states that $f(n,k)$ is lower-bounded by a tower of $n-3$ exponentials, even for $k=2$. 
To the best of our knowledge, this improves upon a previously best-known doubly-exponential lower bound.
As a further result, we prove a doubly-exponential  upper bound for encountering Zimin patterns in the 
abelian sense.
\keywords{
Unavoidable patterns \and combinatorics on words \and lower bounds}
\end{abstract}

\section{Introduction}

\medskip

\newcommand{\Tower}{\mathrm{Tower}}
A pattern is a finite word over some set of pattern variables.
A pattern matches a word if the word can be obtain{ed} by substituting each 
variable appearing in the pattern by a non-empty word. 
The pattern $xx$ matches the word \emph{nana} when $x$ is replaced by the 
word \emph{na}.  
A word encounters a pattern if the pattern matches some infix of the word. 
For example, the word $banana$ encounters the pattern $xx$ 
(as the word \emph{nana} is one of its infixes). 
The pattern $xyx$ is encountered in precisely those words that contain 
two non-consecutive occurrences of the same letter, as e.g., the word $abca$.

A pattern is unavoidable if over every finite alphabet, every sufficiently long word encounters the pattern. Equivalently, by K\H{o}nig's Lemma, a pattern is unavoidable if over every finite alphabet all infinite words encounter the pattern. If it is not the case, the pattern is said to be avoidable.

The pattern $xyx$ is easily seen to be unavoidable since every sufficiently long word over a finite alphabet
must contain two non-consecutive occurrences of the same letter.
On the other hand, the pattern $xx$ is avoidable as Thue \cite{Thue06} gave an infinite word over a ternary alphabet that  does not encounter the pattern $xx$.

A precise characterization of unavoidable patterns was found by Zimin \cite{Zimin84} and
independently by Bean, Ehrenfeucht and McNulty \cite{BEM79}, see also for a more recent proof \cite{Sap95}. This characterization is based on a family $(Z_n)_{n\geq0}$ of unavoidable patterns, called the Zimin patterns, where
\[ 
   Z_1 = x_1  \quad\textrm{and}\quad Z_{n+1} = Z_n x_{n+1} Z_n \quad  \textrm{for all $n \geq 1$.}
\]
A pattern over $n$ distinct pattern variables is unavoidable if, and only if, 
the pattern itself is encountered in the $n$-th Zimin pattern $Z_n$.
Zimin patterns can therefore be viewed as the canonical patterns for unavoidability. 

Due to the canonical status of Zimin patterns it is natural to investigate 
\begin{quotation} 
``what is the smallest word length $f(n,k)$
that guarantees that every word over a $k$-letter alphabet of this length encounters the $n$-th 
pattern $Z_n$?''.
\end{quotation}

Computing the exact value of $f(n,k)$ for $n\geq 1$ and $k \geq 2$, or at least giving upper and lower bounds on its value, has been the topic of several articles in recent years \cite{CR14,Tao14,RS15,CR16}. 

For small values of $n$ and $k$, known results from \cite{RS15,Rora15} are 
summarized in the following table.
$$
\begin{array}{|c|c|c|c|c|c|}
\hline
 n  & 2 & 3 & 4 & 5 & k \\
 \hline  
 1 & 1 & 1 & 1 & 1 & 1 \\
 2 & 5 & 7 & 9 & 11 & 2k+1 \\
 3 & 29 & \leq 319 & \leq 3169 & \leq 37991 & \sqrt{e} 2^k (k+1)!+2k+1 \\
 4 & \in [10483,236489] & & & & \\ 
 \hline
 \end{array}
$$

In general, Cooper and Rorabaugh \cite[Theorem 1.1]{CR14} showed that the value 
of $f(n,k)$ is upper-bounded by a tower of exponentials of height $n-1$. 
To make this more precise let us define the tower function
$\Tower:\N\times\N\rightarrow\N$ inductively as follows:
$\Tower(0,k)=1$ and $\Tower(n+1,k)=k^{\Tower(n,k)}$ 
for all $n,k\in\N$.

\begin{theorem}[Cooper/Rorabaugh \cite{CR14}\label{T Cooper U}]
For all $n \geq 1$ and $k \geq 2$,
$f(n,k)\leq \Tower(n-1,K)$, where $K=2k+1$.
\end{theorem}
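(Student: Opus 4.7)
I will prove the theorem by induction on $n$. The base case $n = 1$ is immediate since $Z_1 = x_1$ is matched by any single letter, giving $f(1, k) \leq 1 = \Tower(0, K)$. For the inductive step the plan is to establish the sharper recurrence $f(n, k) \leq K^{f(n-1, k)}$; combined with the inductive hypothesis $f(n-1, k) \leq \Tower(n-2, K)$ this yields $f(n, k) \leq K^{\Tower(n-2, K)} = \Tower(n-1, K)$ as required.

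To prove the recurrence I will take a word $w$ of length $L = K^{f(n-1, k)}$ over an alphabet of size $k$, set $B = f(n-1, k)$, and partition the positions of $w$ greedily into consecutive blocks of length $B$, each pair of consecutive blocks separated by a single buffer position. This produces roughly $L/(B+1)$ blocks. By the inductive hypothesis each block contains an infix that is a $Z_{n-1}$-image, and I label each block by (say) the lexicographically first such infix. Since a label is a word of length at most $B$ over a $k$-letter alphabet, the number of distinct labels is at most $\sum_{\ell=1}^{B} k^\ell$. A pigeonhole argument then forces two distinct blocks to share a label $u$; the buffer position between the blocks ensures that the two corresponding occurrences of $u$ in $w$ are separated by at least one letter, so $w$ contains an infix of the form $u v u$ with $v$ non-empty. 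Extending the morphism witnessing $u$ as a $Z_{n-1}$-image by sending $x_n$ to $v$ yields the desired match of $Z_n = Z_{n-1} x_n Z_{n-1}$ in $w$.

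The main obstacle is verifying the pigeonhole inequality: the block count $L/(B+1)$ must strictly exceed the label count $\sum_{\ell=1}^{B} k^\ell$. Rewriting, one needs $(K/k)^B = (2 + 1/k)^B$ to dominate a linear function of $B$, an inequality that fails for small $B$ and large $k$ if stated in full generality. The resolution is that $B = f(n-1, k)$ only takes values in $\{1\}\cup\{m \in \N : m \geq 2k+1\}$: the case $B = 1$ (corresponding to $n = 2$) reduces to the classical pigeonhole that $k+1$ pairwise non-adjacent positions in a word of length $2k+1$ must contain a repeated letter, and the case $B \geq 2k+1$ is comfortable because $(2 + 1/k)^B \geq 2^B$ dwarfs the required linear expression. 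This dichotomy is precisely what makes $K = 2k+1$ the right base of the tower, and it propagates the exponential blow-up at every inductive level.
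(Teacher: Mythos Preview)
The paper does not supply its own proof of this statement; it is quoted as a result of Cooper and Rorabaugh and only the bound is recorded. Your inductive argument---chopping a word of length $(2k+1)^{f(n-1,k)}$ into blocks of length $B=f(n-1,k)$ separated by single-letter buffers, labelling each block by a $Z_{n-1}$-matching infix, and pigeonholing on the at most $\sum_{\ell\le B}k^{\ell}$ possible labels---is correct, and your treatment of the one delicate point (the pigeonhole count) is exactly what is needed: the required inequality $(2+1/k)^{B}>2(B+1)$ holds because $B=f(n-1,k)$ is either $1$ (the tight case $n=2$, yielding $f(2,k)\le 2k+1$) or at least $f(2,k)=2k+1\ge 5$, where $2^{B}$ already dominates the linear term. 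This is essentially the original argument of Cooper and Rorabaugh.
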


In stark contrast with this upper bound, Cooper and Rorabaugh showed that $f(n,k)$ is lower-bounded doubly-exponentially in $n$ for every fixed $k\geq 2$. 
To our knowledge, this is the best known lower bound for $f$.

\begin{theorem}[Cooper/Rorabaugh \cite{CR14}\label{T Cooper L}]
$f(n,k)\geq k^{2^{n-1}(1+o(1))}.$
\end{theorem}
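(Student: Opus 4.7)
The plan is to use the probabilistic method: draw a uniformly random word $w \in [k]^L$, estimate the expected number $N(w)$ of tuples $(p, \ell_1, \dots, \ell_n)$ witnessing an encountering of $Z_n$ in $w$, and choose $L$ large enough that $\mathbb{E}[N] < 1$. Any $w$ with $N(w) = 0$ then has length $L$ and avoids $Z_n$, so $f(n,k) > L$.

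The crucial observation is that an encountering of $Z_n$ in $w$ is fully described by the starting position $p$ of the embedded infix together with the tuple $(\ell_1, \dots, \ell_n)$ of variable image lengths $\ell_i = |h(x_i)|$; once these are fixed, the morphism $h$ is read off from $w$. Splitting the image $h(Z_n)$ block-by-block, I designate for each variable $x_i$ its first occurrence as the \emph{free} block, so that every other occurrence yields $\ell_i$ letter-equalities against that free block. Since $x_i$ appears $2^{n-i}$ times in $Z_n$, the total number of constrained letters is
$$ C(\ell_1, \dots, \ell_n) \;=\; \sum_{i=1}^n (2^{n-i} - 1)\,\ell_i, $$
and an elementary independence argument shows that the corresponding equality event has probability exactly $k^{-C(\ell_1, \dots, \ell_n)}$. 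A union bound over $p$ (at most $L$ choices) and over admissible length tuples then gives
$$ \mathbb{E}[N] \;\leq\; L \sum_{\substack{\ell_1, \dots, \ell_n \geq 1 \\ \sum_i 2^{n-i}\ell_i \leq L}} k^{-C(\ell_1, \dots, \ell_n)}. $$
Because the coefficient $2^{n-i}-1$ is at least $1$ for $i<n$, the geometric sum over each such $\ell_i$ is bounded by a constant independent of $L$; the variable $x_n$ contributes no constraint, but its length is capped by $\ell_n \leq L$, producing an extra factor $L$. Collecting the estimates yields a bound of order $L^2 \cdot k^{-(2^n-n-1)}$, which falls below $1$ for $L$ up to $k^{(2^n-n-1)/2} = k^{2^{n-1}(1+o(1))}$, giving the theorem.

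The main obstacle I anticipate is the careful accounting that the stated probability is exactly $k^{-C}$ rather than a mere upper bound: one must verify that the free letter positions across all variables are pairwise disjoint from every constrained position inside $h(Z_n)$, and that conditioning on the free letters decouples the family of equality events across different $(i,j)$ pairs. Beyond this, the truncation $\sum_i 2^{n-i}\ell_i \leq L$ must be checked not to inflate the geometric sums (which is immediate because dropping the truncation only increases the series). The resulting slack, namely the factor $L$ from $x_n$ together with the $n$ constant geometric factors, is of lower order than $2^n$ and is absorbed cleanly into the $(1+o(1))$ correction of the final exponent.
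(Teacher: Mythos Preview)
Your proposal is correct and follows precisely the approach the paper attributes to Cooper and Rorabaugh: the first moment method, counting expected occurrences of $Z_n$ in a uniformly random word. The paper does not give its own proof of this cited result (it only states that the bound ``is obtained by estimating the expected number of occurrences of $Z_n$ in long words over a $k$-letter alphabet using the first moment method''), but your computation of the constraint count $C(\ell_1,\dots,\ell_n)=\sum_i(2^{n-i}-1)\ell_i$ and the resulting probability $k^{-C}$ agrees with the calculation the paper carries out in its conclusion, where it records $\Pr(A_n)=k^{-2^n+n+1}$ for the minimal-length case $\ell_i=1$.
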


This lower bound is obtained  by estimating the expected number of occurrences of $Z_n$ in long words over a $k$-letter alphabet using the first moment method.
%

\bigskip

\noindent
{\bf Our contributions. }
Our main contribution is to prove a lower bound for $f(n,k)$ that is non-elementary in $n$ even for $k=2$. We use Stockmeyer's yardstick construction \cite{Sto74} to construct for each $n \geq 1$, a family
of words of length at least $\Tower(n-1,2)$ (that we call higher-order counters here). 
We then show that a counter of order $n$ does not encounter $Z_n$ (for $n\geq 3$).
As these words are over an alphabet of size $2n-1$, this immediately establishes 
that
\[
  f(n,2n-1) {>} \Tower(n-1,2).
\] 
Stockmeyer's yardstick construction is a well-known technique to prove non-elementary lower bounds in computer science,
for instance it is used to show that the first-order theory of binary words with order 
\modified{is} non-elementary, see for instance \cite{Rein01} for a proof.

By using a carefully chosen encoding we are able to prove a lower bound for $f$ over a binary alphabet. Namely for all $n \geq 4$, it holds
\[
  f(n,2) {>} \Tower(n-3,2).
\] 

As a spin-off result, we also consider the abelian setting. Matching a pattern in the abelian sense is a weaker condition, where one only requires that {when an infix matches a pattern variable it} must only have the same number of occurrences of each letter
(instead of being the same words). This gives rise to the notion of 
avoidable in the abelian sense and unavoidable pattern in the abelian sense.
We note that every pattern that is unavoidable is in particular unavoidable in 
the abelian sense. 
However, the converse does not hold in general as witnessed by the pattern 
$xyzxyxuxyxzyx$, as shown in \cite{CuLi01}.
Even though Zimin patterns lose their canonical status in the abelian setting, 
the function $g(n,k)$, which is an abelian analog of the function $f(n,k)$,
has been studied \cite{Tao14}. 
For this function, Tao \cite{Tao14} establishes a lower bound that turns out to 
be  doubly-exponential from the estimations in \cite{JugePC}. 
The upper bound is inherited from the non-abelian setting and is 
hence non-elementary. 
We improve this upper bound to doubly-exponential. 
We also provide a simple proof using the first moment method that $g$ admits a
doubly-exponential lower bound which does not require the elaborate 
estimations of \cite{JugePC}.

{
{\bf Comparison with \cite{Colon17}.} While finalizing the present article, we became aware of the preprint \cite{Colon17} submitted in April 2017 in which Condon, Fox and Sudakov independently obtained non-elementary lower bounds for the function $f$. To keep the presentation clear, we present their contributions in this dedicated subsection.

Firstly the authors improve the upper-bound of $f$ of Theorem~\ref{T Cooper U} by showing \cite[Theorem~2.1]{Colon17} that for all $n \geq 3$, $k \geq 35$, 
\[
 f(n,k) \leq \Tower(n-1,k).
\]
They determine the value of $f(3,k)$ up to a multiplicative constant
and show \cite[Theorem~1.3]{Colon17} that:
\[
 f(3,k) = \Theta( 2^k k!).
\]

For the general case, they show \cite[Theorem 1.1]{Colon17} for any fixed $n \geq 3$,
\[
 f(n,k) \geq k
\underbrace{
  {{{^{k\vphantom{h}}}^{k\vphantom{h}}}^{\cdots\vphantom{h}}}^{k-o(k)\vphantom{h}}
}_{\text{$n-1$ times}}\]
They provide two proofs of this theorem. The first proof is based on  the probabilistic method and positively answer a question we ask in conclusion of our conference paper \cite{Carayol17b}. The second proof uses a counting argument. For the case of the binary of alphabet, they show \cite[Theorem 1.2]{Colon17} that:
\[
 f(n,2) \geq \Tower(n-4,2).
\]

In this last case, our bound is slightly better and has the extra advantage to provide a concrete \modified{family of words}
witnessing the bound.
}

\medskip

\noindent
{\bf Applications to the equivalence problem of deterministic pushdown automata.}
The equivalence problem for deterministic pushdown automata (dpda) is a famous problem
in theoretical computer science. 
Its decidability has been established by S\'{e}nizergues in 1997
and Stirling proved in 2001 the first complexity-theoretic upper bound, namely
a tower of exponentials of elementary height \cite{Stir02}
(in $\mathbf{F}_3$ in terms of Schmitz' classification \cite{Schmi13}), see
also \cite{Jan14} for a more recent presentation.

In \cite{Seni03} S\'{e}nizergues generalizes Stirling's approach by
a the so-called ``subwords lemma'' allowing him
both to prove a $\mathsf{coNP}$ upper bound for the equivalence
problem of finite-turn dpda and to explicitly link the complexity
of dpda equivalence with the growth of the function $f$: 
he shows that in case $f$ is elementary, 
then the complexity of dpda equivalence is elementary.

Inspired by this insight, a closer look reveals that the above-mentioned 
function $f$ has the same importance
in all complexity upper bound proofs \cite{Stir02,Seni03,Jan14}
for dpda equivalence.
However, due to Theorem \ref{T Main2} one cannot hope to improve the computational 
complexity of
dpda equivalence by proving an elementary upper bound on $f$ 
since $f(n,k)$ is shown to grow non-elementarily even for $k=2$
(Theorem \ref{T Main2}).

\medskip

\noindent
{\bf Organization of the paper.}
We introduce necessary {notations} in Section \ref{S Preliminaries}.
We show that $f(n,2n-1)\geq\Tower(2,n-1)$ in Section \ref{sec:Zimin-non-binary}.
We lift this result to unavoidability over a binary alphabet in
Section \ref{sec:zimin-binary}, where we show that 
$f(n,2)\geq\Tower(n-3,2)$ for all $n\geq 4$.
Our doubly-exponential bounds on abelian avoidability are presented
in Section \ref{S Abelian}.
We conclude in Section \ref{S Conclusion}.

\section{Preliminaries}\label{S Preliminaries}



For every two integers $i,j$ we define $[i,j]=\{i,i+1,\ldots,j\}$ and
 $[j]=\{1,\ldots,{j}\}$.
By $\N$ we denote the non-negative integers and by $\N^+$ the positive integers.

If $A$ is a finite set of symbols, we denote by $A^*$ the set of all words 
over $A$ and by $A^+$ the set of all non-empty words over $A$. 
We write $\varepsilon$ for 
the empty word.  
For a word $u \in A^*$, we denote by $|u|$ its length. 
For two words $u$ and $v$, we denote by $u \cdot v$ (or simply $uv$) 
their concatenation.
A word $v$ is a \intro{prefix} of a word $u$, denoted by $v \sqsubseteq u$, if there exists a word $z$ such that $u=vz$. If $z$ is non-empty, we say that $v$ is a \intro{strict prefix}\footnote{Our definition of strict prefix is slightly non-standard as $\varepsilon$ is a strict prefix of any non-empty word. 
} of $u$. A word $v$ is a suffix of a word $u$ if there exist a word $z$ such that $u=zv$. If $z$ is non-empty, we say that $v$ is a \intro{strict suffix} of $u$. 

A word $v$ is an infix of a word $u$ if there exists $z_1$ and $z_2$ such that $u=z_1 v z_2$.
If both $z_1$ and $z_2$ are non-empty, $v$ is a \intro{strict infix\footnote{Again, remark that our definition is slightly non-standard as strict prefixes or strict suffixes are in general not  strict infixes.}} of $u$. If $v$ is an infix $u$ and $u$ can be written as $z_1{v}z_2$, the integer $|z_1|$ is called an \intro{occurrence} of $v$ in $u$. For $a \in A$, we denote by $|u|_a$ the number of occurrences of the symbol $a$ in $u$.

Given two non-empty sets $A$ and $B${,} a \emph{morphism} is a function $\psi:A^*\rightarrow B^*$ 
that satisfies {$\psi(uv)=\psi(u)\psi(v)$ for all $u,v\in A^*$}.
Thus, {every} morphism can simply {be given} by a function from $A$ to $B^*$.
A morphism $\psi$ is said to be \emph{non-erasing} if $\psi(a)\not=\varepsilon$ for all $a\in A$
and $\psi$ is \emph{alphabetic} if $\psi(a)\in B$ for all $a\in A$.

Let us fix a countable set $\X=\{x_1,x_2,\ldots\}$ of {\em pattern variables}.
A {\em pattern} is a finite word over $\X$. 
Let $\rho=\rho_1\cdots \rho_n$ be a pattern of length $n$.
A \modified{finite} word $w$ {\em matches} $\rho$ if $w=\psi(\rho)$ for some non-erasing morphism $\psi$.
A finite or infinite word $w$ {\em encounters} $\rho$ if some infix of $w$ matches $\rho$.

A pattern $\rho$ is said to be {\em unavoidable} if for all $k\geq 1$
all but finitely many finite words (equivalently every infinite word, by K\H{o}nig's Lemma) 
over the alphabet $[k]$ encounter $\rho$. 
Otherwise we say $\rho$ is \emph{avoidable}.

Unavoidable patterns are characterized by the so called Zimin patterns.

For all $n \geq 1$, the $n$-th {\em Zimin pattern $Z_n$} is given by:
\[
\begin{array}{lclcl}
Z_0 & = & \varepsilon & &  \\
Z_1 & = & x_1 & & \\
Z_{n+1} & = & Z_n x_{n+1} Z_n & & \textrm{for $n \geq 0$.}\\	
\end{array}
\]

For instance, we have $Z_1=x_1$, $Z_2=x_1x_2x_1$ and $Z_3=x_1x_2x_1 x_3 x_1 x_2 x_1$.

The following statement gives a decidable characterization of unavoidable patterns.
\begin{theorem}[Bean/Ehrenfeucht/McNulty \cite{BEM79}, Zimin \cite{Zimin84}]\label{T Zimin}
A pattern $\rho$ containing $n$ different variables is unavoidable if, and only if, 
$Z_n$ encounters $\rho$.
\end{theorem}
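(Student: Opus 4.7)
The theorem is an equivalence, which I would establish by treating its two implications separately.

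For the easy direction, suppose $Z_n$ encounters $\rho$; the plan is to deduce that $\rho$ is unavoidable. First I would prove, as a key lemma, that each Zimin pattern $Z_n$ is itself unavoidable, by induction on $n$. The base case $n=1$ is immediate, since every single letter matches $Z_1=x_1$. For the inductive step, I would fix an alphabet $[k]$ and let $L$ be the (finite, by induction hypothesis) minimal length such that every word over $[k]$ of length $L$ encounters $Z_n$. Given any word $w$ over $[k]$ of length $M$ to be chosen, for each position $i\geq L$ the induction hypothesis yields an infix of $w[i-L+1,i]$ matching $Z_n$ via some non-erasing morphism $\phi_i$ whose images have length at most $L$. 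As the number of such morphisms is bounded by $(2k^L)^n$, picking positions spaced more than $L$ apart and applying pigeonhole produces two indices $i<j$ with $\phi_i=\phi_j=\phi$ and disjoint matches separated by a non-empty factor $u$. Extending $\phi$ via $\phi(x_{n+1}):=u$ then gives a match of $Z_{n+1}$ inside $w$. Granted unavoidability of $Z_n$, the easy direction follows from the transitivity of the encountering relation: if an infix of $w$ equals $\phi(Z_n)$ and an infix of $Z_n$ equals $\psi(\rho)$ for non-erasing $\phi,\psi$, then $\phi(\psi(\rho))=(\phi\circ\psi)(\rho)$ is an infix of $w$, witnessing that $w$ encounters $\rho$.

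For the harder direction, I would prove the contrapositive: assuming $Z_n$ does not encounter $\rho$, I would build an infinite word over a finite alphabet that does not encounter $\rho$. The strategy is to exploit the recursive shape $Z_{n+1}=Z_n\,x_{n+1}\,Z_n$ through a reduction procedure on patterns. Concretely, I would define a one-step reduction $\sigma\to\sigma'$ that identifies a \emph{Zimin-outermost} variable $y$ of $\sigma$ (informally, one that could play the role of $x_{n+1}$ at the top level of $Z_n$) and removes its contribution, producing a pattern $\sigma'$ that is a candidate to be encountered in $Z_{n-1}$. I would then prove by induction on $n$ that $Z_n$ encounters $\rho$ if and only if $\rho$ reduces in at most $n$ steps to the empty pattern. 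Under the hypothesis that $Z_n$ does not encounter $\rho$, the reduction therefore gets stuck at a nonempty pattern $\rho'$ at some level $m\leq n$. From such a stuck configuration I would build an infinite word over a suitably chosen finite alphabet that avoids $\rho$, by iterating the layered structure of the stuck reduction, in a spirit similar to the hierarchical words constructed elsewhere in this paper.

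The main obstacle is the harder direction. The correct formulation of the reduction is delicate: naive variable deletion fails, since deleting the unique occurrence of $x_n$ in $Z_n$ yields $Z_{n-1}Z_{n-1}$, which no longer has any singleton variable. The reduction therefore has to be phrased in terms of morphism preimages rather than plain erasure, and verifying its equivalence with ``$Z_n$ encounters $\rho$'' requires care. Converting a stuck reduction into an explicit infinite avoiding word, while bounding the alphabet and ruling out every non-erasing morphic image of $\rho$, is the technically most involved step. The easy direction and the pigeonhole argument for unavoidability of $Z_n$ are by comparison routine, provided one correctly counts candidate morphisms and correctly spaces the sampled positions.
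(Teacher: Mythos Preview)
The paper does not prove this theorem at all: it is stated as a background result, attributed to Bean, Ehrenfeucht and McNulty and to Zimin, and then used without proof. There is therefore nothing to compare your proposal against in this paper.

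On the substance of your sketch: the easy direction is fine and is the standard argument. For the hard direction, your description is too vague to count as a proof. The classical proofs (both the Zimin approach and the Bean--Ehrenfeucht--McNulty approach) do proceed via a reduction on patterns, but the relevant notion is that of a \emph{free} variable (one whose set of neighbouring variables on the left is disjoint from its set of neighbouring variables on the right), not a ``Zimin-outermost'' variable in any informal sense; and the reduction step is not deletion of a variable but rather a quotient that merges each maximal block adjacent to the free variable into a single new variable. Your remark that naive erasure fails is correct, but you have not actually specified the replacement operation, nor the construction of the avoiding word from a pattern with no free variable, which is where essentially all the work lies. If you want to reconstruct the proof, the cleanest modern references are Sapir's short proof and the expositions in Lothaire's \emph{Algebraic Combinatorics on Words}; the paper itself also points to \cite{Sap95}.
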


For instance, the pattern $x_1x_2x_1x_2$ is avoidable because it 
{is not encountered in $Z_2$ (not even in $Z_n$ for any $n\in\N$).}

Theorem \ref{T Zimin} justifies the study of the following Ramsey-like function.

\begin{definition}
Let $n,k\geq 1$. We define 
$$f(n,k)=\min\{\ell\geq 1\mid \forall w\in[k]^\ell: w\text{ encounters }Z_n\}.$$
\end{definition}

As we mainly work with Zimin patterns, we introduce the notions of Zimin type 
(\ie\ the maximal Zimin pattern that matches a word) and Zimin index 
(\ie\ the maximal Zimin pattern that a word encounters) and their basic properties.

\begin{definition}
	The Zimin type $\ZiminType{w}$ of a word $w$ is the largest $n$
	such that $w=\varphi(Z_n)$ for some non-erasing morphism $\varphi$.
\end{definition}

For instance, we have $\ZiminType{aaab}=1$, $\ZiminType{aba}=2$ and $\ZiminType{a^7ba^7}=4$. Remark that the Zimin type of any non-empty word is 
greater or equal to $1$ and the Zimin type of the empty word is $0$.

 Following the definition of Zimin patterns, the Zimin type of a word can be inductively characterized as follows:
\begin{fact}
\label{fact:ZiminType-inductive}
For any non-empty word $w$, 
\[
\ZiminType{w}=1+\mathrm{max}\{ \ZiminType{\alpha} \mid w=\alpha \beta \alpha\;\textrm{for non-empty $\alpha$ and $\beta$}\},
\]
with the convention that the maximum of the empty set is $0$.	
\end{fact}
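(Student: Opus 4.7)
The plan is to prove both inequalities separately, each by directly exploiting the recursive definition $Z_{n+1} = Z_n \, x_{n+1} \, Z_n$ to translate between morphisms witnessing a Zimin type and factorisations of the form $\alpha\beta\alpha$. Let $S(w) = \{ \ZiminType{\alpha} \mid w = \alpha\beta\alpha,\ \alpha,\beta \in A^+\}$ so that the claim reads $\ZiminType{w} = 1 + \max S(w)$.

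For the lower bound $\ZiminType{w} \geq 1 + \max S(w)$, I would first dispatch the case $S(w) = \emptyset$: since $w$ is non-empty, the morphism $\varphi(x_1) = w$ witnesses $\ZiminType{w} \geq 1 = 1 + \max\emptyset$. Otherwise, pick a decomposition $w = \alpha\beta\alpha$ with $\alpha,\beta \in A^+$ maximizing $n := \ZiminType{\alpha}$, and let $\varphi$ be a non-erasing morphism with $\alpha = \varphi(Z_n)$. Extend $\varphi$ to the variable $x_{n+1}$ by setting $\varphi(x_{n+1}) = \beta$; this extension is still non-erasing and satisfies $\varphi(Z_{n+1}) = \varphi(Z_n)\varphi(x_{n+1})\varphi(Z_n) = \alpha\beta\alpha = w$, giving $\ZiminType{w} \geq n+1$.

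For the upper bound $\ZiminType{w} \leq 1 + \max S(w)$, let $n = \ZiminType{w}$. Since $w$ is non-empty, $n \geq 1$. If $n \geq 2$, fix a non-erasing morphism $\varphi$ with $w = \varphi(Z_n)$ and use the decomposition $Z_n = Z_{n-1} x_n Z_{n-1}$ to set $\alpha := \varphi(Z_{n-1})$ and $\beta := \varphi(x_n)$, both non-empty because $\varphi$ is non-erasing. Then $w = \alpha\beta\alpha$ and $\ZiminType{\alpha} \geq n-1$, hence $\max S(w) \geq n - 1$ and the inequality follows. If $n = 1$, the first part of the proof already implies $S(w) = \emptyset$ (otherwise $\ZiminType{w}$ would be at least $2$), so $1 + \max S(w) = 1 = n$.

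The argument is entirely bookkeeping once one views $Z_{n+1}$ as $Z_n x_{n+1} Z_n$, so no step is genuinely hard; the only subtlety is to keep the edge case $\ZiminType{w} = 1$ consistent with the empty-set convention, which is handled cleanly by observing that any nontrivial factorisation $\alpha\beta\alpha$ of $w$ forces $\ZiminType{w} \geq 2$.
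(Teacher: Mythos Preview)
Your argument is correct and is precisely the natural unfolding of the recursion $Z_{n+1}=Z_n x_{n+1} Z_n$ into a statement about factorisations. The paper in fact states this result as a Fact without proof, treating it as an immediate consequence of the definition; your write-up fills in exactly the details one would expect, with no gaps.
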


\begin{definition}
	The Zimin index $\Zimin{w}$ of a non-empty word $w$ the maximum Zimin type of an infix of $w$.
\end{definition}

For instance, we have $\Zimin{aaab}=2$ {and} $\Zimin{bbaba}=2$. 
As a further example note that $\Zimin{baaabaaa}=3$ although $\ZiminType{baaabaaa}=1$.

\begin{lemma}
\label{lemma:zimin-monotone}
For any  word $w$, we have the following properties:
\begin{itemize}
	\item $\ZiminType{w} \leq \Zimin{w}$,
	\item for any infix $w'$ of $w$, $\Zimin{w'} \leq \Zimin{w}$,	
	\item $\Zimin{w} \leq \lfloor \log_2(|w|+1) \rfloor$.
\end{itemize}
\end{lemma}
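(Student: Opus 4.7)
The plan is to prove each of the three properties of the lemma by essentially unpacking the definitions of $\ZiminType{\cdot}$ and $\Zimin{\cdot}$; no deep combinatorial argument seems needed, but the third item requires knowing the length of the Zimin patterns.

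For the first item, I would simply observe that every word is an infix of itself, so $w$ appears in the set over which the maximum defining $\Zimin{w}$ is taken; hence $\Zimin{w} \geq \ZiminType{w}$.

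For the second item, if $w'$ is an infix of $w$, then every infix of $w'$ is also an infix of $w$. Therefore the set of Zimin types of infixes of $w'$ is a subset of the set of Zimin types of infixes of $w$, and taking the maximum preserves the inequality: $\Zimin{w'} \leq \Zimin{w}$.

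For the third item, the key observation is that $|Z_n| = 2^n - 1$ for all $n \geq 1$, which follows from $|Z_1|=1$ and the recurrence $|Z_{n+1}| = 2|Z_n|+1$. Hence if $\ZiminType{u} = n$, then $u = \varphi(Z_n)$ for some non-erasing morphism $\varphi$, so $|u| \geq |Z_n| = 2^n - 1$. Now let $w'$ be an infix of $w$ with $\ZiminType{w'} = \Zimin{w} =: n$. We obtain $|w| \geq |w'| \geq 2^n - 1$, which rearranges to $n \leq \log_2(|w|+1)$, and since $n$ is an integer, $n \leq \lfloor \log_2(|w|+1) \rfloor$, as required. The main (small) obstacle will just be making sure the edge cases ($w = \varepsilon$ and the convention that the maximum of the empty set is $0$) are handled cleanly, e.g., by treating $w = \varepsilon$ separately where all three inequalities hold trivially.
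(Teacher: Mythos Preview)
Your proposal is correct and follows essentially the same approach as the paper: the first two items are unpacked directly from the definitions (the paper simply says they ``directly follow from the definition''), and for the third item both you and the paper use $|Z_n|=2^n-1$ together with the fact that a non-erasing morphism cannot shorten a word to conclude $2^{\Zimin{w}}-1\leq |w|$. Your write-up is in fact more explicit than the paper's, including the edge-case discussion for $w=\varepsilon$, but the underlying argument is identical.
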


\begin{proof}
The first two points directly follow from the definition. For the last point, remark that 
for a word $w$ to encounter the $n$-th Zimin pattern $Z_n$, it must be of
length a least $|Z_n|$. 
As $Z_n$ has length $2^n-1$, we have
\[ 2^{\Zimin{w}}-1 \leq |w|,\]
which implies the announced bound. 
\end{proof}


\section{The Zimin index of higher-order counters}
\label{sec:Zimin-non-binary}

In this section we show that there is a family of words, that we refer to as ``higher-order counters'',  whose length is non-elementary in $n$ and whose Zimin index
is $n-1$, allowing us to show that $f(2n-1,n){> \Tower(n-1,2)}$.
In Section \ref{S Stockmeyer} we introduce higher-order counters and in Section \ref{S Small} we show that their Zimin index is precisely $n-1$ including the mentioned
lower bound on $f$.

\subsection{Higher-order counters {\`a} la Stockmeyer}\label{S Stockmeyer}

In this section we introduce counters that encode values range from $0$ to a tower of exponentials.
To the best of our knowledge this construction {was} introduced by Stockmeyer to show non-elementary complexity lower bounds and is often referred to as the 
``yardstick construction'' \cite{Sto74}.
We refer to such counters as ``higher-order counters'' in the following.

We define the {(unary)} {\em tower function} $\maxfun:\N\rightarrow\N$ as 
\[
\begin{array}{llll}
	\max{0} &=& 1 &\quad\text{ and}\\
	\max{n+1} & = &2^{\max{n}} &\quad \text{ for all $n \geq 0$.}
\end{array}
\]
{Equivalently, $\max{n}=\Tower(n,2)$ for all $n\in\N$.}
For all $n \geq 1$, we define an alphabet $\Sigma_n$
by taking $\Sigma_1 = \{0_1,1_1\}$ and for all $n>1$, $\Sigma_{n} = \Sigma_{n-1} \cup \{0_n,1_n\}$. We say the symbols $0_n$ and $1_n$ have {\em order $n$}.
We define $\Sigma = \cup_{n \geq 1} \Sigma_n$ to be {the} set of all these symbols.

For all $n \geq 1$ and for all $i \in [0,\max{n}-1]$,
we define a word over $\Sigma_n$ called \emph{the $i$-th counter of order $n$} and denoted by $\cnt{i}{n}$.
The definition proceeds by induction on $n$. For $n=1$,
there are only two counters $\cnt{0}{1}$ and $\cnt{1}{1}$ (recall that $\max{1}=2$). We define
\[
\cnt{0}{1} = 0_1 \;\text{and}\; \cnt{1}{1} = 1_1.
\]
For $n\geq 1$ and $i \in [0,\max{n+1}-1]$ we define
\[
 \cnt{i}{n+1}= \cnt{0}{n} b_0 \cnt{1}{n} b_1 \cdots \cnt{\max{n}-1}{n} b_{\max{n}-1},
\]
where $b_0 b_1 \cdots b_{\max{n}-2}b_{\max{n}-1}$ is the binary decomposition of $i$ over the alphabet $\{0_{n+1},1_{n+1}\}$ with $b_0$ the least significant bit 
(\ie\ $i = \sum_{j=0}^{\max{n}-1} \overline{b_j}\cdot 2^j$ where $\overline{b_j}=0$ if $b_j=0_{n+1}$ and $\overline{b_j}=1$ if $b_j=1_{n+1}$). 

\noindent
For instance, there are $\tau(2)=4$ counters of order $2$, namely
\[
\begin{array}{lclclcl}
\cnt{0}{2} & = & 0_1 \mathbf{0_2} 1_1 \mathbf{0_2}, & & \cnt{1}{2} &=& 0_1 \mathbf{1_2} 1_1 \mathbf{0_2}, \\
\cnt{2}{2} & = & 0_1 \mathbf{0_2} 1_1 \mathbf{1_2}, & &
\cnt{3}{2} & =&  0_1 \mathbf{1_2} 1_1 \mathbf{1_2}. \\
\end{array}
\]
\noindent
For $\cnt{11}{3}$, we have $11= 1\cdot 2^0 + 1\cdot 2^1 + 0\cdot 2^2 + 1 \cdot 2^3$ and hence 
\[
\cnt{11}{3}= \underbrace{0_1 0_2 1_1 0_2}_{\cnt{0}{2}} \mathbf{1_3} \underbrace{0_1 1_2 1_1 0_2}_{\cnt{1}{2}} \mathbf{1_3} \underbrace{0_1 0_2 1_1 1_2}_{\cnt{2}{2}} \mathbf{0_3} \underbrace{0_1 1_2 1_1 1_2}_{\cnt{3}{2}} \mathbf{1_3}.
\]
\noindent
The following lemma {is} easily be proven by induction on $n$.

\begin{lemma}
\label{lemma:basics-counters}
Let $n\geq 1$.
\begin{enumerate}
\item There are $\max{n}$ counters of order $n$.
\item For all $i \neq j \in [0,\max{n}-1]$ we have $\cnt{i}{n} \neq \cnt{j}{n}$. 
\item 
If $n>1$, then for all $i \in [0,\max{n}-1]$ and 
$j \in [0,\max{n-1}-1]$ the counter $\cnt{j}{n-1}$ has exactly one occurence in $\cnt{i}{n}$.
\end{enumerate}
\end{lemma}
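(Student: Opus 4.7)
The plan is induction on $n$, proving all three statements simultaneously. The base case $n=1$ is immediate: the counters $\cnt{0}{1}=0_1$ and $\cnt{1}{1}=1_1$ are the only ones, they are distinct, $\max{1}=2$, and Part 3 is vacuous. For the inductive step I will first record two auxiliary observations that drop directly out of the recursive definition: every counter $\cnt{k}{n}$ is a word over $\Sigma_n$ (so it contains no symbol of order strictly greater than $n$), and all counters of order $n$ share a common length $\length{n}$, given by $\length{1}=1$ and $\length{n+1}=\max{n}\cdot(\length{n}+1)$. Both properties are themselves immediate by induction on $n$.

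Part 1 then just reads off the definition, since $\cnt{i}{n+1}$ is defined for each $i\in[0,\max{n+1}-1]$. For Part 2, if $i\neq j$ in $[0,\max{n+1}-1]$, their binary decompositions over $\{0_{n+1},1_{n+1}\}$ differ in some bit position $k$, and the common-length property places the $k$-th separator at the same absolute position in both $\cnt{i}{n+1}$ and $\cnt{j}{n+1}$, so the two words differ at that position.

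The main content is Part 3. Existence is clear because $\cnt{j}{n}$ appears as the $(j+1)$-st block of $\cnt{i}{n+1}$ by construction. For uniqueness, I would suppose $\cnt{j}{n}$ has another occurrence at some position $p$. Since $\cnt{j}{n}\in\Sigma_n^*$ while every separator bit $b_k$ has order $n+1$, this occurrence can contain no separator, so it lies entirely inside one of the blocks $\cnt{k}{n}$ appearing in the decomposition of $\cnt{i}{n+1}$. Because $\cnt{j}{n}$ and $\cnt{k}{n}$ both have length $\length{n}$, the occurrence must coincide with that whole block, i.e., $\cnt{j}{n}=\cnt{k}{n}$; the inductive hypothesis for Part 2 then forces $j=k$, so the supposed extra occurrence is the canonical one.

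The only mildly delicate step will be this block-alignment argument in Part 3; it depends on the two auxiliary observations (separators are strictly higher-order than any letter appearing in $\cnt{j}{n}$, and all order-$n$ counters share the same length $\length{n}$), so I will make sure both are established cleanly before starting the uniqueness argument. Everything else is bookkeeping.
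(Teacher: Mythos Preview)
Your proposal is correct and follows exactly the approach the paper indicates: the paper does not give a detailed proof but simply states that the lemma ``is easily proven by induction on $n$,'' and your induction---with the two auxiliary observations that all order-$n$ counters share length $\length{n}$ and contain no symbol of order greater than $n$---is precisely the intended argument. The block-alignment step in Part~3 is carried out correctly.
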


{The following lemma expresses that the order of a symbol in a counter of order $n$ only depends on the distance of this symbol to an order $n$ symbol.} It  
 {is} proven by induction on $n$ by making use of the previous lemma.

\begin{lemma}
\label{lem:order-from-position}
Let $n \geq 2$, $i \in [0,\max{n}-1]$ and $p,p'$ and $\ell$ such that ${p,p', }p + \ell$ and $p' + \ell$ belong to  $[0,|\cnt{i}{n}|-1]$.
If the symbols occurring at $p+\ell$ and $p'+\ell$ in 
$\cnt{i}{n}$ are of order $n$ then the symbols occurring at $p$ and $p'$ {in $\cnt{i}{n}$} have the same order.
\end{lemma}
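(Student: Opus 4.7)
The plan is to prove a slightly stronger statement by induction on $n$, namely that the \emph{order} of the symbol at a given position $q$ in $\cnt{i}{n}$ depends only on $q$ (and $n$), not on the particular value $i$. The lemma then follows by a short arithmetic argument using the fact that the positions carrying an order-$n$ symbol form an arithmetic progression.

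First I would set $\length{n}=|\cnt{0}{n}|$ and observe, using point~(3) of Lemma~\ref{lemma:basics-counters}, that all counters of order $n$ have this same length $\length{n}$, with the recurrence $\length{n}=\max{n-1}\,(\length{n-1}+1)$ for $n\geq 2$. By the definition of $\cnt{i}{n}$, every position $q\in[0,\length{n}-1]$ factors \emph{uniquely} as $q=j(\length{n-1}+1)+o$ with $j\in[0,\max{n-1}-1]$ and $o\in[0,\length{n-1}]$. If $o=\length{n-1}$ then the symbol at position $q$ is the separator $b_j\in\{0_n,1_n\}$ of order $n$; otherwise it is the symbol at position $o$ inside the block $\cnt{j}{n-1}$.

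The induction on $n$ then goes as follows. For $n=1$ there is a single position and the symbol has order $1$ regardless of $i$. For the step, fix $q=j(\length{n-1}+1)+o$. If $o=\length{n-1}$, the symbol is of order $n$ for every choice of $i$. If $o<\length{n-1}$, the induction hypothesis applied to $\cnt{j}{n-1}$ tells us that the order of its symbol at position $o$ does not depend on which order-$(n-1)$ counter is placed in block $j$, and in particular does not depend on $i$. So the order of the symbol at position $q$ in $\cnt{i}{n}$ depends only on $q$.

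Finally I would conclude the lemma. The characterization above shows that the order-$n$ positions in $\cnt{i}{n}$ are exactly those of the form $k(\length{n-1}+1)+\length{n-1}$ for $k\in[0,\max{n-1}-1]$, an arithmetic progression of common difference $\length{n-1}+1$. Therefore, if both $p+\ell$ and $p'+\ell$ carry an order-$n$ symbol, then $p-p'$ is a multiple of $\length{n-1}+1$, which means that $p$ and $p'$ share the same offset $o$ in their respective blocks; by the stronger claim just proved, their symbols have the same order. I do not anticipate any real obstacle: the only point requiring care is writing down the recurrence for $\length{n}$ and the positions of the order-$n$ separators so as to be sure the unique-factorization step is clean.
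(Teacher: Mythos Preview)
Your proposal is correct and follows essentially the same route the paper indicates: an induction on $n$ exploiting the block decomposition of $\cnt{i}{n}$ into segments $\cnt{j}{n-1}\,b_j$ of common length $\length{n-1}+1$. One small point of exposition: the auxiliary claim you state (``the order at position $q$ depends only on $q$, not on $i$'') is not literally what you use in the last step, since there you compare two \emph{different} positions $p\neq p'$ in the \emph{same} counter. What you actually need---and what your inductive argument in fact establishes---is that the order at position $q$ depends only on the offset $o=q \bmod (\length{n-1}+1)$; this is exactly where the induction hypothesis (varying the block index $j$, not the outer index $i$) does the work. Stating the auxiliary claim in that form makes the final sentence clean. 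Also, your reference to point~(3) of Lemma~\ref{lemma:basics-counters} for the common length $\length{n}$ is slightly off: that point concerns unique occurrences, while the common length and its recurrence are given directly after the lemma in the paper.
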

\noindent
The length of an order-$n$ counter, denoted by $\length{n}$ is inductively defined as follows:
\[
\begin{array}{lclcl}
\length{1} & = & 1 & & \\
\length{n+1} & = & \max{n} \cdot (\length{n} + 1)& &\quad \textrm{for all $n \geq 1$} \\
\end{array}
\]

\noindent
Note that in particular for all $n \geq 1$ we have $\length{n} \geq \max{n-1}$.

\subsection{Higher-order counters have small Zimin index}\label{S Small}

The aim of this section is to give an upper bound on the Zimin index of counters of order $n$. 
A first simple remark is that the Zimin index of any counter of order $n$ is 
{upper-}bounded by the index of $\cnt{0}{n}$.

\begin{lemma}
\label{lem:zero-upper-bound}
For all $n\geq 1$ and for all $i \in [0,\max{n}-1]$,
\[
\Zimin{\cnt{i}{n}} \leq \Zimin{\cnt{0}{n}} .
\]
\end{lemma}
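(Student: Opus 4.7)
The plan is to realize $\cnt{0}{n}$ as the image of $\cnt{i}{n}$ under a letter-renaming morphism that cannot decrease Zimin type.

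Concretely, for each $n \geq 1$ I would introduce the alphabetic morphism $\psi_n : \Sigma^* \to \Sigma^*$ that sends $1_n$ to $0_n$ and fixes every other symbol of $\Sigma$. The first step is to verify that $\psi_n(\cnt{i}{n}) = \cnt{0}{n}$ for every $i \in [0,\max{n}-1]$. For $n=1$ this is immediate from the definitions of $\cnt{0}{1}$ and $\cnt{1}{1}$. For $n \geq 2$, one may write $\cnt{i}{n} = \cnt{0}{n-1}\, b_0\, \cnt{1}{n-1}\, b_1\cdots \cnt{\max{n-1}-1}{n-1}\, b_{\max{n-1}-1}$ with each $b_j \in \{0_n,1_n\}$; the sub-counters of order $n-1$ use only symbols of order strictly less than $n$ and are therefore fixed by $\psi_n$, while every separator $b_j$ is sent to $0_n$, producing exactly the all-zero top-level encoding that defines $\cnt{0}{n}$.

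The second and more conceptual step is the observation that any alphabetic morphism cannot decrease Zimin type. If $u = \varphi(Z_m)$ for some non-erasing morphism $\varphi$, then $\psi_n(u) = (\psi_n \circ \varphi)(Z_m)$, and $\psi_n \circ \varphi$ is again non-erasing because $\psi_n$ is length-preserving. Hence $\ZiminType{\psi_n(u)} \geq \ZiminType{u}$ for every word $u$.

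Combining the two steps closes the argument. Given any infix $v$ of $\cnt{i}{n}$, its image $\psi_n(v)$ is an infix of $\psi_n(\cnt{i}{n}) = \cnt{0}{n}$ whose Zimin type is at least $\ZiminType{v}$; taking the maximum over all such $v$ yields $\Zimin{\cnt{i}{n}} \leq \Zimin{\cnt{0}{n}}$. I do not foresee any real obstacle here: the only creative point is choosing $\psi_n$ to be alphabetic (so that it preserves matchings of Zimin patterns) while simultaneously flattening the order-$n$ layer of the counter to zero.
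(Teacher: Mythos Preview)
Your proof is correct and is essentially identical to the paper's own argument: the paper defines the same alphabetic morphism (sending $1_n\mapsto 0_n$ and fixing everything else), observes that $\cnt{0}{n}=\psi(\cnt{i}{n})$, and then uses that $\psi\circ\varphi$ is non-erasing whenever $\varphi$ is, so any infix of $\cnt{i}{n}$ matching $Z_\ell$ yields an infix of $\cnt{0}{n}$ matching $Z_\ell$. Your write-up is just a bit more explicit about each of these steps.
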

\begin{proof}
Let $n\geq 1$ and let $i \in [0,\max{n}-1]$. By definition of higher-order counters, we have
\begin{equation}
\label{eq:projection}
 \cnt{0}{n} = \psi(\cnt{i}{n}).
\end{equation}
where $\psi$ is the alphabetic morphism defined by $\psi(0_n)=\psi(1_n)=0_n$ and $\psi(x)=x$ for all $x \in \Sigma_{n-1}$. Assume that $\cnt{i}{n}$ contains an infix of the form $\varphi(Z_\ell)$ for some non-erasing morphism $\varphi$ and $\ell \geq 0$. By Eq.~\ref{eq:projection}, $\cnt{0}{n}$ contains $\psi(\varphi(Z_\ell))$ as an infix. It follows that $\Zimin{\cnt{0}{}} \geq \Zimin{\cnt{i}{n}}$.
\end{proof}

This leads us to the main result of this section.

\begin{theorem}
\label{thm:zimin-non-binary-counters}
For all $n\geq 3$,
\[
\Zimin{\cnt{0}{n}} \leq  n-1 {.}
\]
\end{theorem}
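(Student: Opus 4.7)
\medskip
\noindent\textbf{Proof plan.}
The plan is to prove the bound by induction on $n \geq 3$. For the base case $n=3$, I would directly verify that $\Zimin{\cnt{0}{3}} \leq 2$ by examining the length-$20$ word $\cnt{0}{3}$: a short inspection shows that every infix of $\cnt{0}{3}$ occurring at two distinct positions has length at most $4$, and that for each such repeated infix the first and last letters disagree, so none has Zimin type $\geq 2$. Since any infix of Zimin type $\geq 3$ would decompose as $\alpha\beta\alpha$ with $\alpha$ a repeated infix of Zimin type $\geq 2$, no such infix can exist.

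For the inductive step ($n \geq 4$), I would assume the statement for $n-1$; combined with Lemma~\ref{lem:zero-upper-bound} this gives $\Zimin{\cnt{j}{n-1}} \leq n-2$ for every index $j$. Suppose for contradiction that some infix $w$ of $\cnt{0}{n}$ satisfies $\ZiminType{w} \geq n$, so $w = \varphi(Z_n) = A\,\varphi(x_n)\,A$ with $A = \varphi(Z_{n-1})$, $\ZiminType{A} \geq n-1$, and $A$ occurring at two distinct positions in $\cnt{0}{n}$. I would then case-split on the number of occurrences of $0_n$ in $A$. If $A$ contains no $0_n$, then $A$ is an infix of a single block $\cnt{j}{n-1}$, so the inductive hypothesis yields $\ZiminType{A} \leq n-2$, a contradiction. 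If $A$ contains at least two $0_n$'s, then $A$ contains a complete intermediate block $\cnt{k}{n-1}$ (sandwiched between two consecutive $0_n$'s), which must then occur at two distinct positions in $\cnt{0}{n}$, contradicting Lemma~\ref{lemma:basics-counters}(3). Hence $A$ contains exactly one $0_n$, and writes as $A = u_0\,0_n\,u_1$ with $u_0,u_1$ free of $0_n$.

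For this remaining case, I would use the finer decomposition $Z_{n-1} = Z_{n-2}\,x_{n-1}\,Z_{n-2}$ and set $A' = \varphi(Z_{n-2})$, $U = \varphi(x_{n-1})$, so that $A = A'\,U\,A'$. The unique $0_n$ of $A$ must lie in $U$ (otherwise $A$ would contain at least two), whence $A'$ has no $0_n$ and lies inside a single block. The two occurrences of $A$ in $\cnt{0}{n}$ place the four copies of $A'$ in blocks $\cnt{j_1}{n-1},\cnt{j_1+1}{n-1},\cnt{j_1+k}{n-1},\cnt{j_1+k+1}{n-1}$ (for some $k \geq 1$), with the two ``prefix'' copies at the same relative position inside $\cnt{j_1}{n-1}$ and $\cnt{j_1+k}{n-1}$ and the two ``suffix'' copies at the same relative position inside $\cnt{j_1+1}{n-1}$ and $\cnt{j_1+k+1}{n-1}$. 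Since $j_1$ and $j_1+k$ differ at some bit of their binary expansions, the corresponding blocks disagree at a specific order-$(n-1)$ separator position; combined with $|A'| \geq 2^{n-2}-1$ and the corresponding bound for the pair $(j_1+1, j_1+k+1)$, this is intended to force $A'$ to simultaneously agree and disagree on a separator letter, a contradiction.

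The hard part will be the last case: the first two cases follow directly from the inductive hypothesis and the uniqueness property of sub-counters, but the case ``exactly one $0_n$'' requires precise bookkeeping of the relative positions of $A'$ inside the four blocks, together with control over which separator positions (determined by the binary expansions of the indices) fall inside the window occupied by $A'$. The crux is to translate the purely combinatorial constraint ``$A'$ is the same word at four places'' into a sharp arithmetic incompatibility using the tower-like block structure of $\cnt{0}{n}$.
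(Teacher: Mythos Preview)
Your base case and the first two cases of the inductive step are correct and coincide with the paper's argument. (One small remark: the implication ``first and last letters disagree $\Rightarrow$ Zimin type $\leq 1$'' is only valid because the repeated infixes have length at most $4$, so that any border in a decomposition $\gamma\delta\gamma$ must have $|\gamma|=1$; you should say this.)

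The gap is in Case~3 (exactly one $0_n$ in $A$). You hope for a direct contradiction from bit-level arithmetic on the block indices, anchored only by the bound $|A'|\geq 2^{n-2}-1$. This does not work, for two reasons. First, that length bound is far too weak: the order-$(n-1)$ separator positions inside a block $\cnt{j}{n-1}$ are spaced $L_{n-2}+1$ apart, and already $L_{n-2}\geq\max{n-3}$ dwarfs $2^{n-2}-1$, so $A'$ need not contain a single separator and therefore imposes no constraint on the bits of $j_1,j_1{+}k$. Second, and more importantly, the configuration you are analysing is \emph{not} impossible: a word $A=u_0\,0_n\,u_1$ can perfectly well occur at two distinct positions of $\cnt{0}{n}$. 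All that the two occurrences force is that $j_1$ and $j_1{+}k$ agree on their $\ell_0$ most significant bits (from the common suffix $u_0$) and that $j_1{+}1$ and $j_1{+}k{+}1$ agree on their $\ell_1$ least significant bits (from the common prefix $u_1$). A short arithmetic check gives only $\ell_0+\ell_1<\max{n-2}$; there is no incompatibility to extract, so no contradiction is available along your route.

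The missing idea is not a contradiction but a \emph{reduction to the induction hypothesis} via a ``wrap-around'' counter. Precisely because $\ell_0+\ell_1<\max{n-2}$, one can choose an index $i_0$ whose $\ell_1$ least significant bits are those read off $u_1$ and whose $\ell_0$ most significant bits are those read off $u_0$; then $\cnt{i_0}{n-1}=u_1\,r\,u_0$ for some nonempty $r$. Now take an \emph{arbitrary} decomposition $A=\gamma\delta\gamma$ (not just the particular $A'UA'$ coming from $\varphi$): since $A$ contains a single $0_n$, the border $\gamma$ contains none, so $\gamma$ is simultaneously a prefix of $u_0$ and a suffix of $u_1$, whence $\gamma\,r\,\gamma$ is an infix of $\cnt{i_0}{n-1}$. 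By the induction hypothesis together with Lemma~\ref{lem:zero-upper-bound}, $\ZiminType{\gamma}+1\leq\Zimin{\cnt{i_0}{n-1}}\leq n-2$, and Fact~\ref{fact:ZiminType-inductive} then gives $\ZiminType{A}\leq n-2$, contradicting $\ZiminType{A}\geq n-1$. Note that this argument must quantify over \emph{all} decompositions of $A$; fixing the single decomposition provided by $\varphi$, as you do, would not suffice even if it did yield a usable inequality.
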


\begin{proof}
 The proof proceeds by induction on $n\geq 3$. 
For the base case, we have to show that 
 \[
 \cnt{0}{3} = 0_1 0_2 1_1 0_2 \mathbf{0_3} 0_1 1_2 1_1 0_2 \mathbf{0_3} 0_1 0_2 1_1 1_2 \mathbf{0_3} 0_1 1_2 1_1 1_2 \mathbf{0_3}
 \]
has a Zimin index of at most 2. 
Assume towards a contradiction that $\cnt{0}{3}$ has Zimin index at least 3. Then it must contain an infix of the form $\alpha \beta \alpha$ for some non-empty $\alpha$ and $\beta$ with $\alpha$ of Zimin type at least 2. In particular, $\alpha$ must be of length at least 3. 
A careful inspection shows that the only infixes of $\cnt{0}{3}$ of length at least $3$ that appear twice are the following:
\[
\begin{array}{ccccccc}
0_1 0_2 1_1 & &
1_1 0_2 \mathbf{0_3} &&
1_1 0_2 \mathbf{0_3} 0_1 &&
0_2 \mathbf{0_3} 0_1 \\
\mathbf{0_3} 0_1 1_2 && 
\mathbf{0_3} 0_1 1_2 1_1 &&
0_1 1_2 1_1 && 
1_1 1_2 \mathbf{0_3} \\
\end{array}
\]
All these words have Zimin type 1 which concludes the base case.

	 Assume that the property holds for some $n \geq 3$. By Lemma~\ref{lem:zero-upper-bound} 
	 \modified{and induction hypothesis}, we have that for all $i \in [0,\max{n}-1]$,
\begin{equation}
\label{eq:induction-hypothesis}
	\Zimin{\cnt{i}{n}} \leq n-1.
\end{equation}   

Let us show that $\Zimin{\cnt{0}{n+1}}\leq n$. Let $\alpha \beta \alpha$ be an infix of $\cnt{0}{n+1}$ for some non-empty words $\alpha$ and $\beta$. It is enough to show that $\ZiminType{\alpha} \leq n-1$. We distinguish the following cases depending on the number occurrences of $0_{n+1}$ in $\alpha$.

\paragraph*{Case 1: $\alpha$ contains no occurrence of $0_{n+1}$.} Then $\alpha$ is an infix of some $\cnt{i}{n}$ for some $i \in [0,\max{n}-1]$. By induction hypothesis (\ie\  Eq.~{\ref{eq:induction-hypothesis}}) and Lemma~\ref{lemma:zimin-monotone}, $\ZiminType{\alpha}\leq\Zimin{\alpha}\leq\Zimin{\cnt{i}{n}}\leq n-1$.

\paragraph*{Case 2: $\alpha$ contains at least two occurrences of $0_{n+1}$.}
	By definition of counters, $\alpha$ has an infix $0_{n+1} \cnt{i}{n} 0_{n+1}$  for some $i \in [0,\max{n}-1]$. Hence $\cnt{0}{n+1}$ would contain two occurrences of $0_{n+1}\cnt{i}{n}0_{n+1}${,} which {contradicts Lemma~\ref{lemma:basics-counters}\modified{(3)}}.

\paragraph*{Case 3: $\alpha$ contains exactly one occurrence of $0_{n+1}$.} 
By definition of $\cnt{0}{n+1}$, there exists $i \neq j \in [0,\max{n}-1]$ such that $\alpha$ is of the form 
$u 0_{n+1} v$ with $u$ a suffix of both  $\cnt{i}{n}$ and $\cnt{j}{n}$ and $v$ a prefix of 
both $\cnt{i+1}{n}$ and $\cnt{j+1}{n}$.

	Consider the morphism $\psi$ \modified{that} erases all symbols in $\Sigma_{n-1}$ and replaces $0_n$ 
and $1_n$ by $0$ and $1$, respectively. Let us assume that
\[
\begin{array}{lcl}
	\psi(u) & = & b_{\max{n-1}-\ell_0} \cdots b_{\max{n-1}-1} \\
	\psi(v) &=  & c_0 \cdots c_{\ell_1-1} \\
\end{array}
\]
for some $\ell_0 \in [0,\max{n-1}]$ and $\ell_1 \in [0,\max{n-1}]$ and $b_k \in \{0,1\}$ for all $k \in [\max{n-1}-\ell_0, \max{n-1}-1]$ and $c_k \in \{0,1\}$ 
for all $k \in [0,\ell_1-1]$.

Let us start by showing that
\begin{equation}
	\ell_0 + \ell_1 < \max{n-1}.
\end{equation} 

By definition of counters,  
$b_{\max{n-1}-1}$ is the most significant bit of the binary presentation (of length $\max{n-1}$) of $i$ and $j$  and $c_0$ is the least significant bit of the binary presentation
of both $i+1$ and $j+1$. More formally, there exist ${x_i,x_j} \in [0,2^{\max{n-1}-\ell_0}-1]$ and  
${y_i,y_j} \in [0,2^{\max{n-1}-\ell_1}-1]$ such that:

\[
\begin{array}{lclclcl}
 i   & = & x_i + 2^{\max{n-1}-\ell_0} \cdot B  &  \quad &
 j   & = & x_j + 2^{\max{n-1}-\ell_0} \cdot B  \\ 
 i+1 & = & C + 2^{\ell_1} y_i  & \quad & 
 j+1 & = & C + 2^{\ell_1} y_j \\
\end{array}
\]
\noindent with
\[
\begin{array}{lclclcl}
B & = & \sum\limits_{k=0}^{\ell_0-1} b_{\max{n-1}-\ell_0-k} \cdot 2^k	 & \quad &
C & = & \sum\limits_{k=0}^{\ell_1-1} c_k \cdot 2^k.  \\
\end{array}
\]
\noindent
Assume towards a contradiction that $\ell_0+\ell_1 \geq\max{n-1}$. 
In particular, this implies $ 2^{\ell_1} \geq  2^{\max{n-1}-\ell_0}$. And hence,

\[ 
\begin{array}{lclcl}
x_i & = & i \mod 2^{\max{n-1}-\ell_0}    & & \textrm{by definition of $i$}\\
    & = & C-1 + 2^{\ell_1} y_i \mod 2^{\max{n-1}-\ell_0}	& & \\
    & = & C-1 \mod 2^{\max{n-1}-\ell_0} & & \textrm{as $2^{\max{n-1}-\ell_0}$ divides $2^{\ell_1}$.}
\end{array}
\]

A similar reasoning shows that $x_j= C-1 \mod 2^{\max{n-1}-\ell_0}$. 
Hence $x_i=x_j$ and hence $i=j$ which brings the contradiction.

\modified{Having just shown} $\ell_0+\ell_1 < \max{n-1}$, there exists some $i_0 \in [0,\max{n}-1]$ such that
$v$ is a prefix and $u$ is a suffix of $\cnt{i_0}{n}$. 
That is, the binary representation of $i_0$ of length $\max{n-1}$ has 
$c_0 \cdots c_{\ell_1-1}$ as $\ell_1$ least significant bits and  
$b_{\max{n-1}-\ell_0}\cdots b_{\max{n-1}-1}$ as $\ell_0$ most significant bits. 
In particular, as $\ell_0 + \ell_1 < \max{n-1}$, we have that:
\begin{equation}
\label{eq:decomposition_i_0}
\cnt{i_0}{n} = v r u \quad \textrm{for some non-empty $r$}.\end{equation}

We claim that $\ZiminType{\alpha} \leq \Zimin{\cnt{i_0}{n}}$ by which we would be done
since then $\ZiminType{\alpha}\leq\Zimin{\cnt{i_0}{n}}\leq n-1$
by Lemma \ref{lem:zero-upper-bound} and induction hypothesis.

Assume that $\alpha=\gamma \delta \gamma$ for non-empty $\gamma$ and $\delta$.
Using Fact~\ref{fact:ZiminType-inductive}, it is enough to show that $\ZiminType{\gamma}+1 \leq \Zimin{\cnt{i_0}{n}}$.  Recall that $\alpha=u 0_{n+1} v$ and $\alpha$ contains only one occurrence of $0_{n+1}$. It follows {that} $\gamma$ must be a prefix of $u$ and a suffix of $v$. In particular using Eq.~\ref{eq:decomposition_i_0},  $\cnt{i_0}{n}$ contains $\gamma r \gamma$ as an infix.
\[
\begin{array}{lcl}
 \ZiminType{\gamma} +1 & \leq & \ZiminType{\gamma r \gamma} \\
                      & \leq &  \Zimin{\cnt{i_0}{n}} \\

 \end{array}
\]

\end{proof}


The upper bound on the Zimin index of higher-order counters established in the previous theorem is tight.

\begin{theorem}
\label{thm:zimin-non-binary-counters-exact}
\[
\begin{array}{lclcll}
\Zimin{\cnt{0}{1}} & = & \Zimin{\cnt{0}{1}} & = & 1  \\
\Zimin{\cnt{0}{2}} & = & \Zimin{\cnt{3}{2}} & = & 2 \\
\Zimin{\cnt{1}{2}} & = & \Zimin{\cnt{2}{2}} & = & 1 \\
	\modified{\Zimin{\cnt{i}{3}}} & = & 2 &  & &\quad \textrm{for all $i\in[0,\tau(3)-1]$}\\
	\Zimin{\cnt{i}{n}} & = & n-1  &  &&\quad \textrm{for all $\modified{n \geq 4}$ and all $i \in [0,\max{n}-1]$} \\
\end{array}
\]
\end{theorem}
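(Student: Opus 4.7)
The upper bound $\Zimin{\cnt{i}{n}} \le n-1$ for $n\ge 3$ follows immediately from Theorem~\ref{thm:zimin-non-binary-counters} combined with Lemma~\ref{lem:zero-upper-bound}, so the content of the theorem lies in the matching lower bounds and the computation for $n \le 3$. For $n = 1$, the counters are single letters, of Zimin index $1$. For $n = 2$, direct inspection of the four order-$2$ counters gives the claimed values: $\cnt{0}{2} = 0_1 0_2 1_1 0_2$ and $\cnt{3}{2} = 0_1 1_2 1_1 1_2$ contain respectively $0_2 1_1 0_2$ and $1_2 1_1 1_2$ as Zimin-type-$2$ infixes, while $\cnt{1}{2}$ and $\cnt{2}{2}$ have four pairwise distinct letters and admit no infix of the form $\alpha\beta\alpha$ with non-empty $\alpha,\beta$. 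For $n = 3$, every $\cnt{i}{3}$ begins with $\cnt{0}{2}$ and hence contains $0_2 1_1 0_2$ as an infix of Zimin type $2$, giving $\Zimin{\cnt{i}{3}} \ge 2$.

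The real work is the lower bound $\Zimin{\cnt{i}{n}} \ge n-1$ for $n \ge 4$. I would establish this by induction on $n$ using the following strengthened statement $(\ast)$: \emph{for every $n\ge 3$ there exist positions $2\length{n-1}+2 < a_n < b_n \le \length{n}$ such that the infix $W_n$ of $\cnt{0}{n}$ running from position $a_n$ to position $b_n$ has Zimin type at least $n-1$}. The lower bound $a_n > 2\length{n-1}+2$ expresses that $W_n$ lies strictly after the two positions $\length{n-1}+1$ and $2\length{n-1}+2$ of the first two order-$n$ bits of $\cnt{0}{n}$, and this avoidance is the key property that lets the induction iterate. The base case $n=3$ is witnessed by the length-$4$ infix $1_2 0_3 0_1 1_2$ of $\cnt{0}{3}$ starting at position $14$, which has Zimin type $2$ (take $\alpha = 1_2$) and satisfies $14 > 10 = 2\length{2}+2$. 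For the inductive step from $n$ to $n+1$, I would place $W_{n+1}$ inside the two sub-counters $\cnt{2}{n}$ and $\cnt{3}{n}$ of $\cnt{0}{n+1}$: both differ from $\cnt{0}{n}$ only at the positions $\length{n-1}+1$ and $2\length{n-1}+2$, which $W_n$ avoids by induction hypothesis, so $W_n$ occurs at the same internal positions inside each of $\cnt{0}{n},\cnt{1}{n},\cnt{2}{n},\cnt{3}{n}$. The resulting infix of $\cnt{0}{n+1}$ has the shape $W_n\,\gamma\,W_n$, has Zimin type at least $n$, and starts at position $2\length{n}+2+a_n > 2\length{n}+2$, preserving $(\ast)$.

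To conclude for an arbitrary $\cnt{i}{n}$ with $n\ge 4$, observe that the first two sub-counters of $\cnt{i}{n}$ are always $\cnt{0}{n-1}$ and $\cnt{1}{n-1}$, separated by the single symbol corresponding to the least significant order-$n$ bit, independently of $i$. By $(\ast)$ applied at order $n-1$, the witness $W_{n-1}$ avoids the position $\length{n-2}+1$ of the first order-$(n-1)$ bit, so it also appears at the same internal positions inside $\cnt{1}{n-1}$. Hence $\cnt{i}{n}$ contains $W_{n-1}\,\gamma_i\,W_{n-1}$ as an infix, whose Zimin type is at least $n-1$. The main conceptual obstacle is identifying the correct strengthening $(\ast)$: a naive induction merely asserting the existence of \emph{some} Zimin-type-$(n-1)$ infix in $\cnt{0}{n}$ is too weak, because the inductive step requires the \emph{same} witness word to appear at \emph{aligned} positions inside two adjacent sub-counters, and therefore to avoid the specific order-$n$ bit positions at which those sub-counters differ; the numerical position bound in $(\ast)$ is calibrated exactly to enforce this property.
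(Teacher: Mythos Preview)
Your argument is correct and follows essentially the same strategy as the paper: both proofs strengthen the inductive statement so that the Zimin-type-$(n-1)$ witness can be located inside two consecutive sub-counters of $\cnt{0}{n+1}$, and both exploit that $\cnt{0}{n},\cnt{1}{n},\cnt{2}{n},\cnt{3}{n}$ differ only in their first two order-$n$ bits.

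The difference is purely in how the strengthening is phrased. The paper's invariant is: \emph{there is a word $\alpha_n$ of Zimin type $\ge n-1$ that is an infix of both $\cnt{2}{n}$ and $\cnt{3}{n}$}. Your invariant is position-based: the witness lies in $\cnt{0}{n}$ strictly after position $2\length{n-1}+2$. These are equivalent, since a word occurring in $\cnt{0}{n}$ entirely after the second order-$n$ bit is automatically an infix of $\cnt{0}{n},\cnt{1}{n},\cnt{2}{n},\cnt{3}{n}$ at the same position. The paper's formulation avoids the position bookkeeping and makes the inductive step (passing through the common infix $\cnt{2}{n}\,0_{n+1}\,\cnt{3}{n}$ of $\cnt{2}{n+1}$ and $\cnt{3}{n+1}$) slightly lighter; your formulation makes the final ``arbitrary $i$'' conclusion a bit more explicit, since your position bound already guarantees the witness sits inside $\cnt{0}{n-1}$ and $\cnt{1}{n-1}$ at aligned positions. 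The base witnesses also differ (the paper uses $\alpha_3=1_11_21_1$ inside $\cnt{3}{2}$; you use $1_2 0_3 0_1 1_2$ inside $\cnt{0}{3}$), but either choice works.
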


\begin{proof} 
	\modified{The Zimin index of all higher-order counters of order at most $3$ is checked by a computer program.}
	For the last statement, Theorem~\ref{thm:zimin-non-binary-counters} established that $\Zimin{\cnt{i}{n}} \leq n-1$ for \modified{all} $n \geq 3$ and \modified{all} $i \in [0,\max{n}-1]$. 
	\modified{ We need to prove $\Zimin{\cnt{i}{n}}\geq n-1$ for all $n\geq 4$.}
	\modified{That is, we} only need to show that \modified{for all $n\geq 4$,} $\cnt{i}{n}$ contains an infix of Zimin type at least $n-1$. We prove the stronger property that for all $n \geq 3$, there exists
a word $\alpha_n \in \Sigma^*$ of Zimin type at least $n-1$,
	which is an infix of both $\cnt{2}{n}$ and $\cnt{3}{n}$ 
	\modified{(recall that in particular when $n\geq 3$, every higher-order counter
	$\cnt{i}{n+1}$ contains both $\cnt{2}{n}$ and $\cnt{3}{n}$ as infix)}.

We proceed by induction on $n \geq 3$.

For the base case $n=3$, we take $\alpha_3=1_1 1_2 1_1$. Clearly $\ZiminType{\alpha_3}=2$ and as $\alpha_3$ is an infix of $\cnt{3}{2}$, it is also an infix of $\cnt{2}{3}$ and $\cnt{3}{3}$ (and in fact of all order 3 counters).

Assume that the property holds for some $n \geq 3$ and let us show that it holds for $n+1$. From the definitions, we have:
\[
\begin{array}{lcl}
 \cnt{2}{n+1}  & =  & \cnt{0}{n} \mathbf{0_{n+1}} \cnt{1}{n} \mathbf{1_{n+1}} \cnt{2}{n} \mathbf{0_{n+1}} \cnt{3}{n} \cdots \\
 \cnt{3}{n+1}  & =  & \cnt{0}{n} \mathbf{1_{n+1}} \cnt{1}{n} \mathbf{1_{n+1}} \cnt{2}{n} \mathbf{0_{n+1}} \cnt{3}{n} \cdots \\
\end{array}
\]
In particular, $\cnt{2}{n}0_{n+1}\cnt{3}{n}$ is an infix of both $\cnt{2}{n+1}$ and $\cnt{3}{n+1}$. By induction hypothesis, $\alpha_n$ is an infix of $\cnt{2}{n}$ and of $\cnt{3}{n}$. Therefore there exist $x_2,y_2,x_3$ and $y_3$ such that:
\[
\begin{array}{lcl}
\cnt{2}{n} & =& x_2 \alpha_n y_2 \\
\cnt{3}{n} & =& x_3 \alpha_n y_3 \\
\end{array}
\]
In particular, the common infix can be written as:
\[
\cnt{2}{n}0_{n+1}\cnt{3}{n} = x_2 \alpha_n y_2 0_{n+1} x_3 \alpha_n y_3 \\
\]

We can take $\alpha_{n+1}=\alpha_n y_2 0_{n+1} x_3 \alpha_n$ which is therefore an infix of both  $\cnt{2}{n+1}$ and $\cnt{3}{n+1}$. The Zimin type of $\alpha_{n+1} = \alpha_n \beta \alpha_n$ with $\beta=y_2 0_{n+1} x_3$ is at least 1 more than that of $\alpha_n$. By induction hypothesis, the Zimin type of $\alpha_{n+1}$ is therefore at least $n$ which concludes the proof.
 \end{proof}

\begin{corollary}
	For all $n \geq 3$,
	\[
	   f(n,2n-1) {>} \max{n-1}.
	\]
\end{corollary}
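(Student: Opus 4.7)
The plan is to exhibit, for every $n \geq 3$, a concrete word over an alphabet of size $2n-1$ whose length is at least $\tau(n-1)$ and which does not encounter $Z_n$. The natural candidate is the order-$n$ counter $\cnt{0}{n}$, and the corollary will follow by combining three observations about this counter: its length, its alphabet size, and its Zimin index.

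First I would check the alphabet size. By definition of higher-order counters, $\cnt{0}{n}$ is built from the counters $\cnt{0}{n-1},\ldots,\cnt{\tau(n-1)-1}{n-1}$ separated by the bits of the binary expansion of $0$ over $\{0_n,1_n\}$. Since this expansion consists entirely of $0_n$'s, the only order-$n$ symbol occurring in $\cnt{0}{n}$ is $0_n$. Hence $\cnt{0}{n}$ is a word over $\Sigma_{n-1}\cup\{0_n\}$, an alphabet of cardinality $2(n-1)+1=2n-1$.

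Next I would appeal to the length estimate already recorded in Section~\ref{S Stockmeyer}: $\length{n}\geq \tau(n-1)$, so $|\cnt{0}{n}|\geq\tau(n-1)$. Finally, Theorem~\ref{thm:zimin-non-binary-counters} gives $\Zimin{\cnt{0}{n}}\leq n-1$ for every $n\geq 3$, which means that no infix of $\cnt{0}{n}$ is the image of $Z_n$ under a non-erasing morphism; equivalently, $\cnt{0}{n}$ does not encounter $Z_n$.

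Putting these pieces together, $\cnt{0}{n}$ is a word of length at least $\tau(n-1)$ over an alphabet of size $2n-1$ that avoids $Z_n$, so by definition of $f$ we obtain $f(n,2n-1)>\tau(n-1)$. There is no real obstacle in this step: everything follows directly from the already established results, and the only subtlety worth flagging is the alphabet bookkeeping ensuring that $1_n$ never appears in $\cnt{0}{n}$, which is what makes the result hold for $2n-1$ letters rather than $2n$.
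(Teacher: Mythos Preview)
Your proposal is correct and follows essentially the same argument as the paper: both exhibit $\cnt{0}{n}$ as a word over the $(2n-1)$-letter alphabet $\Sigma_{n-1}\cup\{0_n\}$, invoke the length bound $\length{n}\geq\tau(n-1)$, and apply Theorem~\ref{thm:zimin-non-binary-counters} to conclude that $\cnt{0}{n}$ avoids $Z_n$. The only minor difference is that you spell out explicitly why $1_n$ does not occur, which the paper leaves implicit.
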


\begin{proof}
Let $n \geq 3$. The word $\cnt{0}{n}$ over the alphabet 
$\{0_1,1_1,\ldots,0_{n-1},1_{n-1},0_n\}$ of size $2n-1$ has length at least $L_n\geq\max{n-1}$ and Zimin index at most $n-1$. This word hence avoids the $n$-th Zimin pattern and 
witnesses the announced lower bound.
\end{proof}

\section{Reduction to the binary alphabet}
\label{sec:zimin-binary}

In this section, we show how to encode a higher-order counter seen in 
Section~\ref{sec:Zimin-non-binary} over the binary alphabet $\{0,1\}$ while still preserving a relatively
{low} upper bound on the Zimin index. For this we apply to counters the morphism $\psi$\modified{,} defined as follows
\[
	\psi(0_n)  =  00\,(01)^{n-1}\, 00 \quad \quad 	\psi(1_n)  =  11 \,(01)^{n-1} \,11 \quad \quad \quad \textrm{for all $n \geq 1$}.
\]

\begin{definition}
	For all $n \geq 1$ and $i \in [0,\max{n}-1]$, we define
	\[
	\cntb{i}{n} = \psi(\cnt{i}{n}).
	\]
\end{definition}

The set of images of the letters by this morphism forms what is known as an infix code, 
i.e. $\psi(a)$ is not an infix of $\psi(b)$ for any two letters $a,b\in\Sigma$ with $a\not=b$. In addition to being an infix code, the morphism was designed so that:
\begin{itemize}
\item we are able to attribute a non-ambiguous partial decoding to most infixes of an encoded word (cf. Lemma~\ref{lem:unique-parse}),
\item the encoding of $0_n$ and $1_n$ differ on their first and last symbol \modified{inter alia},
\item the Zimin index of the encoding of an order $n$ symbol is relatively low (we will show that it is at most $\lfloor \log_2(3+2n) \rfloor$).
\end{itemize}

Applying a non-erasing morphism to a word can only increase its Zimin index. 
We will see in  the remainder of this section that the Zimin index of higher-order counters is increased by at most 2 when the morphism  $\psi$ is applied.  It is possible that another choice of morphism would bring a better upper bound. However, remark 
that the proof we present is tightly linked to the above-mentioned properties of $\psi$ that are decisive for the construction to work.  
 
This section is devoted to establishing the following theorem.  
 
 \begin{theorem}\label{T binary}
 For all $n \geq 2$ and for all $i \in [0,\max{n}-1]$,
 \[
  \Zimin{\cntb{i}{n}}\leq n+1.
 \]	
 \end{theorem}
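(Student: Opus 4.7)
The plan is to prove the contrapositive via Theorem~\ref{thm:zimin-non-binary-counters}: if $\cntb{i}{n}$ had an infix of Zimin type at least $n+2$, we would produce an infix of $\cnt{i}{n}$ of Zimin type at least $n$, contradicting $\Zimin{\cnt{i}{n}}\leq n-1$ (valid for $n\geq 3$; the base case $n=2$ is handled by direct inspection of the four words $\cntb{0}{2},\ldots,\cntb{3}{2}$, each of length $20$). The mechanism for this reduction is a ``unique parse'' lemma, foreshadowed in the remarks preceding the statement, that I would establish first.

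That lemma says the following. Because $\psi$ is an infix code whose images all begin and end with a doubled bit ($00$ or $11$) while the interior $(01)^{m-1}$ of each image contains no doubled bit, every sufficiently long infix $\mu$ of $\cntb{i}{n}$ admits an essentially unique decomposition
\[
  \mu \;=\; s\cdot\psi(\widehat{\mu})\cdot p,
\]
where $\widehat{\mu}$ is an infix of $\cnt{i}{n}$, $s$ is a proper suffix of some $\psi(a)$, and $p$ is a proper prefix of some $\psi(b)$. A direct consequence is that any two occurrences of a sufficiently long word in $\cntb{i}{n}$ must be block-aligned, i.e.\ start at the same phase within the $\psi$-block structure. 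Applying this to an assumed infix $\alpha\beta\alpha$ of $\cntb{i}{n}$ with $\ZiminType{\alpha}\geq n+1$ --- whose length is at least $2^{n+1}-1$ by Lemma~\ref{lemma:zimin-monotone}, well above any parsing threshold --- the two decoded images of $\alpha$ coincide, and extending $\widehat{\alpha}$ by the unique letters identified by $s$ and $p$ produces an infix $\widetilde{\alpha}\,\widetilde{\beta}\,\widetilde{\alpha}$ of $\cnt{i}{n}$.

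The main obstacle is to show that this passage costs at most two in Zimin type, i.e.\ that $\ZiminType{\widetilde{\alpha}\,\widetilde{\beta}\,\widetilde{\alpha}}\geq n$. I would iterate the alignment argument along the nested Zimin factorizations $\alpha=\alpha_1\beta_1\alpha_1$, $\alpha_1=\alpha_2\beta_2\alpha_2,\ldots$ furnished by Fact~\ref{fact:ZiminType-inductive}: each level $k$ at which $|\alpha_k|$ still exceeds the parsing threshold produces a nested factor $\widetilde{\alpha}_k$ in $\cnt{i}{n}$, building up a Zimin decomposition of $\widetilde{\alpha}$ whose depth is at least that of $\alpha$ minus $2$. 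The constant $2$ is meant to absorb the boundary letters induced by $s$ and $p$ together with the deepest level where $\alpha_k$ may lie inside a single $\psi$-block and so carry no repeatable, decodable information. Tracking this bookkeeping precisely --- verifying that exactly two levels are lost, no more --- is the technically delicate part of the proof.
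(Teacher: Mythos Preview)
Your overall strategy --- decode via the unique-parse lemma and transfer Zimin structure from $\cntb{i}{n}$ back to $\cnt{i}{n}$, then invoke Theorem~\ref{thm:zimin-non-binary-counters} --- is natural and would indeed yield \emph{some} bound on $\Zimin{\cntb{i}{n}}$. But the specific claim that only two levels are lost is wrong, and this is not just bookkeeping: the loss is $\Theta(\log n)$, not $O(1)$.

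The problem is the ``deepest level where $\alpha_k$ may lie inside a single $\psi$-block.'' Such an $\alpha_k$ is a strict infix of some $\psi(0_m)$ or $\psi(1_m)$ with $m\leq n$, hence essentially of the form $(01)^j$ with $j\leq n-1$. Now $\ZiminType{(01)^{2^t-1}}\geq t$ (take $\alpha=(01)^{2^{t-1}-1}$ and iterate), so a single-block infix can already have Zimin type $\lfloor\log_2(n)\rfloor$. When your descent along $\alpha=\alpha_0,\alpha_1,\ldots$ reaches such a word, you stop decoding, and everything below that level --- possibly $\log_2 n$ many levels --- is lost. The same bound applies when the parse center becomes empty: then $\alpha_k=\ell r$ with $|\ell|,|r|\leq 2n+1$, so $|\alpha_k|\leq 4n+2$ and again $\ZiminType{\alpha_k}\leq\lfloor\log_2(4n+3)\rfloor$. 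Your argument therefore gives $\Zimin{\cntb{i}{n}}\leq (n-1)+O(\log n)$, which is enough for a non-elementary lower bound on $f(n,2)$ but not for the stated inequality $\leq n+1$.

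The paper's proof is organised quite differently: it does \emph{not} reduce to Theorem~\ref{thm:zimin-non-binary-counters} at fixed $n$, but re-runs an induction on $n$ directly at the binary level (in fact for the slightly stronger statement where an extra $\psi(0_{n+1})$ or $\psi(1_{n+1})$ is appended). Given an infix $\alpha\beta\alpha$ of $\cntb{i}{n+1}$, it parses $\alpha$ once, locates the at most one order-$(n+1)$ symbol $b$ in the context, writes $\alpha=\tilde{x}\,\psi(b)\,\tilde{y}$, and then --- mirroring Case~3 of Theorem~\ref{thm:zimin-non-binary-counters} --- shows there is a single order-$n$ counter $\cntb{j_0}{n}$ with $\tilde{y}\chi\tilde{x}=\cntb{j_0}{n}$. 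The inner factor $\delta$ of any decomposition $\alpha=\delta\gamma\delta$ is then trapped either inside $\cntb{j_0}{n}$ (or $\psi(b)\cntb{j_0}{n}$, $\cntb{j_0}{n}\psi(b)$), where the \emph{binary} induction hypothesis applies, or inside a word of length $O(n)$, where the trivial length bound suffices. This rotation trick is precisely what lets the paper absorb the $\log n$ cost that your level-by-level descent cannot avoid.

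A secondary point: your sentence ``extending $\widehat{\alpha}$ by the unique letters identified by $s$ and $p$'' assumes the boundary letters are the same at both occurrences of $\alpha$. They need not be a priori; the paper establishes this separately (Claim~1 in its proof, using Lemma~\ref{lem:order-from-position}). Working with the parse center alone, as you sometimes seem to, sidesteps this --- but then the boundary letters cannot be used to pad the count.
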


The proof{,} which is essentially {an} {extension of the} proof of 
Theorem~\ref{thm:zimin-non-binary-counters}{,} is given {in} Section~\ref{ssec:proof-non-binary}.  To perform this reduction, we first establish basic properties of the morphism $\psi$ and its decoding in Section~\ref{ssec:parsing}.

Recalling that an order $n$ counter {has} length at least $\max{n-1}$, in particular {so} does its
code, which is its image under the non-erasing morphism $\psi$. 

{Theorem \ref{T binary} immediately
 implies} {the following} non-elementary lower bound for $f(n,2)$ whenever $n \geq 4$.

\begin{theorem}{\label{T Main2}} For all $n \geq 4$, 
\[
f(n,2){>} \max{n-3}.
\]
\end{theorem}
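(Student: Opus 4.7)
The plan is to instantiate Theorem~\ref{T binary} at the right order so that the Zimin index of the resulting binary word falls strictly below $n$, and then read off the length bound from the definition of $\length{\cdot}$ and $\maxfun$.

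First, I would apply Theorem~\ref{T binary} with parameter $n-2$ (which is $\geq 2$ precisely because $n \geq 4$) to the counter of index $0$. This yields a binary word $\cntb{0}{n-2} = \psi(\cnt{0}{n-2})$ with
\[
   \Zimin{\cntb{0}{n-2}} \;\leq\; (n-2)+1 \;=\; n-1.
\]
Since $Z_n$ has Zimin type $n$, any infix of $\cntb{0}{n-2}$ that matches $Z_n$ would witness $\Zimin{\cntb{0}{n-2}} \geq n$, contradicting the above bound. Hence $\cntb{0}{n-2}$ does not encounter $Z_n$.

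Next I would lower-bound the length of $\cntb{0}{n-2}$. As $\psi$ is non-erasing,
\[
   |\cntb{0}{n-2}| \;\geq\; |\cnt{0}{n-2}| \;=\; \length{n-2} \;\geq\; \max{n-3},
\]
where the final inequality is the already-noted fact that $\length{m}\geq \max{m-1}$ for all $m\geq 1$. Combining both observations, $\cntb{0}{n-2}$ is a binary word of length at least $\max{n-3}$ that avoids $Z_n$, which by definition of $f$ gives $f(n,2) > \max{n-3}$.

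There is essentially no obstacle here: all of the real work has been done in Theorem~\ref{T binary} and in the elementary length estimate on higher-order counters. The only point to be careful about is the shift in the index (we use order $n-2$ to kill $Z_n$, and the length of an order-$(n-2)$ counter is bounded below by $\max{n-3}$), together with checking that the regime of validity $n-2\geq 2$ for Theorem~\ref{T binary} is covered by the hypothesis $n\geq 4$.
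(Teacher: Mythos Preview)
Your proposal is correct and is exactly the argument the paper has in mind: it explicitly says that Theorem~\ref{T binary} ``immediately implies'' Theorem~\ref{T Main2} after recalling that an order-$m$ counter (and hence its $\psi$-image) has length at least $\max{m-1}$. Your index shift to order $n-2$, the check that $n-2\geq 2$ uses precisely the hypothesis $n\geq 4$, and the length estimate $\length{n-2}\geq\max{n-3}$ are all as in the paper.
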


\subsection{Parsing the code $\psi$}
\label{ssec:parsing}

A word $w \in \{0,1\}^*$ is {\em coded} by $\psi$ (or simply a {\em coded word}) if it is the image by $\psi$ of some word $v$ over $\Sigma^*$. As the image of $\psi$ is an infix code, the word $v \in \Sigma^*$ is unique. However in our proof, we need to take into consideration all infixes of a coded word. 
To be able to reuse the proof techniques of Theorem~\ref{thm:zimin-non-binary-counters}, it is necessary to 
associate to an infix of a coded word a partial decoding called a \intro{parse}. 

Let us therefore consider the following sets,
\begin{itemize}
	\item $C = \psi(\Sigma)= \{ \psi(0_k) \mid k \geq 1\} \cup \{ \psi(1_k) \mid k \geq 1\} $ the image of the morphism {of letters in $\Sigma$,}
	\item $L = \{ v\in\{0,1\}^*  \mid \exists u \in \{0,1\}^+, uv \in C\}$ the set of strict suffixes of $C$,
	\item  $R = \{ u \in \{0,1\}^* \mid \exists v \in \{0,1\}^+, uv \in C\}$ the set of strict prefixes of $C$, and
	\item $F=\{ v \in \{0,1\}^* \mid \exists u,w \in \{0,1\}^+, uvw \in C\}$ the set of strict infixes of $C$.
\end{itemize}

\noindent
Remark that these four sets are all regular languages. Indeed,
\begin{itemize}
	\item $C=00 (01)^* 00 \cup 11 (01)^* 11$,
	\item $L=\varepsilon+0+1+ (\varepsilon+1)(01)^*11 + (\varepsilon+0+1)(01)^*00$,
	\item $R=\varepsilon+0+1+ 11(01)^*(\varepsilon+0+1) + 00(01)^*(\varepsilon+0)$, and
	\item \modified{$F=(\varepsilon +0)(01)^*(\varepsilon+0)+(\varepsilon+1)(01)^*(\varepsilon+0+1)$.}
\end{itemize}


\modified{Thanks to this property, it is possible to reduce the proofs of our statements (namely Lemma~\ref{lem:observations} and Lemma~\ref{lem:unique-parse}) to computations on finite word automata and finite word transducers. For our tests, we used Awali\footnote{\url{http://vaucanson-project.org/AWALI/Awali.html}} which is the latest version of the well-established Vaucanson platform.}

{We collect in the following lemma some observations on these sets which will be used through out the proofs in this section.

\begin{lemma}
\label{lem:observations}
The sets $C,R,L$ and $F$ satisfy the following equations:
\begin{enumerate}
	\item  \modified{$L C^* R \cap F=L R \cap F=\{0,1\}^{\leq 2} \cup \{001,110\}$},
	\item  $LR \cap C = \{0000,1111\}$,
\end{enumerate}
	
\end{lemma}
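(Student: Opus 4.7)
The plan is as follows. All four sets $C$, $R$, $L$, $F$ are regular, with explicit regular expressions already recorded above; therefore so are $LR$, $LC^*R$, and the various intersections appearing in the statement. The identities to be proved are thus equivalences between regular languages, and can in principle be verified by an automata computation---this is indeed what the authors mention doing with Awali. For a hand proof, I would proceed via the following structural observations.

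The main structural input is a sub-lemma: no element of $C$ is a proper factor of another element of $C$. Indeed, the only occurrences of $00$ (resp.\ $11$) in any $\psi(0_n)$ (resp.\ $\psi(1_n)$) are the two sentinels at the extremities, because the interior $(01)^{n-1}$ contains neither $00$ nor $11$. Moreover, $\psi(0_n)$ never contains $11$ and $\psi(1_n)$ never contains $00$, so the two families cannot mix. Hence if $c \in C$ occurs as a factor of $c' \in C$, the boundary $00$'s (resp.\ $11$'s) of $c$ must align with those of $c'$, forcing $|c| = |c'|$ and then $c = c'$.

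Using this, the inclusion $LR \cap F \subseteq LC^*R \cap F$ in (1) is trivial (take the $C^*$-part to be $\varepsilon$). For the converse, suppose $x = \ell c_1 \cdots c_k r$ with $\ell \in L$, $c_i \in C$, $r \in R$, $k \geq 1$, and $x \in F$, so that $x$ is a strict infix of some $c' \in C$. Then $c_1$ is a factor of $c'$, so $c_1 = c'$ by the sub-lemma; but then $|x| \geq |c_1| = |c'|$, contradicting $x$ being a \emph{strict} infix of $c'$. So $k=0$ and $x \in LR$. This reduces (1) to computing $LR \cap F$, and (2) is a parallel statement about $LR \cap C$. For both, I would analyse by the lengths of $\ell$ and $r$: from the regular expressions, any $\ell \in L$ of length $\geq 2$ ends in $00$ or $11$, and any $r \in R$ of length $\geq 2$ begins with $00$ or $11$. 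Consequently, if $|\ell| \geq 2$ and $|r| \geq 2$, then $\ell r$ contains one of the blocks $0000$, $0011$, $1100$, $1111$---and none of these is a strict infix of any code, while only $0000$ and $1111$ occur in $C$ at all, giving (2). The residual cases, where at least one of $\ell, r$ has length $\leq 1$, amount to a short mechanical enumeration against the regular expressions for $L$, $R$, and $F$ that yields the stated finite set in (1).

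The main obstacle is purely bookkeeping: because the languages are so constrained, no single case requires ingenuity, but the combinatorics must be organised carefully. Automating the verification, as the authors do, is arguably the cleanest route, and is the one the proof is stated to follow.
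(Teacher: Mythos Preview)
Your approach is not the paper's: the paper gives no hand argument at all for this lemma, instead remarking that the identities are between explicitly described regular languages and stating that they were verified mechanically with the Awali automata library. What you outline is therefore a genuinely different, by-hand route. The key reduction $LC^*R\cap F = LR\cap F$ via the infix-code sub-lemma is clean and correct (if some $c_1\in C$ appeared in a strict infix of some $c'\in C$, then $c_1=c'$ by the code property, contradicting strictness), and your structural observation that every $\ell\in L$ of length $\geq 2$ ends in $00$ or $11$ while every $r\in R$ of length $\geq 2$ begins with $00$ or $11$ is exactly the right handle for the residual enumeration in both~(1) and~(2). Compared with an opaque machine check, your argument has the advantage of explaining \emph{why} the intersection collapses to very short words.

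One caveat: if you actually carry out the enumeration, you will not recover the set printed in the lemma. The length-$3$ part of $LR\cap F$ is $\{001,011\}$, not $\{001,110\}$ (nor $\{001,100\}$, as the same set is quoted in the proof of Lemma~\ref{lem:Zimin-simple}). Indeed $110\notin F$: in any $11(01)^{n-1}11$ the only occurrences of $11$ are at the two extremities, so $110$ can only sit at the left boundary and is never a \emph{strict} infix; and the $0$-codes contain no $11$ at all. Conversely $011 = 0\cdot 11 \in LR$ (since $0\in L$ and $11\in R$), and $011$ is a strict infix of $\psi(1_2)=110111$, hence $011\in F$. This is a harmless typo for the paper's purposes---the only downstream use is that every word in the set has length at most $3$, hence Zimin index at most $1$---but your hand enumeration cannot literally establish the statement as printed.
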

}

\begin{lemma}
\label{lem:characterization-infix-coded-word}
\modified{
	For every $n\geq 1$ and every infix of a word in $C_{\leq n}^*$  belongs to \[
F_{\leq n} \cup L_{\leq n}C^*_{\leq n}R_{\leq n}\]
where $F_{\leq n}$, $C_{\leq n}$ and $R_{\leq n}$ respectively denote the restrictions of $F$, $C$ and $R$ to symbols of order at most $n$. }
\end{lemma}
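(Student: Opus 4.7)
The plan is to exploit the fact that the image of $\psi$ restricted to single letters forms an infix code, which implies that every $W \in C_{\leq n}^*$ admits a \emph{unique} factorization $W = \psi(a_1)\psi(a_2)\cdots\psi(a_m)$ with every $a_k \in \Sigma$ of order at most $n$. I would then classify each infix $w$ of $W$ according to how its endpoints sit with respect to the codeword boundaries of this canonical factorization.

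More precisely, first I would dispose of the trivial case $w = \varepsilon$ by observing that the empty word belongs to $F_{\leq n}$, since for any order-$\leq n$ letter $a$, $\varepsilon$ is a strict infix of $\psi(a)$. Otherwise, let $i \leq j$ be the unique indices such that the first position of $w$ in $W$ lies inside $\psi(a_i)$ and the last position lies inside $\psi(a_j)$.

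In the case $i = j$, the infix $w$ sits inside one codeword $\psi(a_i)$, and a four-way split (start at position $0$ or not; end at the last position or not) places $w$ into either $F_{\leq n}$ (strict infix), $R_{\leq n}$ (strict prefix), $L_{\leq n}$ (strict suffix), or $C_{\leq n}$ (full codeword). The latter three are each contained in $L_{\leq n}C_{\leq n}^*R_{\leq n}$ by padding with $\varepsilon$ on the appropriate side, using $\varepsilon \in L_{\leq n} \cap R_{\leq n}$. In the case $i < j$, I would write $w = s\cdot\psi(a_{i+1})\cdots\psi(a_{j-1})\cdot p$ where $s$ is a non-empty suffix of $\psi(a_i)$ and $p$ a non-empty prefix of $\psi(a_j)$. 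If $s$ is a strict suffix, put it in the $L$-part; if instead $s = \psi(a_i)$, set the $L$-part to $\varepsilon$ and absorb $\psi(a_i)$ into the $C_{\leq n}^*$-part. Treat $p$ symmetrically. In every sub-configuration, $w \in L_{\leq n}C_{\leq n}^*R_{\leq n}$.

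I do not foresee any real obstacle: the lemma is a routine structural fact about infix codes. The only subtlety is the bookkeeping at codeword boundaries in the $i < j$ case, where one must carefully allow the $L$- or $R$-part to be empty so that the ``full-codeword'' endpoints are absorbed into the $C_{\leq n}^*$-factor rather than left stranded in $C_{\leq n} \setminus (L_{\leq n} \cup R_{\leq n})$. Once this is kept straight, the decomposition goes through directly.
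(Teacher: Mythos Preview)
Your proposal is correct and follows essentially the same approach as the paper: both argue by locating the infix relative to codeword boundaries and splitting into the cases ``inside one codeword'' versus ``spanning several,'' with the paper packaging this as an induction on the number $m$ of codewords while you do a direct index-based case analysis. One minor remark: the infix-code property (uniqueness of the factorization of $W$) is not actually needed here --- any factorization of $W$ into codewords suffices for your argument, and indeed the paper's proof never invokes uniqueness.
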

\begin{proof}
	\modified{Let us first remark that $\varepsilon\in L_{\leq n}$ and $\varepsilon\in R_{\leq n}$.}
We show, by induction on $m$, that every infix $\alpha$ of a  word in $C_{\leq n}^m$ belongs to $F_{\leq n}\cup L_{\leq n}C_{\leq n}^*R_{\leq n}$.
	For $m=0$, the property is immediate as $\varepsilon \in \modified{F_{\leq n}}$ (and to $L_{\leq n}C_{\leq n}^*R_{\leq n}$). 
For $m=1$, if  $\alpha$ is an infix of $c \in C_{\leq n}$, then $c=x\alpha y$ for some words $x$ and $y$. 
If $x \neq \varepsilon$ and $y \neq \varepsilon$ then $\alpha$ belongs to $F_{\leq n}$. 
If $x \neq \varepsilon$ and $y = \varepsilon$ then $\alpha$ belongs $L_{\leq n}$. 
	If $x = \varepsilon$ and $y \neq \varepsilon$ then $\alpha$ belongs \modified{to $R_{\leq n}$}. 
	Finally\modified{,} if $x=y=\varepsilon$, then $\alpha = c \in C_{\leq n}\subseteq L_{\leq n}C_{\leq n}^*R_{\leq n}$.
For the induction step, let $\alpha$ be an infix of some $c_1 \cdots c_{m+1} \in C_{\leq n}^{m+1}$, where $m\geq 1$.
There are the following cases: either $\alpha$ is an infix of $c_1 \cdots c_{m}$  and we can conclude using the 
	induction hypothesis, or $\alpha$ is an infix of \modified{$c_{m+1}$} 
	and we can conclude using the case \modified{$m=1$}. Finally it remains the case where $\alpha=xy$ with $x$ a suffix of $c_1 \cdots \modified{c_m}$ and $y$ a prefix of \modified{$c_{m+1}$}.
Clearly $x$ belongs to $L_{\leq n} C_{\leq n}^*$ and $y$ belongs to $R_{\leq n} \cup C_{\leq n}$. This implies that 
$\alpha$ belongs to $L_{\leq n} C_{\leq n}^* R_{\leq n}$.
\end{proof}

This leads us to define a \intro{parse}  $p$ as a triple $(\ell,u,r)$ in $L \times \Sigma^* \times R$.  The word $u$ will be called the \intro{center} of the parse $p$.
The \intro{value} of the parse $(\ell,u,r)$ is the word $\ell \psi(u) r \in\{0,1\}^*$. 
We say that \intro{$\alpha$ admits $p$} if $\alpha$ is the value of $p$. 

By the above fact, for all coded words all of its infixes not belonging to $F$
have at least one parse. 
However, the parse is not necessarily unique. For instance, consider the infix  $\alpha=0000000000=0^{10}$ which appears in $\psi(0_1 0_1 0_1)$. It can be parsed as
\[
(\varepsilon,0_1 0_1,00) \;,\; (0,0_1 0_1,0) \;\textrm{ and as } \; (00,0_1 0_1,\varepsilon).
\] 

However, we will provide sufficient conditions on an infix to admit a unique parse. 

\begin{definition}
\label{def:simple}
	A word $\alpha \in\{0,1\}^*$ is \intro{simple} if either $|\alpha| < 11$, or $\alpha$ belongs to $F$ or $0^{10}$ is an infix of $\alpha$ or $1^{10}$ is an infix of $\alpha$.
\end{definition}

On the one hand, the term {``simple''} is justified in the context of this proof as simple infixes of $\cntb{i}{n}$ can 
\modified{rather easily shown to have} Zimin index at most $n-1$ for all $n \geq 4$ and all
$i\in[0,\max{n}-1]$ (cf. Lemma~\ref{lem:Zimin-simple}). 

\begin{lemma}
\label{lem:Zimin-simple}
	For all $n \geq 4$ \modified{and} all $i \in [0,\max{n}-1]$ \modified{every} simple infix of $\cntb{i}{n}$ has Zimin index at most $n-1$.
\end{lemma}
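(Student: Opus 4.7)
The plan is to handle each of the four clauses of the definition of \emph{simple} separately, and in each case establish a length bound on $\alpha$ so that the generic estimate $\Zimin{\alpha}\leq\lfloor\log_2(|\alpha|+1)\rfloor$ from Lemma~\ref{lemma:zimin-monotone} delivers $\Zimin{\alpha}\leq n-1$. The clause $|\alpha|<11$ is immediate, since $\lfloor\log_2 11\rfloor=3\leq n-1$ whenever $n\geq 4$.

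For the two clauses asserting that $\alpha$ contains $0^{10}$ or $1^{10}$ as an infix, I would show they are vacuous by bounding the longest monochromatic run appearing in $\cntb{i}{n}$ by $9$. The shapes $\psi(0_k)=00(01)^{k-1}00$ and $\psi(1_k)=11(01)^{k-1}11$ yield, inside a single codeword, leading $0$-runs of length at most $3$ and trailing $0$-runs of length at most $2$, with the single exception $\psi(0_1)=0000$ which is moreover the unique all-zero codeword in the image of $\psi$; symmetric numbers govern $1$-runs. A maximal $0$-run in $\cntb{i}{n}$ can therefore only be extended across at most one fully all-zero middle codeword, namely a single $\psi(0_1)$, flanked by two codewords $\psi(0_j)$ with $j\geq 2$, yielding at most $2+4+3=9$ consecutive zeros. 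The key structural ingredient is that two order-$1$ symbols are never adjacent in $\cnt{i}{n}$, which is easily established by induction on $n$ from the inductive definition of counters and which is precisely what forbids two consecutive $\psi(0_1)$'s in $\cntb{i}{n}$; the analogous reasoning gives at most $9$ consecutive ones.

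For the last clause $\alpha\in F$, I would argue $|\alpha|\leq 2n$. By Lemma~\ref{lem:characterization-infix-coded-word}, the infix $\alpha$ of $\cntb{i}{n}$ lies in $F_{\leq n}\cup L_{\leq n}C_{\leq n}^*R_{\leq n}$. If $\alpha\in F_{\leq n}$, then $\alpha$ is a strict infix of a codeword of order at most $n$, whose length is at most $|\psi(0_n)|=2n+2$, forcing $|\alpha|\leq 2n$. Otherwise $\alpha\in L_{\leq n}C_{\leq n}^*R_{\leq n}\subseteq LC^*R$, and Lemma~\ref{lem:observations} combined with $\alpha\in F$ forces $|\alpha|\leq 3$. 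Either way the generic bound yields $\Zimin{\alpha}\leq\lfloor\log_2(2n+1)\rfloor\leq n-1$, the last inequality being $2n+1<2^n$, which holds for all $n\geq 4$. The main technical obstacle is the careful bookkeeping of run lengths across codeword boundaries in the vacuous-cases argument, which rests entirely on the non-adjacency of order-$1$ symbols in $\cnt{i}{n}$.
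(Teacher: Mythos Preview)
Your proof is correct and follows essentially the same approach as the paper's own proof: rule out the $0^{10}$ and $1^{10}$ clauses as vacuous by bounding the maximal monochromatic run in $\cntb{i}{n}$ by $9$ (using that order-$1$ symbols are never adjacent in $\cnt{i}{n}$), handle $|\alpha|<11$ directly via Lemma~\ref{lemma:zimin-monotone}, and for $\alpha\in F$ split into $F_{\leq n}$ versus $L_{\leq n}C_{\leq n}^*R_{\leq n}\cap F$ using Lemmas~\ref{lem:characterization-infix-coded-word} and~\ref{lem:observations}. Your run-length accounting is in fact more explicit than the paper's, and your bound $|\alpha|\leq 2n$ in the $F_{\leq n}$ case is slightly sharper (the paper uses $|\alpha|\leq 2n+2$), but neither difference is material.
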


\begin{proof}
Let $n \geq 4$, let $i \in [0,\max{n}-1]$ and let $\alpha$ be a simple infix of $\cntb{i}{n}$.
It is easy to check that $\cnt{i}{n}$ does not contain two consecutive occurrences of $0_1$ or of $1_1$. 
It follows that $\cntb{i}{n}$ cannot contain more than 9 consecutive zeros or 9 consecutive ones. 
Hence $\alpha$ is simple either because it belongs to $F$ or because $|\alpha|<11$.

First consider the case where $|\alpha|<11$. We know, by Lemma~\ref{lemma:zimin-monotone}, that $\Zimin{\alpha} \leq \lfloor \log_2(11) \rfloor = 3 \leq n-1$.

\modified{
	Now consider the case where $\alpha$ belongs to $F$.  By Lemma~\ref{lem:characterization-infix-coded-word}, we have that $\alpha$ \modified{either} belongs to $F \cap F_{\leq n}= F_{\leq n}$ or 
to $F \cap L_{\leq n} C_{\leq m}^* R_{\leq n}$.

In the first case, $|\alpha|\leq 2n+2$ and {thus }by Lemma~\ref{lemma:zimin-monotone}, 
	$\Zimin{\alpha} \leq \lfloor \log_2(2n+\modified{2}) \rfloor \leq n-1$ {since} $n \geq 4$.

In the second case $\alpha$ belongs to $F \cap L_{\leq n} C_{\leq m}^* R_{\leq n} \subseteq L C^* R \cap F$. From Lemma~\ref{lem:observations}, we have $L C^* R \cap F=L R \cap F=\{0,1\}^{\leq 2} \cup \{001,100\}$. 
Hence in this case, we have $\Zimin{\alpha}\leq 1$.

  }

\end{proof}

{On the other hand, the term ``simple'' will be justified by the fact that for non-simple infixes there 
is exactly one possible parse as shown in the following lemma.}

\begin{lemma}
\label{lem:unique-parse}
Any non-simple infix of a coded word admits a unique parse.
\end{lemma}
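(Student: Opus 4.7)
The plan is to reduce uniqueness of parsing to uniqueness of the \emph{set of code boundaries}, and then to argue by a local pattern analysis at any hypothetical differing boundary. Concretely, to any parse $p=(\ell,u,r)$ of a non-simple $\alpha$ I associate the set $P(p)\subseteq\{0,1,\ldots,|\alpha|\}$ consisting of $|\ell|$, of $|\alpha|-|r|$, and of every intermediate position at which a new code of $u$ begins. If two parses $p_1=(\ell_1,u_1,r_1)$ and $p_2=(\ell_2,u_2,r_2)$ of $\alpha$ satisfy $P(p_1)=P(p_2)$, then in particular $\ell_1=\ell_2$ (equal as prefixes of $\alpha$ of the same length), and the inner code-images of $\psi(u_1)$ and $\psi(u_2)$ are aligned block by block. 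Since $\psi$ is injective on $\Sigma$---in fact a prefix code, because for $a\neq b$ either $\psi(a)$ and $\psi(b)$ start with different letters, or they are both of the shape $cc(01)^{k-1}cc$ and differ at the first position where the $(01)$-block of the shorter one ends---each aligned block forces the same symbol of $\Sigma$, so $u_1=u_2$ and hence $r_1=r_2$.

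So the heart of the argument is to show $P(p_1)=P(p_2)$ whenever $\alpha$ is non-simple. Suppose, for contradiction, that some position $q$ belongs to $P(p_1)\setminus P(p_2)$. Then in $p_1$ a new code starts at $q$ while in $p_2$ the position $q$ lies strictly inside some single code. The start of any code word in $C$ is $00$ or $11$, so $\alpha[q]\alpha[q+1]\in\{00,11\}$; symmetrically the two preceding characters, when available, either sit inside $\ell_1$ (a strict suffix of a single code, hence of one of finitely many fixed shapes) or form the last two characters of the previous code in $p_1$, which are again $00$ or $11$. Combined with the observation that inside a single code $\psi(0_n)$ or $\psi(1_n)$ the interior positions follow the alternating pattern $01\,01\cdots$ except at the two endpoints, these constraints pin the local shape of $\alpha$ around $q$ to contain a long monochromatic block---at least ten consecutive zeros or ten consecutive ones---unless $\alpha$ is shorter than $11$ or is a strict infix of a single code (hence in $F$). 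Each alternative contradicts $\alpha$ being non-simple.

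The main obstacle is the case bookkeeping: the exact shape of the alternating block inside $\psi(0_n)$ or $\psi(1_n)$ depends on the order $n$, so one must handle several sub-cases depending on the orders of the codes neighbouring $q$ in $p_1$ and $p_2$ and on the exact values of $|\ell_i|$ and $|r_i|$. Fortunately the sets $L$, $R$, $C$, $F$, and the language of simple words are all regular, and the property ``$\alpha$ admits at least two distinct parses'' is recognisable by a non-deterministic transducer of modest size. I would therefore verify the lemma by constructing the appropriate product automaton and checking that the language of words with at least two parses is included in the simple language using an automata tool such as Awali, exactly as the authors indicate.
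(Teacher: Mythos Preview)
Your fallback to an automata-based verification is correct and is exactly the route the paper takes in its short (journal) version: encode the parse relation as a rational relation, restrict its domain to non-simple words, and check functionality with Awali. So on that level your proposal and the paper agree.

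The part that does not yet constitute a proof is the combinatorial sketch preceding it. You assert that at a position $q\in P(p_1)\setminus P(p_2)$ the local constraints ``pin the local shape of $\alpha$ around $q$ to contain a long monochromatic block---at least ten consecutive zeros or ten consecutive ones''. This is the statement of the lemma in contrapositive form, not a derivation of it. Concretely, at a single differing boundary all you know from $p_1$ is that $\alpha[q]\alpha[q+1]\in\{00,11\}$ and (when defined) $\alpha[q-2]\alpha[q-1]\in\{00,11\}$; from $p_2$ you know $q$ sits strictly inside some code. Since the codes $00(01)^{n-1}00$ and $11(01)^{n-1}11$ do admit interior positions with two equal consecutive letters (near both ends), these constraints alone do not force a block of length ten. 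One needs to chase what happens at the next boundary, and the next, and so on --- which is precisely the ``case bookkeeping'' you defer.

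For comparison, the paper's explicit (long) proof does not argue locally at a differing boundary. Instead it first locates the \emph{leftmost} maximal monochromatic block $0^k$ or $1^k$ in $\alpha$ (non-simplicity guarantees $2\le k\le 9$ and $|\alpha|\ge 11$, and $\alpha\notin F$ guarantees such a block exists), writes $\alpha=x\,0^k\,y$ (or $x\,1^k\,y$) accordingly, and then shows by exhaustive case analysis on the candidate prefixes $\gamma\in L$ that $(x,k,\text{first two letters of }y)$ determines $\gamma$ uniquely. Once the $L$--part is fixed, uniqueness of the full parse follows from $\psi$ being an infix code. This global anchoring at the leftmost block is what replaces the boundary-chasing you would otherwise have to do.
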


\begin{proof}
\ifjournal  \modified{The existence of the parse is immediate from Lemma~\ref{lem:characterization-infix-coded-word}. 
We will show that unicity of the parse can be 
be reduced to testing the functionality of a certain rational relation (\ie, a relation accepted by a word transducer), a property  which can be decided in polynomial time \cite{BealCPS03}.

It is natural to represent a parse $p=(\ell,u=u_1\cdots u_k,r)$  by the word $w_p=\ell\sharp\psi(u_1)\sharp\cdots\psi(u_k)\sharp r\sharp$ over the alphabet $\{0,1,\sharp\}$. The set of all words representing a parse is the regular set $L\sharp(C\sharp)^*R\sharp$. If $\alpha$ admits a parse $p$ then
the morphism $\pi$ erasing the $\sharp$ symbol (\ie\  defined by $\pi(0)=0$, $\pi(1)=1$ and 
$\pi(\sharp)=\varepsilon$) maps a $w_p$ to $\alpha$. 

The relation $R_{\textrm{parse}}$ containing all pairs $(\alpha,w_p)$ such that $p$ is a parse of $\alpha$ is rational. Indeed $R_1$ is the restriction of the rational relation $\pi^{-1}$ to regular image $L\sharp(C\sharp)^*R\sharp$. As rational 
	relations are closed under restriction to a regular image (or domain), it follows that $R_{\textrm{parse}}$ is a rational relation.

To check the unicity of the parse, it is enough to show the $R_{\textrm{parse}}$ is functional when its domain is restricted to non-simple words. By Definition~\ref{def:simple}, the set of simple words is a regular set and hence so is the set of non-simple words. As rational relation are closed under restriction to a regular domain, we can construct a transducer for $R_{\textrm{parse}}$ restricted to non-simple words and check its functionality with an automata framework.

A direct but rather tedious proof of this statement can be found in \cite{arxiv}.}

\else
The existence of the parse is immediate from Lemma~\ref{lem:characterization-infix-coded-word}. 
To prove unicity of the parse, we need an intermediate claim.

\textbf{Claim. }Any non-simple infix of a coded word can be uniquely written as $\gamma\delta$ with $\gamma \in L$ and $\delta \in C^*R$. 

\medskip

\emph{Proof of the claim. }
Let $\alpha$ be a non-simple infix of a coded word. 
The existence of the decomposition is immediate as $\alpha$ belongs to $LC^*R$ (by Lemma~\ref{lem:characterization-infix-coded-word}). 
As $\alpha$ is not in $F$ and $F$ contains all words over $\{0,1\}$ that neither contain $00$ nor $11$ as an infix, it follows that $\alpha$ must contain an infix of the form $0^k$ or
$1^k$ for some $k \geq 2$. 
As $w$ is non-simple $k$ can be at most $9$ {and $|\alpha|\geq 11$}. 
By considering the left-most occurrence of maximal size of such an infix  we have that $\alpha$ can uniquely be written as:
\begin{description}
	\item[Case 1] either $\alpha=x 0^k y$ for some $k \in [2,9]$, $x= \{ \varepsilon \} \cup  \{0 ,\varepsilon\} \cdot (1 0)^* 1$ and some $y \in 1 \{ 0,1\}^* \cup \{ \varepsilon \}$.
	\item[Case 2] or $\alpha= x 1^k y$ for some $k \in [2,9]$, $x= \{ \varepsilon \} \cup  \{1 ,\varepsilon\} \cdot (0 1)^* 0$ and some $y \in 0 \{ 0,1\}^* \cup \{ \varepsilon \}$.
\end{description}

We distinguish these two cases.\\

\noindent
\textbf{Assume that we are in Case~1.} 
We have $\alpha=x 0^k y$ for some $x= \{ \varepsilon \} \cup  \{0 ,\varepsilon\} \cdot (1 0)^* 1$ and some $y \in 1 \{ 0,1\}^* \cup \{ \varepsilon \}$. Furthermore assume that $\alpha$ can be written as $\gamma\delta$ for $\gamma \in L$ and $\delta \in C^*R$.

As $\gamma\in L$ and {$\gamma$} is a prefix of $\alpha$ (whose first occurrence of two consecutive letters is $00$), $\gamma$ can only be in
$\{\varepsilon,0,1,00,100,000\} \cup \{1,0,\varepsilon\}(01)^+00$.
For each possible value of $\gamma$, we are going to provide necessary conditions on $x$,$y$ and $k$ for 
$\delta$ to exist. {In fact, we will show that the triple $(x,y,k)$ is uniquely determined by $\gamma$,
hence we will have established the unicity of the decomposition in Case~1.}

\textbf{Case $\gamma=\varepsilon$.} For a suitable $\delta$ to exist,  
$\alpha$ must belong to $C^*R$. In particular, as $|\alpha| \geq 2$, it must start with two occurrences of the same letter. This is only possible if $x=\varepsilon$, so we have $\alpha=\delta=0^k y$. Moreover as $k \leq 9$ and $|\alpha| \geq 11$, it must be the case that $|y|\geq 2$.

We proceed by case distinction on $k \in [2,9]$. The cases $k=2,5,6$ and $9$ can be excluded as $0^2 1$,$0^5 1$, $0^6 1$ and $0^9 1$ are all not the prefix of any word in $C^*R$. 
For $k=3$, $y$ must start\footnote{Recall that $y$ as length at least 2 and starts with 1.} with $10$ as $00011$ is not the prefix of any word in $C^*R$. For $k=4$, $y$ must start with $11$ as
$000010$ is not the prefix of any word in $C^*R$. For $k=7$, $y$ must start with $10$ as $000000011$ is not the prefix of any word in $C^*R$. For $k=8$, $y$ must start with $11$ as $0000000010$ is not the prefix of any word in $C^*R$.

These conditions are summarised in the equation below. In the following cases, we will only provide the summary of the conditions as their proof is similar. 
{Recall that $\sqsubseteq$ denotes the prefix relation on words.}

\[
\gamma=\varepsilon \quad \Rightarrow \quad 
\left\{ \begin{array}{ccccccc}
 & x=\varepsilon & \textrm{and}& k=3 & \textrm{and} & 10 \sqsubseteq y \\ 	
\textrm{or} & x=\varepsilon & \textrm{and} & k=4 & \textrm{and} &11 \sqsubseteq y \\ 	
\textrm{or} & x=\varepsilon & \textrm{and}& k=7 & \textrm{and} &10 \sqsubseteq y \\ 	
\textrm{or} & x=\varepsilon &\textrm{and} & k=8 & \textrm{and}& 11 \sqsubseteq y \\ 	
\end{array}
\right.
\]

\textbf{Case $\gamma=1$.} 
It must the case that $x=1$ and that $k$ belongs to $\{3,4,7,8\}$. Hence
as $|\alpha|\geq 11$, $y$ must have length at least $2$.

\[
\gamma=1 \quad \Rightarrow \quad 
\left\{ \begin{array}{ccccccc}
 & x=1 & \textrm{and}& k=3 & \textrm{and} & 10 \sqsubseteq y \\ 	
\textrm{or} & x=1 & \textrm{and} & k=4 & \textrm{and} &11 \sqsubseteq y \\ 	
\textrm{or} & x=1 & \textrm{and}& k=7 & \textrm{and} &10 \sqsubseteq y \\ 	
\textrm{or} & x=1 & \textrm{and} & k=8 & \textrm{and}& 11 \sqsubseteq y \\ 	
\end{array}
\right.
\]

\textbf{Case $\gamma=0$.} 

\[
\gamma=0 \quad \Rightarrow \quad 
\left\{ \begin{array}{ccccccc}
 & x=\varepsilon & \textrm{and}& k=4 & \textrm{and} & 10 \sqsubseteq y \\ 	
\textrm{or} & x=\varepsilon & \textrm{and} & k=5 & \textrm{and} &11 \sqsubseteq y \\ 	
\textrm{or} & x=\varepsilon & \textrm{and}& k=8 & \textrm{and} &10 \sqsubseteq y \\ 	
\textrm{or} & x=\varepsilon & \textrm{and} & k=9 & \textrm{and}& 11 \sqsubseteq y \\ 	
\end{array}
\right.
\]

\textbf{Case $\gamma=00$.} 

\[
\gamma=00 \quad \Rightarrow \quad 
\left\{ \begin{array}{ccccccc}
& x=\varepsilon & \textrm{and}& k=2 & \textrm{and} & 11 \sqsubseteq y \\
\textrm{or} & x=\varepsilon & \textrm{and}& k=5 & \textrm{and} & 10 \sqsubseteq y \\ 	
\textrm{or} & x=\varepsilon & \textrm{and} & k=6 & \textrm{and} &11 \sqsubseteq y \\ 	
\textrm{or} & x=\varepsilon & \textrm{and}& k=9 & \textrm{and} &10 \sqsubseteq y \\ 	
\end{array}
\right.
\]

\textbf{Case $\gamma=100$.} 

\[
\gamma=100 \quad \Rightarrow \quad 
\left\{ \begin{array}{ccccccc}
& x=1 & \textrm{and}& k=2 & \textrm{and} & 11 \sqsubseteq y \\
\textrm{or} & x=1 & \textrm{and}& k=5 & \textrm{and} & 10 \sqsubseteq y \\ 	
\textrm{or} & x=1 & \textrm{and} & k=6 & \textrm{and} &11 \sqsubseteq y \\ 	
\textrm{or} & x=1 & \textrm{and}& k=9 & \textrm{and} & {1} \sqsubseteq y \\ 	
\end{array}
\right.
\]
\textbf{Case $\gamma=000$.} 

\[
\gamma=000 \quad \Rightarrow \quad 
\left\{ \begin{array}{ccccccc}
& x=\varepsilon & \textrm{and}& k=3 & \textrm{and} & 11 \sqsubseteq y \\
\textrm{or} & x=\varepsilon & \textrm{and}& k=6 & \textrm{and} & 10 \sqsubseteq y \\ 	
\textrm{or} & x=\varepsilon & \textrm{and} & k=7 & \textrm{and} &11 \sqsubseteq y \\ 	
\end{array}
\right.
\]

\textbf{Case $\gamma \in \{1,\varepsilon\}(01)^+00$.} 
\[
{\gamma \in \{1,\varepsilon\}(01)^+00  \quad \Rightarrow \quad 
x00=\gamma\quad\Rightarrow\quad x\in\{1,\varepsilon\}(01)^+\;\; \text{and} \;\; k\in\{2,5,6,9\}}
\]



\textbf{Case $\gamma \in 0 (01)^+ 00$.}
\[
\gamma \in \{1,\varepsilon\}(01)^+00  \quad \Rightarrow \quad 
{x=\varepsilon\quad\textrm{and}\quad k=2 \quad\textrm{and}\quad 10 \sqsubseteq y }
\]

\noindent
It is easy to check that all the cases are mutually exclusive.\\

\noindent
\textbf{Assume that we are in Case~2.} 
We have $\alpha=x 1^k y$ for some $k \in [2,9]$, $x= \{ \varepsilon \} \cup  \{1 ,\varepsilon\} (0 1)^* 0$ and some $y \in 0 \{ 0,1\}^* \cup \{ \varepsilon \}$. 
Furthermore assume that $\alpha$ can be written as $\gamma\delta$ for $\gamma \in L$ and $\delta \in C^*R$.

As $\gamma$ is a prefix of $\alpha$ and belongs to $L$, it can only be in $\{\varepsilon,0,1,11,111\} \cup \{1,\varepsilon\}(01)^+11$.
For each possible value of $\gamma$, we are going to provide necessary conditions on $x$,$y$ and $k$ for $\delta$ to exist. 
{As in Case~1 we will show that the triple $(x,y,k)$ is uniquely determined by $\gamma$,
hence we will have established the unicity of the decomposition in Case~2.}


\textbf{Case $\gamma=\varepsilon$.}
%
%

\[
\gamma=\varepsilon \quad \Rightarrow \quad 
\left\{ \begin{array}{ccccccc}
 & x=\varepsilon & \textrm{and}& k=2 & \textrm{and} & 01 \sqsubseteq y \\ 	
\textrm{or} & x=\varepsilon & \textrm{and} & k=4 & \textrm{and} & 00 \sqsubseteq y \\ 	
\textrm{or} & x=\varepsilon & \textrm{and}& k=6 & \textrm{and} &01 \sqsubseteq y \\ 	
\textrm{or} & x=\varepsilon &\textrm{and} & k=8 & \textrm{and}& 00 \sqsubseteq y \\ 	
\end{array}
\right.
\]

\textbf{Case $\gamma=0$.} 

\[
\gamma=0 \quad \Rightarrow \quad 
\left\{ \begin{array}{ccccccc}
 & x=0 & \textrm{and}& k=2 & \textrm{and} & 01 \sqsubseteq y \\ 	
\textrm{or} & x=0 & \textrm{and} & k=4 & \textrm{and} & 00 \sqsubseteq y \\ 	
\textrm{or} & x=0 & \textrm{and}& k=6 & \textrm{and} &01 \sqsubseteq y \\ 	
\textrm{or} & x=0 &\textrm{and} & k=8 & \textrm{and}& 00 \sqsubseteq y \\ 	
\end{array}
\right.
\]

\textbf{Case $\gamma=1$.} 

\[
\gamma=1 \quad \Rightarrow \quad 
\left\{ \begin{array}{ccccccc}
 & x=\varepsilon & \textrm{and}& k=3 & \textrm{and} & 01 \sqsubseteq y \\ 	
\textrm{or} & x=\varepsilon & \textrm{and} & k=5 & \textrm{and} & 00 \sqsubseteq y \\ 	
\textrm{or} & x=\varepsilon & \textrm{and}& k=7 & \textrm{and} &01 \sqsubseteq y \\ 	
\textrm{or} & x=\varepsilon &\textrm{and} & k=9 & \textrm{and}& 00 \sqsubseteq y \\ 	
\end{array}
\right.
\]

\textbf{Case $\gamma=11$.} 

\[
\gamma=11 \quad \Rightarrow \quad 
\left\{ \begin{array}{ccccccc}
 & x=\varepsilon & \textrm{and}& k=2 & \textrm{and} & 00 \sqsubseteq y \\ 	
\textrm{or} & x=\varepsilon & \textrm{and}& k=4 & \textrm{and} & 01 \sqsubseteq y \\ 	
\textrm{or} & x=\varepsilon & \textrm{and} & k=6 & \textrm{and} & 00 \sqsubseteq y \\ 	
\textrm{or} & x=\varepsilon & \textrm{and}& k=8 & \textrm{and} &01 \sqsubseteq y \\ 	
\end{array}
\right.
\]

\textbf{Case $\gamma=111$.} 

\[
\gamma=111 \quad \Rightarrow \quad 
\left\{ \begin{array}{ccccccc}
& x=\varepsilon & \textrm{and}& k=3 & \textrm{and} & 00 \sqsubseteq y \\ 
\textrm{or} & x=\varepsilon & \textrm{and}& k=5 & \textrm{and} & 01 \sqsubseteq y \\ 	
\textrm{or} & x=\varepsilon & \textrm{and} & k=7 & \textrm{and} & 00 \sqsubseteq y \\ 	
\textrm{or} & x=\varepsilon & \textrm{and}& k=9 & \textrm{and} &01 \sqsubseteq y \\ 	
\end{array}
\right.
\]

\textbf{Case $\gamma \in \{1,\varepsilon\}(01)^+11$.} 
\[
\gamma \in \{1,\varepsilon\}(01)^+11  \;\Rightarrow \;
{x111=\gamma},\;x\not=\varepsilon\;\textrm{and}\;
\left\{ \begin{array}{ccccccc}
 &  & & k=3 & \textrm{and} & 00 \sqsubseteq y  \quad\textrm{if $|y|\geq 2$}\\
 \textrm{or}&  & & k=5 & \textrm{and} & 01 \sqsubseteq y  \quad\textrm{if $|y|\geq 2$}\\ 	
\textrm{or} & &  & k=7 & \textrm{and} & 00 \sqsubseteq y \quad\textrm{if $|y|\geq 2$}\\ 	
\textrm{or} &  & & k=9 & \textrm{and} &01 \sqsubseteq y \quad \textrm{if $|y|\geq 2$}\\ 		
\end{array}
\right.
\]


\noindent
It is easy to check that all the cases are mutually exclusive. 

\emph{End of the proof of the claim.}

\noindent
We can now prove that parses are unique. Let $p_1=(\ell_1,u_1,r_1)$ and $p_2=(\ell_2,u_2,r_2)$ be two parses of $\alpha$. 
As $(\ell_1,\psi(u_1)r_1)$ and $(\ell_2,\psi(u_2) r_2)$ are both decompositions of $\alpha$ in $L\times  C^*R$, we have $\ell_1=\ell_2$ by the previous claim  and thus $\psi(u_1)r_1=\psi(u_2)r_2$.
It suffices to show $u_1=u_2$ since this implies $r_1=r_2$. 
Towards a contradiction, assume $u_1\not= u_2$.
Let $x$ be the longest common prefix of $u_1$ and $u_2$. Thus, $u_1=xv_1$ and $u_2=xv_2$ for some words $v_1,v_2$ of which at least one is non-empty. 
Without loss of generality assume $v_1\not=\varepsilon$, in particular $v_1=av_1'$ for some letter $a$.
In case $v_2\not=\varepsilon$, then $v_2=bv_2'$ for some letter $b$ with $a\not=b$ and hence either $\psi(a)\sqsubseteq\psi(b)$ or $\psi(b)\sqsubseteq\psi(a)$, 
thus contradicting that $\psi$ is an infix code.
In case $v_2=\varepsilon$ it follows $r_2=\psi(v_1)r_1=\psi(av_1')r_1$, 
which implies that $\psi(a)$ is prefix of $r_2\in R$, again
contradicting that $\psi$ is an infix code.
\fi
\end{proof}


Thus, we will refer to the unique parse of a non-simple infix $\alpha$ 
of a coded word as {\em the parse of $\alpha$}.

The next lemma states that occurrences of codings of symbols of order
strictly larger than one in a coded word can be related with occurrences of this 
symbol in the word that has been coded.

\begin{lemma}
\label{lemma:occurences-letter-parse}
Let $\alpha$ be a non-simple infix {of }some coded word {and let $(\ell,u,r)$ be its parse}.

If $\alpha$ contains $n>1$ occurrences of $\psi(x)$ for some $x \in \Sigma \setminus \Sigma_1$ then $u$ contains $n$ occurrences of $x$.
\end{lemma}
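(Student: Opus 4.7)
The plan is to show that $\psi$ is ``infix-code-like'' for full letters of order at least $2$: inside any coded word $\psi(w)=\psi(w_1)\cdots\psi(w_m)$, every occurrence of $\psi(x)$ (with $x\in\Sigma\setminus\Sigma_1$) must align with a letter boundary of $w$ and coincide with some $\psi(w_j)$ for which $w_j=x$. Once this auxiliary claim is established, the lemma follows by elementary bookkeeping. Indeed, $\alpha$ is an infix of some coded word $\psi(w)$ and by uniqueness of the parse (Lemma~\ref{lem:unique-parse}) the letters of $w$ whose entire code $\psi(w_j)$ fits inside $\alpha$'s occurrence are exactly the letters of $u$; the two border letters contribute only the \emph{strict} suffix $\ell$ and the \emph{strict} prefix $r$ of their codes, so their full codes do not fit inside $\alpha$. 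Every occurrence of $\psi(x)$ in $\alpha$ therefore arises from a unique letter $w_j=x$ belonging to $u$, and conversely each occurrence of $x$ in $u$ yields one occurrence of $\psi(x)$ inside $\psi(u)\subseteq\alpha$.

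The substantive work is in the auxiliary claim. Write $\psi(x)=aa(01)^{k-1}aa$ with $a\in\{0,1\}$ and $k=\mathrm{order}(x)\geq 2$. The key observation is that the characteristic prefix $aa01$ of $\psi(x)$ can only occur at the start of some $\psi(w_j)$ whose letter $w_j$ has type $a$ and order at least $2$. To verify this, I would assume $\psi(x)$ starts at offset $o\in[0,|\psi(w_j)|)$ inside $\psi(w_j)=bb(01)^{k_j-1}bb$ and carry out a case analysis on $o$. All offsets $o\geq 1$ are ruled out: at each of the few positions in $\psi(w_j)$ (or at the junction with $\psi(w_{j+1})$) where two consecutive equal letters appear, matching the subsequent characters of $\psi(x)$ against the continuation of $\psi(w)$ leads to a contradiction, essentially because $\psi(w_{j+1})$ begins with a doubled letter that is incompatible with the alternating block $(01)^{k-1}$ of $\psi(x)$. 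This forces $o=0$; then a short comparison of orders (splitting into $k_j<k$, $k_j=k$ and $k_j>k$) together with matching the trailing $aa$ of $\psi(x)$ pins down $k_j=k$ and $b=a$, so that $\psi(w_j)=\psi(x)$ and by injectivity of $\psi$ we get $w_j=x$.

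The main obstacle is the case analysis underlying the auxiliary claim, and in particular the exceptional behaviour of the order-$1$ codes $\psi(0_1)=0000$ and $\psi(1_1)=1111$: internally these contain triples $000$ or $111$, which create near-occurrences of the pattern $aa01$ at junctions with an appropriate next code. Such near-occurrences have to be explicitly excluded by continuing the matching into the next code of $\psi(w)$, where the alternating middle part $(01)^{k-1}$ of $\psi(x)$ disagrees with the doubled initial characters of $\psi(w_{j+1})$. Once these boundary cases are handled the rest of the argument is just careful bookkeeping of positions and code lengths.
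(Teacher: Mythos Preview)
Your overall plan coincides with the paper's: first establish that any occurrence of $\psi(x)$ with $\mathrm{order}(x)\geq 2$ inside a coded word must align with a full block $\psi(w_j)$ (and then $w_j=x$), and second deduce the lemma by identifying the blocks lying entirely inside a fixed occurrence of $\alpha$ with the letters of $u$ via uniqueness of the parse. The bookkeeping part is exactly as in the paper.

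Where you differ is in the proof of the alignment claim. You propose a direct case analysis on the offset $o$ of $\psi(x)$ inside a block $\psi(w_j)=bb(01)^{k_j-1}bb$, exploiting the characteristic prefix $aa01$ and handling the order-$1$ boundary cases by hand. The paper avoids this analysis entirely: if $\psi(x)$ does not start at a block boundary, then $\psi(x)$ can be written as $\ell' c_1\cdots c_k r'$ with $\ell'\in L\setminus\{\varepsilon\}$, $c_i\in C$, $r'\in R$; since $C$ is an infix code one must have $k=0$, so $\psi(x)\in LR\cap C$; and by Lemma~\ref{lem:observations} this intersection equals $\{0000,1111\}$, contradicting $\mathrm{order}(x)\geq 2$. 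The same two-line argument also rules out $\psi(x)$ sitting inside $\ell$ or $r$ or straddling their boundary. Thus your case analysis is a hands-on rederivation of what the infix-code property together with the precomputed identity $LR\cap C=\{0000,1111\}$ give for free; both routes are correct, but the paper's is considerably shorter and does not require singling out the order-$1$ junctions.
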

\begin{proof}
Let $\alpha$ a non-simple infix of some coded word and let $p=(\ell,u,r)$ be its parse. Let $x$ be a letter in $\Sigma \setminus \Sigma_1$ such that 
$\alpha$ contains $n>1$ occurrences of $\psi(x)$.

By definition of the parse $p$, we have $\alpha = \ell \psi(u) r$.
If we write $u=u_1 \cdots u_m$ with $m \geq 0$ and $u_i \in \Sigma$ for all $i \in [1,m]$, we can write $\alpha$ as follows,
\[
 \alpha =\alpha_0 \alpha_1 \cdots \alpha_m \alpha_{m+1},
\] 
where
\begin{itemize}
	\item $\alpha_0=\ell$,
	\item $\alpha_i= \psi(u_i)$ for all $i \in [m]$,
	\item and $\alpha_{m+1}=r$.
\end{itemize}

Let $m_1<\cdots<m_n$ be an enumeration of the $n$ occurrences of $\psi(x)$ in
$\alpha$. 
For all $i \in {[n]}$, we denote by $q_i$ the maximal integer satisfying
$m_i \geq \sum_{j=0}^{q_i-1} |\alpha_j|$.  

\textbf{Claim. }For all $i \in [n]$ we have $0<q_i\leq m$ and $m_i=\sum_{j=0}^{q_i-1} |\alpha_j|$.

\noindent 
\emph{Proof of the claim.}  Let $i \in [1,n]$.

{Let us first show that $q_i\not=0$.}
Assume towards a contradiction that $q_i=0$. By maximality of $q_i$, $\alpha_0=\ell$ cannot be empty. This implies that $\psi(x)$ can be written as $\ell' \alpha_1 \cdots \alpha_k r'$ with $\ell'$ a non-empty suffix of $\ell$, $k\geq 0$ and $r'$ a prefix of $\alpha_{k+1}$. As $C$ is an infix code, $k$ is necessarily equal to $0$. Hence 
$\psi(x)=\ell'r'$. In particular $\psi(x) \in LR \cap C$. 
{In Lemma~\ref{lem:observations}, we remarked that} $LR \cap C = \{0000,1111\}${,} which brings a contradiction with the fact that 
$x\in\Sigma\setminus\Sigma_1$.

{Let us next show $q_i\not=m+1$.}
Assume towards a contradiction that $q_i=m+1$. In this case, $r \in R$ would contain $\psi(x)$ as an infix which contradicts the fact that $C$ is an infix code.

{Let us finally show $m_i\leq \sum_{j=0}^{q_i-1} |\alpha_j|$ (and thus $m_i= \sum_{j=0}^{q_i-1} |\alpha_j|$).}
Assume towards a contradiction that $m_i >\sum_{j=0}^{q_i-1} |\alpha_j|$. By definition of $q_i$, this implies that $\psi(x)$ can be written as $\ell' \alpha_{q_i+1} \cdots \alpha_{q_i+k} r'$ with $\ell'$ a non-empty suffix of $\alpha_{q_i}$, $k\geq 0$ and $r'$ a prefix of $\alpha_{q_{i}+k+1}$. As $C$ is an infix code, $k$ is necessarily equal to $0$. Hence 
$\psi(x)=\ell'r'$. In particular $\psi(x) \in LR \cap C$. 
{As above,  we have from Lemma~\ref{lem:observations}, that} $LR \cap C = \{0000,1111\}${,} which brings a contradiction with the fact that $x$ is not of order $1$.

\noindent
\emph{End of the proof of the claim.}

Using the claim, it follows that either $\psi(x)$ is a prefix of $\psi(u_{q_i})$ 
or conversely that $\psi(u_{q_i})$ is a prefix of $\psi(x)$. 
{As} $C$ is an infix code, this is only possible if $\psi(x)=\psi(u_{q_i})$ 
and hence $u_{q_i}=x$.

Again using the claim, we have that $q_1<\cdots<q_n$ (as $m_1<\cdots<m_n$). 
Hence we have shown that $u$ contains at least $n$ occurrences of $x$. 
Clearly, $u$ cannot contain more than $n$ occurrences of $x$ as each occurrence of $x$ in $u$ induces an occurrence of $\psi(x)$ in $\alpha$.
\end{proof}

Let $w=w_0 \cdots w_{|w|-1} \in \Sigma^*$ and $p=(\ell,u=u_0\cdots u_{|u|-1} ,r)$ 
be a parse, an {\em occurence of $p$ in $w$} is an occurrence $m$ of $u$ in $w$ such that whenever $\ell$ is non-empty we have $m \neq 0$ and 
$\ell$ is {a} suffix of $\psi(w_{m-1})$ and similarly whenever $r$ is non-empty
we have $m + |u| < |w|$ and $r$ is a prefix of $\psi(w_{m+|u|})$.

{
\begin{remark}
	In the previous lemma, the requirement that the order of the symbol is strictly greater than 1 is necessary. For instance consider the coded word $w=\psi(0_20_3)=00010000010100$. If we take $\alpha$ to be $w$ which is non-simple, $\alpha$ contains $\psi(0_1)=0000$ as an infix but $0_1$ does not occur in its unique parse of $(\varepsilon,0_2 0_3,\varepsilon)$. 
\end{remark}
}

{The next lemma shows that for} a word $w \in \Sigma^*$ and a non-simple infix $\alpha$ of $\psi(w)$, there is 
a one-to-one correspondence between the occurrences of $\alpha$ in $\psi(w)$ and the occurrences of its parse $p_\alpha$ in $\psi(w)$.

\begin{lemma}
\label{lem:bijection_occurrences}
For any word $w \in \Sigma^*$ and any non-simple infix $\alpha$ of $\psi(w)$, there is a unique order-preserving bijection between the occurrences of $\alpha$ in $\psi(w)$ and the occurrences of its parse $p$ in $w$.	
\end{lemma}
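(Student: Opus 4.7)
The plan is to define explicit maps in both directions and invoke the uniqueness of the parse (Lemma~\ref{lem:unique-parse}) to show they are inverse to each other. In the forward direction, given an occurrence $m$ of $p = (\ell, u, r)$ in $w$, the definition of an occurrence of a parse guarantees that $\alpha = \ell \psi(u) r$ occurs in $\psi(w)$ at the uniquely determined position
\[
\Phi(m) \;=\; \Bigl(\sum_{j=0}^{m-1} |\psi(w_j)|\Bigr) - |\ell|,
\]
and this $\Phi$ is manifestly strictly increasing in $m$.

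For the backward direction, I would start from an occurrence of $\alpha$ in $\psi(w)$ at some position $q$ and read off a candidate parse by aligning $\alpha$ with the block factorization $\psi(w) = \psi(w_0) \cdots \psi(w_{|w|-1})$. Since $C$ is an infix code, this factorization is unambiguous, and there is a unique way to split the portion of $\psi(w)$ occupied by this occurrence of $\alpha$ as $\ell' \cdot \psi(w_{j+1} \cdots w_{j+k}) \cdot r'$, where $j$ is the index of the block containing the first letter of $\alpha$, $\ell'$ is a (possibly empty) strict suffix of $\psi(w_j)$, and $r'$ is a (possibly empty) strict prefix of $\psi(w_{j+k+1})$. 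Setting $u' = w_{j+1} \cdots w_{j+k}$ yields a triple $(\ell', u', r') \in L \times \Sigma^* \times R$ which is, by construction, a parse of $\alpha$.

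Since $\alpha$ is non-simple, Lemma~\ref{lem:unique-parse} forces $(\ell', u', r') = (\ell, u, r) = p$. In particular $u' = u$, so $j+1$ is an occurrence of $p$ in $w$, and I set $\Psi(q) = j+1$. By construction $\Phi$ and $\Psi$ invert each other, and both preserve order because they track positions monotonically along the block decomposition of $\psi(w)$. The main bookkeeping obstacle, which I would dispose of by case inspection, will be the degenerate situations where $\ell$, $r$, or $u$ is empty, in which the corresponding piece of the decomposition above is absent; in each such case the argument goes through with only trivial modification, using the convention that empty prefixes and suffixes align freely at block boundaries.
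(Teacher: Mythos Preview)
Your approach is essentially the same as the paper's: define the forward map $\Phi$ (the paper calls it $\rho$), and for the other direction align an occurrence of $\alpha$ with the block decomposition $\psi(w_0)\cdots\psi(w_{|w|-1})$ to read off a parse $(\ell',u',r')$, then invoke Lemma~\ref{lem:unique-parse} to force $(\ell',u',r')=p$. The paper frames this second step as proving surjectivity of $\rho$ rather than building an explicit inverse $\Psi$, but the argument is identical in substance. One small bookkeeping slip: when $\alpha$ begins exactly at a block boundary your description of $j$ as ``the index of the block containing the first letter of $\alpha$'' is inconsistent with taking $\ell'$ a strict suffix of $\psi(w_j)$ and $u'=w_{j+1}\cdots w_{j+k}$; you should let $\ell'$ be empty and have $u'$ start at that block, which is precisely the boundary case you flag at the end.
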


\begin{proof}
Let $w=w_0w_1 \cdots w_{n-1}$, $n\geq1$ a non-empty word over $\Sigma$ and let $\alpha$ be a non-simple infix of the word $\psi(w)$. Consider the unique parse $p=(\ell,u,r)$ of the infix $\alpha$. To each occurrence $m$ of the parse $p$ in $w$, we associate
the occurrence $\rho(m)=(\sum_{i=0}^{m-1} |\psi(w_i)|) - |\ell|$ of the word $\alpha$ in $\psi(w)$. The mapping $\rho$, from the set of occurrences of $p$ in $w$ to the set of occurrences of $\alpha$ in $\psi(w)$, is order-preserving and injective. It remains to show that it is surjective.

Let $h$ be an occurrence of $\alpha$ in $\psi(w)$. By definition, there exist 
two words $x,y \in \{0,1\}^*$ such that $\psi(w)=x \alpha y$ and $|x|=h$. Consider the greatest integer $m_0 \in [0,n-1]$ such that
\[
h \geq \sum_{i=0}^{m_0-1} |\psi(w_i)|{.}
\]

We will show that $m_0$ is an occurrence of the parse $p$ in $w$ and hence 
that $\tau(m_0)=h$. 
Remarking that $\psi(w)$ is equal to both 
$x \alpha y$ and $\psi(w_0) \cdots \psi(w_{n-1})$, there must exist
$k \geq 0$ such that $\alpha = \ell' \psi(w_{m_0}) \cdots \psi(w_{m_0+k-1}) r'$,
where
\begin{itemize}
\item $\ell'$ is empty if $h= \sum_{i=0}^{m_0-1} |\psi(w_i)|$ and 
 is the suffix of length $h - \sum_{i=0}^{m_0-1} |\psi(w_i)|$ of $\psi(w_{m_0-1})$
 otherwise, 
 \item and $r'$ is empty if $m_0+k-1=n-1$ and a prefix of 
 $\psi(w_{m_0+k})$ otherwise. 
\end{itemize}
It follows that $(\ell',w_{m_0} w_{m_0+1} \cdots w_{m_0+k-1},r')$ is a parse of $\alpha$ occurring at $m_0$ in $w$. 
The lemma now follows from the unicity of the parse.
\end{proof}

\begin{definition}
	For an \modified{occurrence} $m$ of a parse $p=(\ell,u,r)$ in $w$, we define its 
{\emph{context}} $[p]_m$ as the word in $\Sigma^*$ equal to $w[m-\delta_0,m+|u|+\delta_1]$ where $\delta_0=0$ if $\ell=\varepsilon$ and $\delta_0=1$ otherwise and $\delta_1=0$ if $r=\varepsilon$ and $\delta_1=1$ otherwise.
\end{definition}

By definition{, } the context $c$ of some occurrence of a parse $p=(\ell,u,r)$ in $w$ is an infix of $w$,
{that itself contains $u$ as an infix}. 
Moreover{, }the value $\alpha$ of $p$ is an infix of $\psi(c)$.

\subsection{Upper bound on the Zimin index}
\label{ssec:proof-non-binary}
{We are now ready to upper-bound the Zimin index of the code of higher-order counters. 
Due to the nature of our coding $\psi$ we need to prove a slightly stronger inductive statement
that takes into the account the code of a symbol of order $n+1$ directly before or directly after
the code of a counter of order $n$.}
\begin{theorem}
For all $n \geq 2$ and for all $i \in [0,\max{n}-1]$, 
\[
\begin{array}{lclclcl}
\Zimin{\cntb{i}{n} \psi(0_{n+1})} & \leq & n+1, &  \quad & \Zimin{\cntb{i}{n} \psi(1_{n+1})} & \leq & n+1, \\
\Zimin{\psi(0_{n+1}) \cntb{i}{n}}  & \leq & n+1, & & \Zimin{\psi(1_{n+1}) \cntb{i}{n}}  & \leq & n+1. \\ 
\end{array}
\]
\end{theorem}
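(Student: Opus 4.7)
We prove the theorem by induction on $n \geq 2$, mirroring the proof of Theorem~\ref{thm:zimin-non-binary-counters} but operating at the level of parses (Lemma~\ref{lem:unique-parse}) rather than directly on the counter. The base case $n = 2$ concerns only finitely many words of length at most a few dozen symbols and can be verified directly, in particular by computer.

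For the inductive step, assume the statement holds for $n$ and let $W$ be one of the four words at level $n+1$; the four subcases are symmetric, so we focus on $W = \cntb{i}{n+1}\psi(b_{n+2}) = \psi(w)$ with $w = \cnt{i}{n+1}\cdot b_{n+2}$. Observe that $w$ is an infix of some counter $\cnt{j}{n+2}$ (chosen so that its $i$-th digit is $b_{n+2}$), and hence $W$ is an infix of $\cntb{j}{n+2}$. By Fact~\ref{fact:ZiminType-inductive} it suffices, for every infix $\alpha\beta\alpha$ of $W$ with $\alpha,\beta$ non-empty, to show $\ZiminType{\alpha} \leq n+1$. When $\alpha$ is simple, Lemma~\ref{lem:Zimin-simple} applied to $\cntb{j}{n+2}$ (whose order $n+2 \geq 4$ meets the hypothesis of that lemma) gives $\Zimin{\alpha} \leq n+1$. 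Otherwise $\alpha$ is non-simple, so Lemma~\ref{lem:unique-parse} yields a unique parse $p = (\ell, u, r)$ of $\alpha$, and Lemma~\ref{lem:bijection_occurrences} lifts the two occurrences of $\alpha$ in the copy of $\alpha\beta\alpha \subseteq W$ to two occurrences of $u$ in $w$.

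We then perform the case analysis of Theorem~\ref{thm:zimin-non-binary-counters}, but on $u$ in $w$ in lieu of $\alpha$ in $\cnt{0}{n+1}$. If $u$ contains no symbol of order $\geq n+1$, then $u$ sits inside a single block $\cnt{j'}{n}$ of $\cnt{i}{n+1}$; the context $[p]_m$ around any occurrence extends $u$ by at most one symbol on each side, so $\alpha$ is an infix of $\cntb{j'}{n}$, of $\psi(s)\cntb{j'}{n}$, or of $\cntb{j'}{n}\psi(s)$ for some order-$(n+1)$ symbol $s$, each bounded by induction hypothesis. The only remaining subcase, where $u$ spans all of $\cnt{j'}{n}$ and the context extends on both sides, is ruled out because then $u$ would have a unique occurrence in $w$ by Lemma~\ref{lemma:basics-counters}(3), contradicting the existence of two. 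If $u$ contains at least two occurrences of order-$\geq n+1$ symbols, then $u$ contains some block $\cnt{j''}{n}$ entirely as an infix, and the two occurrences of $u$ would force two occurrences of $\cnt{j''}{n}$ in $w$, again contradicting Lemma~\ref{lemma:basics-counters}(3) (note that the tail $b_{n+2}$ is too short to host another copy of $\cnt{j''}{n}$). The remaining case, where $u$ contains exactly one order-$\geq n+1$ symbol, is the main combinatorial core; there we replicate the binary-representation analysis of Case 3 of Theorem~\ref{thm:zimin-non-binary-counters} on $u$ to exhibit, for any decomposition $\alpha = \gamma\delta\gamma$, an infix of the form $\gamma r' \gamma$ inside some $\cntb{j_0}{n}$ or $\cntb{j_0}{n}\psi(s)$, and then apply the induction hypothesis together with Fact~\ref{fact:ZiminType-inductive} to conclude $\ZiminType{\gamma} + 1 \leq n+1$, hence $\ZiminType{\alpha} \leq n+1$.

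The main obstacle lies in this final case: the counting argument of Theorem~\ref{thm:zimin-non-binary-counters} that compares the binary expansions of the two witnessing indices must be adapted to the parse $u$ and its occurrences (using Lemma~\ref{lemma:occurences-letter-parse} to transfer symbol counts between $\alpha$ and $u$), while also accounting for the tail symbol $b_{n+2}$ which may interact with the last block of $\cnt{i}{n+1}$. The reason the strengthened statement with $\psi(b_{n+2})$ is required, rather than merely $\Zimin{\cntb{i}{n+1}} \leq n+2$, is precisely to supply the induction hypothesis on the boundary subwords $\cntb{j'}{n}\psi(s)$ and $\psi(s)\cntb{j'}{n}$ that unavoidably arise when parse contexts straddle a block boundary inside $\cnt{i}{n+1}$.
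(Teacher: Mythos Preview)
Your overall architecture is right and matches the paper's: induction on $n$, reduce to bounding $\ZiminType{\alpha}$ for an arbitrary repeated infix, dispose of simple $\alpha$ via Lemma~\ref{lem:Zimin-simple}, then work with the unique parse $(\ell,u,r)$ of $\alpha$ and split on how many high-order symbols the parse sees. But several pieces of the actual mechanism are missing or misdescribed.

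\textbf{Base case and embedding.} The paper verifies $n=2$ \emph{and} $n=3$ by computer and starts the induction at $n\geq 3$; this is not cosmetic. In the ``exactly one high-order symbol'' case one eventually needs bounds such as $\lfloor\log_2(2n+4)\rfloor\leq n$, which fail for $n=2$. Your base case $n=2$ alone is insufficient. Separately, your device of embedding $w=\cnt{i}{n+1}b_{n+2}$ into some $\cnt{j}{n+2}$ to invoke Lemma~\ref{lem:Zimin-simple} breaks for the words $\psi(b_{n+2})\cntb{0}{n+1}$: the block $\cnt{0}{n+1}$ occurs only at the very start of every $\cnt{j}{n+2}$, so it is never preceded by an order-$(n{+}2)$ symbol. (The fix is easy---argue Lemma~\ref{lem:Zimin-simple} directly for these words---but it should be said.)

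\textbf{The main case.} Your description ``for any decomposition $\alpha=\gamma\delta\gamma$, exhibit an infix $\gamma r'\gamma$ inside some $\cntb{j_0}{n}$ or $\cntb{j_0}{n}\psi(s)$'' is not what actually happens, and it glosses over two genuine obstacles. First, before any overlap analysis one must show that the two occurrences of the parse have the \emph{same context} (the paper's Claim~1); this uses Lemma~\ref{lem:order-from-position} together with the fact that $\psi(0_k)$ and $\psi(1_k)$ differ in their first and last letters, and without it the decomposition $\alpha=\tilde{x}\,\psi(b)\,\tilde{y}$ is not well-defined relative to both occurrences. Second, once one has $\alpha=\tilde{x}\,\psi(b)\,\tilde{y}$ and the companion identity $\tilde{y}\chi\tilde{x}=\cntb{j_0}{n}$ (the paper's Claim~2), there is a six-way case split on how $\gamma\delta\gamma$ overlaps $\tilde{x}\,\psi(b)\,\tilde{y}$. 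In four of those cases one does embed $\gamma\chi\gamma$ into $\cntb{j_0}{n}$, $\psi(b)\cntb{j_0}{n}$, or $\cntb{j_0}{n}\psi(b)$ and applies the induction hypothesis; but in one case (both boundaries of $\psi(b)$ fall inside the outer $\gamma$'s) no such embedding exists, and instead one uses unicity of the parse to force $\underline{x}=\underline{y}=\varepsilon$ and conclude that $\gamma=\ell r$ is short. Your plan does not account for this case.

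\textbf{The tail symbol.} You propose to handle $b_{n+2}$ simultaneously with the body, but note that $u$ can never contain $b_{n+2}$ (it occurs once in $w$, while $u$ occurs twice), whereas the \emph{context} of one of the two occurrences may contain $b_{n+2}$ while the other contains an order-$1$ symbol---so Claim~1 can fail across the tail. This is precisely why the paper first proves $\Zimin{\cntb{i}{n+1}}\leq n+2$ without the tail and then treats the four tail words by a separate case analysis (locating $\alpha\beta\alpha$ relative to $\psi(b_{n+2})$ and, in the boundary subcases, pushing $\alpha$ into some $\psi(b')\cntb{0}{n}$ or $\cntb{\max{n}-1}{n}\psi(b')$ to reuse the induction hypothesis). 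Your plan needs an analogue of this step; the claim that ``the four subcases are symmetric'' is not accurate, since $\cnt{i}{n+1}$ is left--right asymmetric and the right-tail case genuinely requires an extra argument.
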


\begin{proof}
We proceed by induction on $n$. For the cases $n=2$ and $n=3$, the property is checked using a computer program. 
	Remark that the reason we start the induction at $3$ is to be able to apply the upper bound from Lemma~\ref{lem:Zimin-simple}. 


For the induction step assume that the property holds for some $n \geq 3$ and let us show that it holds for $n+1$. Let $i \in [0,\max{n+1}-1]$, we have to show that 
\[
\begin{array}{lclclcl}
\Zimin{\cntb{i}{n+1} \psi(0_{n+2})} & \leq & n+2, & & 
\Zimin{\cntb{i}{n+1} \psi(1_{n+2})} & \leq & n+2, \\
 \Zimin{\psi(0_{n+2})\cntb{i}{n+1}}  & \leq & n+2, & &  
 \Zimin{\psi(1_{n+2})\cntb{i}{n+1}}  & \leq & n+2.\\ 
 \end{array}
\] 
 
 We start by showing that $\Zimin{\cntb{i}{n+1}} \leq n+2$. Let $\alpha \beta \alpha$ be an infix of $\cntb{i}{n+1}$ for some non-empty $\alpha$ and $\beta$. It is enough to show that $\ZiminType{\alpha} \leq n+1$. By Lemma~\ref{lem:Zimin-simple}, we only need to consider the case when $\alpha$ is non-simple.
	Let ${p}=(\ell,u,r)$ be the parse of $\alpha$\modified{, whose uniqueness} is guaranteed by 
	Lemma~\ref{lem:unique-parse}.

Let $m$ be an occurrence of $\alpha\beta\alpha$ in $\cntb{i}{n+1}$.
In particular, $m$ and $m+|\alpha\beta|$ are two occurrences of $\alpha$ in  $\cntb{i}{n+1}$. By Lemma~\ref{lem:bijection_occurrences}, there are two corresponding occurrences $m_1$ and $m_2$ of the parse $p$ in $\cnt{i}{n+1}$. Consider the contexts $c_1$  and $c_2$ of $p$ that correspond to the occurrences $m_1$ and $m_2$,
respectively.
Note that without further hypothesis $c_1$ and $c_2$ are not necessarily equal.

 We distinguish cases depending on the number of occurrences of a symbol of order $n+1$ in $c_1$.
 \medskip
 \noindent

\textbf{If $c_1$ does not contain any symbol of order $n+1$.} As $c_1$ is an infix of $\cnt{i}{n+1}$ 
{and since by assumption $c_1$} does not contain any symbol of order $n+1$, it must be an 
infix of some $\cnt{j}{n}$ with $j \in [0,\max{n}-1]$.
By definition of the context of a parse, $\alpha$ is an infix of $\psi(c_1)$ and hence of $\cntb{j}{n}$. 
Thus, 
\[
\ZiminType{\alpha} \leq \Zimin{\cntb{j}{n}} \leq \Zimin{\cntb{j}{n}\psi(0_{n+1}}
\leq n+1,
\]
where the last inequality follows from induction hypothesis.
 \medskip
 \noindent

\textbf{If $c_1$ contains at least two symbols of order $n+1$.} We will show that this situation cannot occur. By definition of $\cnt{i}{n+1}$, $c_1$ contains an infix of the form $b\cnt{j}{n}b'$ for some $j \in [0,\max{n}-1]$ and $b,b' \in \{0_{n+1},1_{n+1}\}$. The center $u$ of the parse $p=(\ell,u,r)$ must therefore contain $\cnt{j}{n}$ as an infix. As there are two occurrences of $u$ in $\cnt{i}{n+1}$\footnote{Recall that there are two occurrences of $p$ in $\cnt{i}{n+1}${.}}, this would imply that $\cnt{j}{n}$ has two occurrences in $\cnt{i}{n+1}${,} which brings the contradiction (using Lemma~\ref{lemma:basics-counters}).
 \medskip
 \noindent

\textbf{If $c_1$ contains one and only one symbol of order $n+1$.} 

As $c_1$ is an infix of $\cnt{i}{n+1}$ with one order $n+1$ symbol, there exists 
$k_0 \in [0,\max{n}-2]$ and some $b\in\{0_{n+1},1_{n+1}\}$ such that
\[
c_1 = x b y, \text{ where $x\in\Sigma_{n}^*$ is a suffix of $\cnt{k_0}{n}$
and 
$y\in\Sigma_n^*$ is a prefix of $\cnt{k_0+1}{n}$.}
\]

Remark that if $x$ or $y$ are empty, we can conclude using induction hypothesis. 
Indeed in {these} cases, $c_1$ is an infix
of either $\cnt{{k_0}}{n} b$ or 
$b \cnt{{k_0+1}}{n}$. Hence $\alpha$, which is an infix of $\psi(c_1)$, 
is also an infix of either  $\cntb{{k_0}}{n} \psi(b)$ or  $\psi(b) \cntb{{k_0+1}}{n}$. As by induction hypothesis both have Zimin index at most $n+1$, we can conclude using Lemma~\ref{lemma:zimin-monotone} that $\ZiminType{\alpha}\leq n+1$.

From now on, we assume that {both} $x$ and $y$ are non-empty. In particular, the center $u$ of the 
parse $p=(\ell,u,r)$ contains $b$ and can therefore be uniquely written as ${u=}\underline{x}b\underline{y}$.
In summary, we have
\[
\begin{array}{rcl}
c_1    & = & xby, \\
\alpha & = & \ell \psi(\underline{x}) \psi(b) \psi(\underline{y}) r, \\
x      & = & s \underline{x}, \\
y      & = & \underline{y}t. \\\	
\end{array}
\]
for some $s$ and $t$ such that:
\begin{itemize}
\item $s=\varepsilon$ if $\ell=\varepsilon$ and otherwise $s \in \Sigma$ with $\ell$ is a suffix of $\psi(s)$.
\item $t=\varepsilon$ if $r=\varepsilon$ and otherwise $t\in \Sigma$ with $r$ is a prefix of $\psi(t)$.
\end{itemize}

\bigskip
\textbf{Claim~1.} The context $c_2$ (of the second occurence of $\alpha$) is equal to $c_1$. 

\medskip

\noindent
\emph{Proof of the Claim 1:} Similarly as for $c_1$, the context $c_2$ can be written as $s'\underline{x}b\underline{y}t'$ for some $s'$ and $t'$ such that:
\begin{itemize}
\item $s'=\varepsilon$ if $\ell=\varepsilon$ and otherwise $s' \in \Sigma$ with $\ell$ is a suffix of $\psi(s')$.
\item $t'=\varepsilon$ if $r=\varepsilon$ and otherwise $t' \in \Sigma$ with $t$ is a prefix of $\psi(t')$.
\end{itemize}
Towards a contradiction, assume $c_1$ and $c_2$ are different. It is either the case that $s\neq s'$ or the $t\neq t'$.
As both cases can be shown analogously, we only consider the first one and assume 
that $s\neq s'$. In particular, {without loss of generality we may assume that $\ell$ is non-empty}.

The symbols $s$ and $s'$ occur in $\cnt{i}{n+1}$ at the same distance of an order $n+1$ symbol and by Lemma~\ref{lem:order-from-position} must have the same order. Furthermore the last symbol of their encoding by $\psi$ is the same (it is the last symbol of $\ell$). By the definition of $\psi$,
$s$ and $s'$ are either both from {$\{0_k\mid k\geq 1\}$} or both 
from {$\{1_k\mid k\geq 1\}$}. 
This proves that $s$ and $s'$ are equal which brings the contradiction.

\noindent
\emph{End of the proof of Claim 1.}

\bigskip
\noindent
As $c_1=c_2=xby$ and as $b$ belongs to the center $u$ of the parse, the infix $\alpha$ can be written as 
\[
\alpha=\tilde{x}\psi(b)\tilde{y} 
\]
where $\tilde{x}$ is a suffix of $\psi(x)$ and $\tilde{y}$ is a prefix of $\psi(y)$.
\bigskip

\noindent
\textbf{Claim 2.} There exists $j_0 \in [0,\max{n}-1]$ 
and a non-empty $\chi$ such that $\tilde{y}\chi\tilde{x} = \cntb{j_0}{n}$.

\medskip

\noindent
\emph{Proof of Claim 2.}
We proceed along the same lines as in the proof of 
Theorem~\ref{thm:zimin-non-binary-counters}. 
Consider the morphism $\varphi$ that erases all symbols 
in $\Sigma_{n-1}$ and replaces $0_n$ 
and $1_n$ by $0$ and $1$ respectively. 
That is, we can write $\varphi(x)$ and $\varphi(y)$ as follows,
\[
\begin{array}{lcl}
	\varphi(x) & = & b_{\max{n-1}-\ell_0} \cdots b_{\max{n-1}-1} \\
	\varphi(y) &=  & c_0 \cdots c_{\ell_1-1} \\
\end{array}
\]
where $b_{\max{n-1}-k}\in\{0,1\}$ for all
$k\in[1,\ell_0]$ and $c_k\in\{0,1\}$ for all $k\in[0,\ell_1-1]$.

With the same proof as in  Theorem~\ref{thm:zimin-non-binary-counters}, we 
show that
\begin{equation}
	\ell_0 + \ell_1 < \max{n-1}.
\end{equation} 

By the same reasoning as in the proof of 
Theorem~\ref{thm:zimin-non-binary-counters}, there exists 
$j_0 \in [0,\max{n}-1]$ and non-empty $\xi$ such that $\cnt{j_0}{n}=y\xi x$. By applying $\psi$ and recalling that $\tilde{x}$ is a suffix of $\psi(x)$ and $\tilde{y}$ is a prefix of $\psi(y)$, we can conclude.

\emph{End of the proof of Claim 2.}

\bigskip

Let us now consider an arbitrary decomposition of $\alpha$ as 
$\delta\gamma\delta$ for non-empty $\delta$ and $\gamma$. 
Recall that it is enough to show that $\ZiminType{\alpha} \leq n+1$ or
that $\ZiminType{\delta} \leq n$.

There are several cases to consider depending on how the two decompositions $\tilde{x}\psi(b) \tilde{y}$ and $\delta \gamma \delta$ overlap.
\medskip

\textbf{Case 1: $|\tilde{x} \psi(b)| \leq |\delta|$.} 

\medskip

\begin{tikzpicture}[scale=0.55]
\node at (-2,1) {$\alpha=$};
\draw (0,0) rectangle (6,1);
\node at (3,0.5) {$\delta$}; 
\draw (6,0) rectangle (12,1);
\node at (9,0.5) {$\gamma$}; 
\draw (12,0) rectangle (18,1);
\node at (15,0.5) {$\delta$}; 

\draw (0,1) rectangle (2,2);
\node at (1,1.5) {$\tilde{x}$};
\draw (2,1) rectangle (4,2);
\node at (3,1.5) {$\psi(b)$};
\draw (4,1) rectangle (18,2);
\node at (11,1.5) {$\tilde{y}$};

\end{tikzpicture}

This situation cannot occur under our hypothesis.
Indeed, $\alpha$ would contain two occurences of 
$\psi(b)$ which by Lemma~\ref{lemma:occurences-letter-parse} implies that the center of its parse contains two occurences of the order $n+1$ symbol $b$. 
This brings a contradiction with the fact that the context 
$c_1$ contains exactly one symbol of order $n+1$.
\medskip

\textbf{Case 2: $|\tilde{x}|\leq |\delta|$  and $|\delta|<|\tilde{x}\psi(b)|\leq |\delta\gamma|$.} 

\medskip

\begin{tikzpicture}[scale=0.55]
\node at (-2,1) {$\alpha=$};
\draw (0,0) rectangle (6,1);
\node at (3,0.5) {$\delta$}; 
\draw (6,0) rectangle (12,1);
\node at (9,0.5) {$\gamma$}; 
\draw (12,0) rectangle (18,1);
\node at (15,0.5) {$\delta$}; 

\draw (0,1) rectangle (2,2);
\node at (1,1.5) {$\tilde{x}$};
\draw (2,1) rectangle (8,2);
\node at (5,1.5) {$\psi(b)$};
\draw (8,1) rectangle (18,2);
\node at (13,1.5) {$\tilde{y}$};

\draw[decorate,decoration={brace,amplitude=10pt},yshift=4pt,xshift=0pt]
(2,2) -- (6,2) node [black,midway,yshift=0.6cm] {
$z_1$};
\draw[decorate,decoration={brace,amplitude=10pt},yshift=4pt,xshift=0pt]
(6,2) -- (8,2) node [black,midway,yshift=0.6cm] {
$z_2$};

\draw[decorate,decoration={brace,amplitude=10pt},yshift=-4pt,xshift=0pt]
(8,0) -- (6,0) node [black,midway,yshift=-0.6cm] {
$\gamma_1$};
\draw[decorate,decoration={brace,amplitude=10pt},yshift=-4pt,xshift=0pt]
(12,0) -- (8,0) node [black,midway,yshift=-0.6cm] {
$\gamma_2$};

\end{tikzpicture}

In this case, $\psi(b)$ can be written as $z_1 z_2$ such that $\delta=\tilde{x} z_1$ and $\gamma$ as $\gamma_1 \gamma_2$ such that $\tilde{y}=\gamma_2 \delta$
and $z_1\not=\varepsilon$.

By Claim 2, there exists $j_0 \in [0,\max{n}-1]$ and a non-empty $\chi$ such that $\tilde{y}\chi\tilde{x}=\cntb{j_0}{n}$.

By induction hypothesis, $\cntb{j_0}{n}\psi(b)$ has Zimin index at most $n+1$. 
In particular, 
$\tilde{y}\chi\tilde{x}z_1$, which is a prefix of $\cntb{j_0}{n}\psi(b)$,
also has Zimin index at most $n+1$.

As $\tilde{y}\chi\tilde{x}z_1$ is equal to
$\gamma_2 \delta \chi \delta$
we have that $\delta \chi \delta$ is an infix of a word (\ie $\cntb{j_0}{n}z_1$) of Zimin index at most $n+1$. 
By Fact~\ref{fact:ZiminType-inductive}, this implies that $\ZiminType{\delta} \leq n$ which concludes the case.
\medskip

\textbf{Case 3: $|\tilde{x} |\leq |\delta|$ and $|\tilde{x}\psi(b)|>|\delta\gamma|$.} 

\medskip

\begin{tikzpicture}[scale=0.55]
\node at (-2,1) {$\alpha=$};
\draw (0,0) rectangle (6,1);
\node at (3,0.5) {$\delta$}; 
\draw (6,0) rectangle (12,1);
\node at (9,0.5) {$\gamma$}; 
\draw (12,0) rectangle (18,1);
\node at (15,0.5) {$\delta$}; 

\draw (0,1) rectangle (2,2);
\node at (1,1.5) {$\tilde{x}$};
\draw (2,1) rectangle (16,2);
\node at (9,1.5) {$\psi(b)$};
\draw (16,1) rectangle (18,2);
\node at (17,1.5) {$\tilde{y}$};

\draw[decorate,decoration={brace,amplitude=10pt},yshift=4pt,xshift=0pt]
(2,2) -- (6,2) node [black,midway,yshift=0.6cm] {
$z_1=r$};
\draw[decorate,decoration={brace,amplitude=10pt},yshift=4pt,xshift=0pt]
(6,2) -- (12,2) node [black,midway,yshift=0.6cm] {
$\gamma$};
\draw[decorate,decoration={brace,amplitude=10pt},yshift=4pt,xshift=0pt]
(12,2) -- (16,2) node [black,midway,yshift=0.6cm] {
$z_2=\ell$};

\end{tikzpicture}

In this case $\psi(b)$ can be written as $\psi(b)=z_1 \gamma z_2$ with $z_2$ non-empty such that:
\begin{itemize}
	\item $\delta = \tilde{x}z_1$,
	\item $\delta = z_2\tilde{y}$.
\end{itemize}

First recall that $\alpha=\ell\psi(\underline{x})\psi(b)\psi(\underline{y})r$.
Next, recall that $\tilde{x}=\ell \psi(\underline{x})$, hence 
$\delta = \ell \psi(\underline{x}) z_1$. It follows that 
$(\ell,\underline{x},z_1)$ is a parse of $\delta$.

Finally, recall that $\tilde{y}= \psi(\underline{y}) r$, hence $\delta = z_2 \psi(\underline{y}) r$. It follows that $(z_2,\underline{y},r)$ is a parse of $\delta$.

By the unicity of the parse (Lemma~\ref{lem:unique-parse}), we have $(z_2,\underline{y},r)=(\ell,\underline{x},{z_1})$ and hence $z_2=\ell$, $z_1=r$ and $\underline{x}=\underline{y}$.

We will now show that $\underline{x}=\underline{y}$ is empty. 

Towards a contradiction, assume that $\underline{x}$ is not empty. 
We recall that 
$c_1 = x b y$, where $x\in\Sigma_{n}^*$ is a suffix of $\cnt{k_0}{n}$
and $y\in\Sigma_n^*$ is a prefix of $\cnt{k_0+1}{n}$.

Since $x=s\underline{x}$ it follows that $\underline{x}$ is a suffix of 
$\cnt{k_0}{n}$. 
By definition of $\cnt{k_0}{n}$ we have that
$\underline{x}$ ends with an order $n$ symbol. 
But $\underline{x}=\underline{y}$ is a also prefix of $\cnt{k_0+1}{n}$ 
(which contains an order $n$ symbol) and hence 
starts with $\cnt{0}{n-1}$. 
By Lemma~\ref{lemma:basics-counters}, a suffix of $\cnt{k_0}{n}$ 
starting with $\cnt{0}{n-1}$ is equal to $\cnt{{k_0}}{n}$. Hence $\tilde{x}=\cnt{k_0}{n}$ which is not a prefix of $\cnt{k_0+1}{n}$, which brings the contradiction.

Hence we have $\delta=\ell r=z_2z_1$ and in particular
$|\delta| < |\psi(b)| = 4 + 2n$.
By Lemma~\ref{lemma:zimin-monotone}, we can bound the Zimin index of $\delta$ by
\[
\Zimin{\delta} \leq \lfloor \log_2(2n+4) \rfloor.
\]
As for all $n \geq 3$, $\lfloor\log_2(2n+4)\rfloor\leq n$, we have shown that $\ZiminType{\delta}\leq \Zimin{\delta}\leq n$, which concludes this case.

\textbf{Case 4: $ |\delta| <|\tilde{x} |\leq |\delta\gamma|$ and $|\tilde{x}\psi(b)|\leq |\delta\gamma|$.} 

\medskip

\begin{tikzpicture}[scale=0.55]
\node at (-2,1) {$\alpha=$};
\draw (0,0) rectangle (6,1);
\node at (3,0.5) {$\delta$}; 
\draw (6,0) rectangle (12,1);
\node at (9,0.5) {$\gamma$}; 
\draw (12,0) rectangle (18,1);
\node at (15,0.5) {$\delta$}; 

\draw (0,1) rectangle (8,2);
\node at (4,1.5) {$\tilde{x}$};
\draw (8,1) rectangle (10,2);
\node at (9,1.5) {$\psi(b)$};
\draw (10,1) rectangle (18,2);
\node at (14,1.5) {$\tilde{y}$};

\draw[decorate,decoration={brace,amplitude=10pt},yshift=4pt,xshift=0pt]
(6,2) -- (8,2) node [black,midway,yshift=0.6cm] {
$z_1$};
\draw[decorate,decoration={brace,amplitude=10pt},yshift=4pt,xshift=0pt]
(10,2) -- (12,2) node [black,midway,yshift=0.6cm] {
$z_2$};

\end{tikzpicture}

In this case $\gamma=z_1 \psi(b) z_2$ with $z_1 \neq \varepsilon$ such that $\tilde{x}=\delta z_1$ and $\tilde{y}=z_2\delta$.

By Claim~2, there exists $j_0 \in [0,\max{n}-1]$ 
and a non-empty $\chi$ such that $\tilde{y}\chi\tilde{x} = \cntb{j_0}{n}$.  
We have $\Zimin{{\cntb{j_0}{n}}}\leq\Zimin{{\cntb{j_0}{n}}\psi(0_{n+1})}\leq n+1$,
where the last inequality follows from induction hypothesis.
Hence,
\[
\cntb{j_0}{n}=\tilde{y}\chi\tilde{x} = z_2 \delta \chi \delta z_1
\]
has Zimin index at most $n+1$. This implies that $\delta$ has Zimin type of at most $n$ which 
{concludes} this case.
\medskip

\textbf{Case 5: $ |\delta| <|\tilde{x} |\leq |\delta\gamma|$ and $|\tilde{x}\psi(b)| > |\delta\gamma|$.} 

\medskip

\begin{tikzpicture}[scale=0.55]
\node at (-2,1) {$\alpha=$};
\draw (0,0) rectangle (6,1);
\node at (3,0.5) {$\delta$}; 
\draw (6,0) rectangle (12,1);
\node at (9,0.5) {$\gamma$}; 
\draw (12,0) rectangle (18,1);
\node at (15,0.5) {$\delta$}; 

\draw (0,1) rectangle (8,2);
\node at (4,1.5) {$\tilde{x}$};
\draw (8,1) rectangle (14,2);
\node at (11,1.5) {$\psi(b)$};
\draw (14,1) rectangle (18,2);
\node at (16,1.5) {$\tilde{y}$};

\draw[decorate,decoration={brace,amplitude=10pt},yshift=4pt,xshift=0pt]
(8,2) -- (12,2) node [black,midway,yshift=0.6cm] {
$z_1$};
\draw[decorate,decoration={brace,amplitude=10pt},yshift=4pt,xshift=0pt]
(12,2) -- (14,2) node [black,midway,yshift=0.6cm] {
$z_2$};

\draw[decorate,decoration={brace,amplitude=10pt},yshift=-4pt,xshift=0pt]
(8,0) -- (6,0) node [black,midway,yshift=-0.6cm] {
$\gamma_1$};
\draw[decorate,decoration={brace,amplitude=10pt},yshift=-4pt,xshift=0pt]
(12,0) -- (8,0) node [black,midway,yshift=-0.6cm] {
$\gamma_2$};

\end{tikzpicture}

This case is the symmetric {to} Case~2. 
One can write $\psi(b)$ as 
$z_1 z_2$ such that $\delta=z_2 \tilde{y}$ and 
$\gamma$ as $\gamma_1 \gamma_2$ such that $\tilde{x}=\delta \gamma_1$,
where $\gamma_1\not=\varepsilon$ and $z_2\not=\varepsilon$.

By Claim 2, there exists $j_0 \in [0,\max{n}-1]$ and a non-empty $\chi$ such that $\tilde{y}\chi\tilde{x}=\cntb{j_0}{n}$.

By induction hypothesis, $\psi(b)\cntb{j_0}{n}$ has Zimin index at most $n+1$.  
In particular, 
$z_2\tilde{y}\chi\tilde{x}$, which is a suffix, also has Zimin index at most $n+1$.

As $z_2\tilde{y}\chi\tilde{x}$ is equal to $\delta \chi \delta\gamma_1$
we have that $\delta \chi \delta$ is an infix of the word ${z_2\tilde{y}\chi\tilde{x}=}z_2\cntb{j_0}{n}$ of Zimin index at most $n+1$. 
By Fact~\ref{fact:ZiminType-inductive}, this implies that $\ZiminType{\delta} \leq n$ which concludes the case.
\medskip

\textbf{Case 6: $ |\tilde{x}| > |\delta\gamma|$} 

\medskip

\begin{tikzpicture}[scale=0.55]
\node at (-2,1) {$\alpha=$};
\draw (0,0) rectangle (6,1);
\node at (3,0.5) {$\delta$}; 
\draw (6,0) rectangle (12,1);
\node at (9,0.5) {$\gamma$}; 
\draw (12,0) rectangle (18,1);
\node at (15,0.5) {$\delta$}; 

\draw (0,1) rectangle (14,2);
\node at (7,1.5) {$\tilde{x}$};
\draw (14,1) rectangle (16,2);
\node at (15,1.5) {$\psi(b)$};
\draw (16,1) rectangle (18,2);
\node at (17,1.5) {$\tilde{y}$};

\end{tikzpicture}

This situation cannot occur under our hypothesis.
Indeed $\alpha$ would contain two occurrences of $\psi(b)$ which by Lemma~\ref{lemma:occurences-letter-parse} implies that the center of its parse contains two occurrences of the order $n+1$ symbol $b$. 
This brings
a contradiction to the fact that the context $c_1$ contains exactly one symbol of order $n+1$.

We have shown that for all $i\in[0,\tau(n+1)-1]$ we have
\[
\Zimin{\cntb{i}{n+1}} \leq n+2.
\]

Let us now show that for all $b \in \{0_{n+2},1_{n+2}\}$, we have
\begin{itemize}
\item $\Zimin{\psi(b) \cntb{i}{n+1}}\leq n+2$ and
\item $\Zimin{\cntb{i}{n+1}\psi(b)} \leq n+2$.
\end{itemize}

We first consider the case of $\psi(b) \cntb{i}{n+1}$.
Let $\alpha \beta \alpha$ be an infix of $\psi(b) \cntb{i}{n+1}$ for some non-empty $\alpha$ and $\beta$. By Fact~\ref{fact:ZiminType-inductive}, it is enough to show that $\ZiminType{\alpha}\leq n+1$. By Lemma~\ref{lem:Zimin-simple}, it is enough to consider the case when $\alpha$ is non-simple and hence by Lemma~\ref{lem:unique-parse}, $\alpha$ admits a 
{unique} parse $p=(\ell,u,r)$.

As $\alpha \beta \alpha$ is an infix of  $\psi(b) \cntb{i}{n+1}$, there exists $z_1$ and $z_2$ such that
\[
\psi(b) \cntb{i}{n+1} =z_1 \alpha \beta \alpha z_2.
\]

We distinguish different possible lengths of $z_1$.

\noindent
\textbf{Case 6A: $|z_1| \geq |\psi(b)|$.} 
In this case, $\alpha \beta \alpha$ is an infix of $\cntb{i}{n+1}$. 
We have already shown that $\Zimin{\cntb{i}{n+1}} \leq n+2$
and thus $\ZiminType{\alpha}\leq n+1$.
\medskip

\noindent
\textbf{Case 6B: $|z_1|=0$.}  We will show that $|\alpha|<|\psi(b)|$
and hence by Lemma~\ref{lemma:zimin-monotone} we have
$\Zimin{\alpha}{<|\psi(b)|=} \lfloor \log_2(2n+6) \rfloor \leq n+1$ as $n \geq 3$.

Assume towards a contradiction that $|\alpha| \geq |\psi(b)|$. Hence $\psi(b)$ is a prefix 
of $\alpha$. Therefore, $\psi(b)\cntb{i}{n+1}$ would contain two occurrences of $\psi(b)$. By Lemma~\ref{lemma:occurences-letter-parse}, $b\cnt{i}{n+1}$
would contain two occurrences of the order $n+2$ symbol $b${,} which brings the contradiction.
\medskip

\noindent
\textbf{Case 6C: $1\leq |z_1| < |\psi(b)|$.} We will show that $\alpha$ 
is an infix of $\psi(b') \cntb{0}{n}$ for some $b' \in \{0_{n+1},1_{n+1}\}$.
Note that this will be sufficient since then we can apply induction hypothesis
to conclude that $\ZiminType{\alpha}\leq n+1$.

As $1 \leq |z_1| < |\psi(b)|$, the parse $p=(\ell,u,r)$ is
such that $\ell$ is a {non-empty} suffix of $\psi(b)$ and $u$ is a prefix of $\cnt{i}{n+1}$. 
Let us first show that $\cnt{0}{n}$ is not a prefix of $u$.
Assume towards a contradiction
that $\cnt{0}{n}$, which is a prefix of $\cnt{i}{n+1}$, is also a prefix of $u$. 
By Lemma~\ref{lem:bijection_occurrences}, this would imply that $\cnt{i}{n+1}$ contains two occurrences of $\cnt{0}{n}$, which brings the contradiction. 
Thus, $u$ is not a prefix of $\cnt{0}{n}$ and hence
$\psi(u)r$ is a prefix of $\cntb{0}{n}$. 

It remains to show that $\ell$ is a suffix of $\psi(b')$ for some
$b' \in \{0_{n+1},1_{n+1}\}$. 
As there are two occurrences of the parse $p$ in $\psi(b)\cntb{i}{n+1}$, this 
implies that $\ell$ is the suffix of $\psi(b)$ and some $\psi(b'')$ for some
symbol $b''$ of order {$k\leq n+1$}.
From the definition of $\psi$, it follows that $\ell$ is a suffix of $(01)^{k-1}00$ or $(01)^{k-1}11$. Hence as announced, $\ell$ is a suffix 
of an order $n+1$ symbol.
\bigskip

\noindent
We have shown that $\Zimin{\psi(b) \cntb{i}{n+1}}\leq n+2$.

It remains to consider the case of $\cntb{i}{n+1}\psi(b)$. 
Remark that, as the definition of higher-order counters is not symmetrical with respect to left-right and 
right-left, this case is not identical to the previous one.

Let $\alpha \beta \alpha$ be an infix of $\cntb{i}{n+1}\psi(b)$ for some non-empty $\alpha$ and $\beta$. 
By Fact~\ref{fact:ZiminType-inductive}, it is enough to show that $\ZiminType{\alpha}\leq n+1$. 
By Lemma~\ref{lem:Zimin-simple}, it is enough to consider the case when $\alpha$ is non-simple and hence by Lemma~\ref{lem:unique-parse}, $\alpha$ has a unique parse $p=(\ell,u,r)$.

As $\alpha \beta \alpha$ is an infix of  $\cntb{i}{n+1} \psi(b)$, there exist $z_1$ and $z_2$ such that:
\[
\cntb{i}{n+1} \psi(b)  =z_1 \alpha \beta \alpha z_2
\]

We distinguish cases on the length of $z_2$.
\medskip

\noindent
\textbf{Case 6D: $|z_2| \geq |\psi(b)|$.} In this case, $\alpha \beta \alpha$ is an infix of $\cntb{i}{n+1}$. We have already shown that $\Zimin{\cntb{i}{n+1}} \leq n+2$.
\medskip

\noindent
\textbf{Case 6E: $|z_2|=0$.}  We will show that $|\alpha|<|\psi(b)|$
and hence by Fact~\ref{lemma:zimin-monotone}, $\Zimin{\alpha} \leq \lfloor \log_2(2n+6) \rfloor \leq n+1$ as $n \geq 3$.

Assume towards a contradiction that $|\alpha| \geq |\psi(b)|$. Hence $\psi(b)$ is a suffix
of $\alpha$. Therefore, $\cntb{i}{n+1}\psi(b)$ would contain two occurrences of $\psi(b)$. By Lemma~\ref{lemma:occurences-letter-parse}, $\cnt{i}{n+1}b$ would contain two occurrences of the order $n+2$ symbol $b$ which brings the contradiction.
\medskip

\noindent
\textbf{Case 6F: $1\leq|z_2| < |\psi(b)|$.}


Recall that  $\cntb{i}{n+1}$ ends with $\cntb{\max{n}-1}{n} \psi(b')$ for some $b' \in \{0_{n+1},1_{n+1}\}$.

We now distinguish cases on the length of $\alpha z_2$. 

$\quad$\textbf{Subcase : $|\alpha z_2| \leq |\psi(b')\psi(b)$|.} 

As $b$ is an order $n+2$ symbol and $b'$ an order $n+1$ symbol,
we have that $|\alpha|< 4 + 2n + 4 + 2(n+1)$. 
By Lemma~\ref{lemma:zimin-monotone}, it follows that $\Zimin{\alpha} \leq \lfloor \log_2(4n+10) \rfloor$. 
Furthermore as for all $n \geq 3$ we have
$\lfloor \log_2(4n+10) \rfloor \leq n+1$, we can conclude this subcase.

\medskip

$\quad$\textbf{Subcase : $|\psi(b')\psi(b)|<|\alpha z_2| \leq |\cntb{\max{n}-1}{n}\psi(b')\psi(b)$|.} \\
In this case, the parse $p=(\ell,u,r)$ of $\alpha$ is such that:
\begin{itemize}
\item $r$ is a non-empty prefix of $\psi(b)$,
\item $u$ ends with the order $n+1$ symbol $b'$.	
\end{itemize}
By Lemma~\ref{lem:bijection_occurrences}, the parse $p$ has two occurences in $\cnt{i}{n+1}b$. 
Hence it has an occurrence in $\cnt{i}{n+1}$. 
As $u$ ends with an order $n+1$ symbol and as any symbol of order $n+1$ can only be followed by a symbol of order $1$ in $\cnt{i}{n+1}$, we have that $r$ is a 
strict prefix of an order $1$ symbol. In particular $|r|<4$.

We have established that $\alpha$ is a suffix of ${\cntb{\max{n}-1}{n}} \psi(b') r$ with $|r|<4$. It remains to prove that $\ZiminType{\alpha} \leq n+1$.

Consider a decomposition of $\alpha$ as  $\delta \gamma \delta$ for some non-empty $\delta$ and $\gamma$. Assume towards a contradiction that  $|\delta|\geq |\psi(b')r|$. In this case, $\psi(b')$ is an infix of $\delta$ and hence $\alpha$ would have two occurrences of $\psi(b')$. By Lemma~\ref{lemma:occurences-letter-parse}, the 
{center of $\alpha$'s parse} would contain two order $n+1$ symbols which contradicts the fact that
$\alpha$ is a suffix of  ${\cntb{\max{n}-1}{n}} \psi(b') r$ which {has precisely
one occurrence of the code of one order $n+1$ symbol}.

Hence we have $|\delta| < |\psi(b')r| \leq 2n +7$. 
By Lemma~\ref{lemma:zimin-monotone}, $\Zimin{\delta} \leq \lfloor \log_2(2n+7) \rfloor$. As for all $n \geq 3$, it holds that  $\lfloor \log_2(2n+7) \rfloor \leq n$. We have shown that $\Zimin{\delta}\leq n$ and hence $\Zimin{\alpha}\leq n+1$.

\medskip

$\quad$\textbf{Subcase : $|\alpha z_2| > |\cntb{\max{n}-1}{n}\psi(b')\psi(b)$|.}\\
 This case cannot occur under our assumptions. 
 Indeed, this would imply that the center $u$ of the parse $p=(\ell,u,r)$ of $\alpha$ contains $\cnt{\max{n}-1}{n}$. As the parse $p$ has at least two occurrences in $\cnt{i}{n+1}b$, it would imply that  $\cnt{\max{n}-1}{n}$ has two occurrences in $\cnt{i}{n+1}${,} which contradicts Lemma~\ref{lemma:basics-counters}.
\end{proof}

\newcommand{\defeq}{\stackrel{\text{def}}{=}}
\newcommand{\defequiv}{\stackrel{\text{def}}{\Leftrightarrow}}
\newcommand\restrict[1]{{|}_{{#1}}}

\section{Avoiding Zimin patterns in the abelian sense}\label{S Abelian}

Matching a pattern in the abelian sense is a weaker condition, where one only requires that
all infixes that are matching a pattern variable must have the same number of occurrences of each letter
(instead of being the same words).
Hence, for two words $x,y\in A^*$ we write $x\equiv y$ if
$|x|_a=|y|_a$ for all $a\in A$.
Let $\rho=\rho_1\cdots\rho_n$ be a pattern, where $\rho_i\in\X$ is a pattern variable
for all $i\in[k]$.
An {\em abelian factorization of a word $w\in A^*$ for the pattern $\rho$}
is a factorization 
$w=w_1\cdots w_n$ such that $w_i\not=\varepsilon$ for all $i\in[n]$ and
$\rho_i=\rho_j$ implies $w_i\equiv w_j$ for all $i,j\in[n]$.
A word $w\in A^*$ {\em matches pattern $\rho$ in the abelian sense} if
there is an abelian factorization of $w$ for $\rho$.
The definitions when a word encounters a pattern in the abelian sense and when a pattern is unvavoidable in the abelian sense
are as expected.

We note that every pattern that is unavoidable is in particular unavoidable in the abelian sense.
However, the converse does not hold in general as witnessed by the pattern $xyzxyxuxyxzyx$ as shown in \cite{CuLi01}.

To the best of the authors' knowledge abelian unavoidability still lacks a characterization in the style of general unavoidability in terms of Zimin patterns;
we refer to \cite{Currie05} for some open problems and conjectures.
Although being possibly less meaningful as for general unavoidability, the analogous Ramsey-like function for
abelian unavoidability has been studied.

\begin{definition}
Let $n,k\geq 1$. We define
$$
g(n,k)=\min\{\ell\geq 1\mid\forall w\in[k]^\ell:w\text{ encounters }Z_n\text{ in the abelian sense}\}.
$$
\end{definition}

Clearly, $g(n,k)\leq f(n,k)$ and to the best of the authors' knowledge no elementary upper bound has been shown for $g$ so far.
By applying a combination of the probabilistic method \cite{AS15} and 
analytic combinatorics \cite{FS09}
Tao showed the following lower bound for $g$.
\begin{theorem}[Tao \cite{Tao14}, Corollary 3\label{T Tao}]
Let $k\geq 4$. Then
$$g(n,k)\geq
 (1+o(1))\sqrt{2\prod_{j=1}^{n-1}\left[\sum_{\ell=1}^\infty\frac{1}{k^{2^j\ell}}\sum_{i_1+\cdots+i_k=\ell}
{\ell \choose i_1,\ldots,i_k}\right]^{-1}}\qquad.$$
\end{theorem}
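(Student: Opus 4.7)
The plan is to establish this lower bound via the first moment method. Fix $\ell\geq 1$ and draw $w\in[k]^\ell$ uniformly at random. Let $X$ count the pairs $(v,\varphi)$ where $v$ is an infix of $w$ and $\varphi$ is an abelian factorization of $v$ witnessing that $v$ matches $Z_n$ in the abelian sense. If $\mathbb{E}[X]<1$, then $\mathbb{P}[X=0]>0$, so some word of length $\ell$ avoids $Z_n$ in the abelian sense, giving $g(n,k)>\ell$. Inverting this inequality yields the claimed lower bound.

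To compute $\mathbb{E}[X]$, I would index abelian factorizations by the tuple $\bm{m}=(m_1,\ldots,m_n)\in(\N^+)^n$ where $m_i$ is the length assigned to the variable $x_i$. Since $x_i$ occurs $2^{n-i}$ times in $Z_n$, the factored infix has length $L(\bm m)=\sum_{i=1}^{n} 2^{n-i}m_i$, and there are at most $\ell-L(\bm m)+1\leq\ell$ choices of starting position. For a fixed position and fixed $\bm m$, the $2^{n-i}$ substrings assigned to $x_i$ occupy pairwise disjoint positions, so under the uniform distribution on $w$ the abelian-match event factors across $i$. The constraint at $x_n$ is vacuous (single occurrence), while for $i<n$ the probability that $2^{n-i}$ independent uniform random words of length $m_i$ share a common Parikh vector equals $\sum_{i_1+\cdots+i_k=m_i}\bigl(\binom{m_i}{i_1,\ldots,i_k}/k^{m_i}\bigr)^{2^{n-i}}$.

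Collecting these contributions and reindexing with $j=n-i$, one obtains a bound of the shape
\[
\mathbb{E}[X]\;\leq\;\ell\cdot\sum_{\bm m}\prod_{j=1}^{n-1}\frac{1}{k^{2^{j}m_{n-j}}}\sum_{i_1+\cdots+i_k=m_{n-j}}\binom{m_{n-j}}{i_1,\ldots,i_k}^{2^{j}},
\]
and because the length $m_n$ assigned to $x_n$ contributes a further factor whose sum over $m_n\geq 1$ is $O(\ell)$ (any nonempty word works; controlled by the trivial bound $m_n\leq L(\bm m)\leq\ell$), the sum factorizes across $j$ and one gets $\mathbb{E}[X]\leq\frac{\ell^{2}}{2}\prod_{j=1}^{n-1}S_{j}$ (up to $(1{+}o(1))$), where $S_j$ is the quantity in brackets in the theorem statement. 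Setting this $\leq 1$ and solving for $\ell$ produces the announced bound.

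The main technical obstacle is not the combinatorial expansion but the analytic combinatorics step: one must (i) justify replacing the sum over the length $L$ of the infix by a single polynomial-in-$\ell$ factor (carrying the $1+o(1)$), and (ii) verify that the series $S_j$ is finite — which requires $k\geq 4$ (this is presumably where the hypothesis enters, by ensuring that the dominant singularity of the generating function of centered random walks on $[k]$ is controlled so that $\sum_{m\geq 1}k^{-2^j m}\sum\binom{m}{\cdot}^{2^j}$ converges rapidly enough). Tracking the constant $\sqrt{2}$ then amounts to extracting the leading asymptotics, a standard use of the singularity analysis tools from \cite{FS09}.
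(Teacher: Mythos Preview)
This theorem is not proved in the paper at all: it is quoted as Corollary~3 of Tao~\cite{Tao14}, with the remark that Tao obtained it ``by applying a combination of the probabilistic method \cite{AS15} and analytic combinatorics \cite{FS09}''. There is therefore no proof in the paper to compare your sketch against. What the paper \emph{does} prove in Section~\ref{S abelian Lower} is a deliberately cruder first-moment bound: instead of the exact Parikh-coincidence probability, it uses only $\Pr[w_1\equiv\cdots\equiv w_m]\leq (1/k)^{m-1}$ (Claim~1 there, via reduction modulo $k$), obtaining the weaker but elementary bound $g(n,k)>k^{\lfloor 2^n/(n+2)\rfloor-1}$.

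Your outline is consistent with what the paper attributes to Tao, and your accounting for the $\sqrt{2}$ via the $\ell^2/2$ coming from the joint sum over starting position and the free length $m_n$ is correct. Your diagnosis of the hypothesis $k\geq 4$ is also essentially right: for $j=1$ the local CLT gives $\sum_{|\alpha|=m}\binom{m}{\alpha}^{2}\sim C\,k^{2m}m^{-(k-1)/2}$, so the series $\sum_m k^{-2m}\sum_{|\alpha|=m}\binom{m}{\alpha}^{2}$ converges iff $k\geq 4$.

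One genuine inconsistency you should resolve before going further: in your displayed bound you (correctly) keep the exponent $2^{j}$ on the multinomial coefficient, but you then identify the resulting factor with the bracketed quantity $S_j$ in the theorem statement, where the multinomial appears only to the first power. Taken literally, the printed $S_j$ equals $\sum_{\ell\geq 1}k^{-(2^j-1)\ell}=1/(k^{2^j-1}-1)$, which converges for every $k\geq 2$ and would make the hypothesis $k\geq 4$ superfluous --- and would also not match Jug\'e's constant $1/\sqrt{21}$ in Corollary~\ref{cor:juge}. This strongly suggests the statement as reproduced here omits the exponent $2^j$ on $\binom{\ell}{i_1,\ldots,i_k}$ and that your formula is the intended one; check Tao's original before investing effort in the singularity analysis.
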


Unfortunately, it was not clear to {the authors} what the asymptotic behavior of this lower bound is. However Jugé \cite{JugePC} provided us  with an estimate of its asymptotic behavior. 

\begin{corollary}[Jug\'e \cite{JugePC}]\label{cor:juge}
	Let $k \geq 4$. The expression in Theorem \ref{T Tao}, and hence
	$g(n,k)$, is lower-bounded by
	$$
	  \left(\dfrac{1}{\sqrt{21}}+o(1)\right) \dfrac{k^{2^{n-1}}}{{k}^{(n+1)/2}}\qquad.
	$$
\end{corollary}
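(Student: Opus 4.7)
The plan is to compute the expression in Theorem~\ref{T Tao} in essentially closed form and then bound the constant factor that survives. The key observation is that the innermost sum collapses by the multinomial theorem: setting $x_1 = \cdots = x_k = 1$ yields
\[
\sum_{i_1+\cdots+i_k=\ell}\binom{\ell}{i_1,\ldots,i_k} = k^\ell.
\]
Consequently, for each $j \geq 1$ the bracketed expression in Theorem~\ref{T Tao} becomes a geometric series:
\[
\sum_{\ell=1}^\infty \frac{k^\ell}{k^{2^j\ell}} = \sum_{\ell=1}^\infty k^{-\ell(2^j - 1)} = \frac{1}{k^{2^j - 1} - 1},
\]
which converges since $k \geq 2$ and $j \geq 1$.

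Next I would invert and factor each term as $k^{2^j - 1} - 1 = k^{2^j - 1}(1 - k^{1-2^j})$. Using the telescoping identity $\sum_{j=1}^{n-1}(2^j - 1) = 2^n - n - 1$, this gives
\[
\prod_{j=1}^{n-1}\left[\sum_{\ell=1}^\infty \frac{1}{k^{2^j\ell}}\sum_{i_1+\cdots+i_k=\ell}\binom{\ell}{i_1,\ldots,i_k}\right]^{-1} = k^{2^n - n - 1} \prod_{j=1}^{n-1}(1 - k^{1-2^j}).
\]
Plugging this into Theorem~\ref{T Tao} and rewriting the exponent as $(2^n - n - 1)/2 = 2^{n-1} - (n+1)/2$ transforms the lower bound into
\[
(1+o(1)) \cdot \frac{k^{2^{n-1}}}{k^{(n+1)/2}} \cdot \sqrt{2\prod_{j=1}^{n-1}(1 - k^{1-2^j})}.
\]

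It remains to show that the rightmost factor exceeds $1/\sqrt{21}$ in the limit, uniformly in $k \geq 4$. As $n \to \infty$, the partial product converges to $P(k) := \prod_{j=1}^\infty(1 - k^{1-2^j})$, and since each factor $1 - k^{1-2^j}$ is increasing in $k$, so is $P$. Hence it suffices to verify $2\,P(4) \geq 1/21$. This is very loose: the first two factors already give $(1 - 1/4)(1 - 1/64) = 189/256$, and a crude union bound $\prod_{j \geq 3}(1 - 4^{1-2^j}) \geq 1 - \sum_{j \geq 3} 4^{1-2^j}$ applied to the rapidly converging geometric-type tail shows the remainder is at least $1 - O(4^{-7})$, so $P(4) \geq 0.7$ and thus $\sqrt{2 P(4)} > 1 \gg 1/\sqrt{21}$. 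There is no real obstacle in the proof; the only bookkeeping issue is ensuring that the $o(1)$ in the corollary's statement simultaneously absorbs both the $(1+o(1))$ factor inherited from Theorem~\ref{T Tao} and the truncation error of replacing $\prod_{j=1}^{n-1}$ by $\prod_{j=1}^\infty$, the latter being $O(k^{-2^{n-1}})$.
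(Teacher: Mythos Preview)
Your argument is correct. The multinomial identity collapses the inner sum to $k^\ell$, the geometric series evaluation is right, the telescoping exponent $\sum_{j=1}^{n-1}(2^j-1)=2^n-n-1$ is right, and your monotonicity argument for the residual product $\prod_{j\geq 1}(1-k^{1-2^j})$ is sound. In fact your computation shows the limiting constant is strictly larger than $1$ for all $k\geq 4$, so the stated constant $1/\sqrt{21}$ is comfortably recovered.

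There is nothing to compare against in the paper itself: the corollary is attributed to Jug\'e via personal communication \cite{JugePC} and no proof is given. Your derivation is presumably close to what Jug\'e had in mind, since the multinomial collapse is the only natural first move; the discrepancy in the constant ($1/\sqrt{21}$ versus something exceeding $1$) may stem from a more conservative estimate on Jug\'e's side, or from a different convention on the range of the $i_j$ in Tao's original statement (if the $i_j$ are required to be strictly positive the inner sum is smaller than $k^\ell$, though still of the same order). Either way, your version certainly proves the corollary as stated.
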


\bigskip

In Section \ref{S abelian Lower} we prove another doubly-exponential lower bound on $g$ by applying
the first moment method  \cite{AS15}. 
Our lower bound on $g$ is not as good as the {one obtained by combining}
Theorem~\ref{T Tao} with Corollary~\ref{cor:juge} but its
proof seems more direct (already more direct than the proof of Theorem~\ref{T Tao}
itself).
The proof follows a similar strategy as the (slightly better) doubly-exponential 
lower bound for $f$ from \cite{CR14}, but again, seems to be more direct.
Our novel contribution is to provide a doubly-exponential upper bound on $g$ in Section \ref{S abelian Upper}. 
Note that Tao in \cite{Tao14} only provides a non-elementary 
upper bound for the non-abelian case. 

\subsection{A simple lower bound via the first-moment method\label{S abelian Lower}}
For all $n\geq 1$ let $\X_n=\{x_1,\ldots,x_n\}$ denote the set of the first
$n$ pattern variables. We note that the variable $x_i$ appears precisely $2^{n-i}$ times in 
$Z_n$ and its first occurrence is at position $2^{i-1}$ for all $i\in[1,n]$.
An {\em abelian occurrence of $Z_n$} in a word $w$ is a pair 
$(j,\lambda)\in [0,|w|-1]\times\N^{\X_n}$
for which there is an factorization $w=uvz$ with $|u|=j$
 and an abelian factorization $v_1\cdots v_{2^n-1}$ of $v$ for $Z_n$
satisfying $\lambda(x_i)=|v_{2^{i-1}}|$.
%

By applying the probabilistic method \cite{AS15} we show a lower bound 
for $g(n,k)$ that is doubly-exponential in $n$ for every fixed $k\geq 2$. The proof is similar the lower bound proof from \cite{CR14}.

\begin{theorem}
Let $k\geq 2$. Then
$$g(n,k)> 
k^{\left\lfloor\frac{2^{n}}{n+2}\right\rfloor-1}\quad.$$
\end{theorem}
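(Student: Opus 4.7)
The plan is to apply the first moment method. I would set $\ell = k^{\lfloor 2^n/(n+2)\rfloor - 1}$, pick $w \in [k]^\ell$ uniformly at random, and let $X(w)$ count the abelian occurrences of $Z_n$ in $w$. Showing $\mathbb{E}[X] < 1$ guarantees a word of length $\ell$ with no abelian occurrence of $Z_n$, hence $g(n,k) > \ell$. By linearity of expectation,
\[
\mathbb{E}[X] \;=\; \sum_{(j,\lambda)} \Pr\bigl[(j,\lambda)\text{ is an abelian occurrence of }Z_n \text{ in } w\bigr],
\]
where $j \in [0, \ell - 1]$ and $\lambda = (\ell_1, \ldots, \ell_n) \in (\N^+)^n$ ranges over length assignments satisfying $\sum_{i=1}^n 2^{n-i}\ell_i \leq \ell - j$.

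The first key observation is that for fixed $(j, \lambda)$ the $2^n - 1$ blocks of the Zimin factorization occupy prescribed disjoint intervals of $w$. Hence the events ``all $m_i := 2^{n-i}$ blocks assigned to variable $x_i$ share a common Parikh vector'' are mutually independent over $i \in [n]$. Writing $q_{\vec{a}}(\ell_i) = \binom{\ell_i}{\vec{a}}\,k^{-\ell_i}$ for the probability that a uniform word of length $\ell_i$ over $[k]$ has Parikh vector $\vec{a}$, this gives
\[
\Pr\bigl[(j,\lambda) \text{ is an abelian occurrence}\bigr] \;=\; \prod_{i=1}^n p_i, \qquad p_i \;=\; \sum_{\vec{a}} q_{\vec{a}}(\ell_i)^{m_i} \;\leq\; q^\ast(\ell_i)^{m_i - 1},
\]
where $q^\ast(\ell)$ denotes the largest value of $q_{\vec{a}}(\ell)$ over all Parikh vectors $\vec{a}$, and the inequality follows from $\sum_{\vec{a}} q_{\vec{a}}(\ell_i) = 1$.

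The central step, which I foresee as the only non-routine estimate, is the bound $q^\ast(\ell) \leq 1/k$ for every $\ell \geq 1$. I would prove it by monotonicity of $q^\ast$: appending one uniform letter yields the convolution identity $q_{\vec{a}}(\ell+1) = (1/k)\sum_{j=1}^k q_{\vec{a}-e_j}(\ell)$, where $e_j$ is the $j$-th unit vector and $q_{\vec{a}-e_j}(\ell) = 0$ whenever $\vec{a}-e_j$ has a negative entry. It follows that $q^\ast(\ell + 1) \leq q^\ast(\ell)$; combined with the trivial identity $q^\ast(1) = 1/k$, this yields $q^\ast(\ell) \leq 1/k$ for all $\ell \geq 1$. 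Therefore $\prod_{i=1}^n p_i \leq k^{-\sum_{i=1}^n(m_i - 1)} = k^{-(2^n - 1 - n)}$, uniformly in $\lambda$.

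Finally, the number of pairs $(j, \lambda)$ is crudely at most $\ell^{n+1}$ (since $\ell_i \leq \ell$ for each $i$, and $j < \ell$), so
\[
\mathbb{E}[X] \;\leq\; \ell^{n+1}\, k^{-(2^n - n - 1)}.
\]
Plugging in $\ell = k^{\lfloor 2^n/(n+2)\rfloor - 1}$ and using $(n+1)\lfloor 2^n/(n+2)\rfloor \leq (n+1)\,2^n/(n+2) < 2^n$ gives $\mathbb{E}[X] < 1$, as desired. The argument is a direct abelian analogue of the Cooper--Rorabaugh first-moment proof for $f$ from \cite{CR14}; the only quantitative loss is in the bound $p_i \leq k^{-(m_i-1)}$ (compared with the much stronger $k^{-\ell_i(m_i-1)}$ available in the non-abelian case), and this weakness is exactly what produces the extra factor $(n+2)$ in the exponent.
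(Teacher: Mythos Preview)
Your proposal is correct and follows the same first-moment strategy as the paper: bound the expected number of abelian occurrences of $Z_n$ in a uniform word of length $\ell$, show it is below $1$, and conclude. The only substantive difference is in the proof of the key estimate $\Pr[\text{$m$ independent words in $[k]^h$ are abelian-equivalent}]\le k^{-(m-1)}$: the paper observes that $u\equiv v$ forces $\bigl(\sum u_i\bigr)\bmod k=\bigl(\sum v_i\bigr)\bmod k$ and that this coincidence has probability $1/k$, whereas you bound the multinomial mode via $q^\ast(\ell+1)\le q^\ast(\ell)$ and $q^\ast(1)=1/k$. Both arguments are short and yield the same bound; yours is slightly more informative (it actually controls the mode), the paper's is slightly slicker. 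Your counting of pairs $(j,\lambda)$ by $\ell^{\,n+1}$ is in fact tighter than the paper's $\ell^{\,n+2}$, so your final verification goes through with room to spare (indeed it would give the marginally better exponent $\lfloor 2^n/(n+1)\rfloor-1$, though you do not claim this).
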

\begin{proof}
For $n,\ell\geq 1$ let $\Delta_{n,k,\ell}$ denote the expected number of abelian occurrences of $Z_n$ in a 
random word in the set $[k]^\ell$. Remark that we always consider the uniform distribution over words.
If $\Delta_{n,k,\ell}<1$, then by the probabilistic method \cite{AS15} there exists a word of length $\ell$
over the alphabet $[k]$ that does {\em not} encounter $Z_n$ in the abelian sense; hence we 
can conclude $g(n,k)>\ell$.
Therefore we investigate those $\ell=\ell(n,k)$ for which we can guarantee $\Delta_{n,k,\ell}<1$.
We need two intermediate claims.

\medskip

\noindent
{\em Claim 1.} 
{The probability that $m$ pairwise independent random words $w_1,w_2\ldots, w_m$ in $[k]^h$ 
satisfy $w_1\equiv w_2 \equiv \cdots\equiv w_m$ is at most $(1/k)^{m-1}$.}

\medskip

\noindent 
{\em Proof of Claim 1.\ }
{We only show the claim only for $m=2$, the case when $m>2$ can be shown analogously.
Let $A_{k,h}$ denote the event that two independent random words $u$ and $v$ in $[k]^h$ satisfy $u\equiv v$. 
Then $\Pr(A_{k,h})\leq 1/k$ for all $h\geq 1$.  }
For every word 
$w=w_1\cdots w_h\in[k]^h$, let $\oplus_k w=\left(\sum_{i=1}^h w_i\right)\text{ mod }k$. 
Remark that $u \equiv v$ implies that $\oplus_k u = \oplus_k v$.
Let us fix any $j\in[k]$. Then we clearly have
$\Pr\left[\oplus w=j\right]=1/k$ for every random word
$w=w_1\cdots w_h\in[k]^h$.
Thus,
$$
\begin{array}{lcl}
\Pr(A_h) &\leq
 & \sum_{j\in[1,k]}^k \Pr[\oplus_k u=\oplus_k v=j]\\
 &=&
\sum_{j\in[1,k]}^k \Pr[\oplus_k u=j]\Pr[\oplus_k v=j]=
\sum_{j\in[1,k]}^k 1/k^2=
1/k
\end{array}
$$
\noindent
{\em End of the proof of Claim 1.\ }\\

\newcommand{\width}{\text{width}}

Recall that $Z_n=y_1\cdots y_{2^n-1}$, where $y_i\in\{x_1,\ldots,x_n\}$ for all $i\in[2^n-1]$
and that the variable $x_i$ appears precisely $2^{n-i}$ times in $Z_n$.
We recall that we would like to bound the expected number of occurrences (in the abelian sense) of $Z_n$ in a random 
word of length $\ell$ over the alphabet $[k]$.
To account for this, we define for each mapping $\lambda:\X_n\rightarrow\N^+$ its
{\em width} as $\width(\lambda)=\sum_{i=1}^n2^{n-i}\cdot\lambda(x_i)$.
For every word $v$ of length $\width(\lambda)$ its (unique) {\em decomposition with respect to $\lambda$}
is the unique factorization $v=v_1\cdots v_{2^n-1}$ such that $y_j=x_i$ implies
$|v_j|=\lambda(x_i)$ for all $j\in[2^n-1]$ and all $i\in[n]$.

\medskip

\noindent
{\em Claim 2.} Let $\lambda:\X_n\rightarrow\N^+$ and let $B_\lambda$ denote the event that 
{in a random word from $[k]^d$ we have that} $(0,\lambda)$ is an occurrence of $Z_n$ in the abelian sense.
Then $\Pr(B_\lambda)\leq k^{n-2^n+1}$. 

\medskip

\noindent
{\em Proof of Claim 2.\ } Let $\lambda:\X_n\rightarrow\N^+$ with $d=\width(\lambda)$. For $i \in [n]$, let $j_1^{(i)}<\cdots<j_{2^{n-i}}^{(i)}$ be an enumeration of the $2^{n-i}$ indices corresponding to occurrences of $x_i$ in $Z_n$. 
For all $i\in[n]$ consider the event $B_\lambda^{(i)}$ that  a random word of $[k]^d$ has its decomposition with respect to $\lambda$ of
the form $v_1 \cdots v_{2^n-1}$ such that the words $v_{j_1^{(i)}},\ldots,v_{j_{2^{n-i}}^{(i)}}$ (which are all of length $\lambda(x_i)$)
are pairwise equivalent with respect to $\equiv$.
The event $B_{\lambda}$ is the intersection of the events $B_{\lambda}^{(1)},\ldots,B_{\lambda}^{(n-1)}$ and $B_{\lambda}^{(n)}$. As $B_{\lambda}^{(1)},\ldots,B_{\lambda}^{(n-1)}$ and $B_{\lambda}^{(n)}$ are mutually independent events, the probability $\Pr(B_\lambda)$ is equal to $\prod_{i=1}^{n} \Pr(B_{\lambda}^{(i)})$.
%
%
%
%
%
\noindent
We have 
\begin{eqnarray}
\Pr(B_{\lambda})= \prod_{i=1}^n \Pr(B_\lambda^{(i)}) \stackrel{\text{Claim 1}}{\leq} \prod_{i=1}^n (1/k)^{2^{n-i}-1}=
k^{-\left(\sum_{i=1}^n2^{n-i}\right)+n}=
k^{n-2^n+1}.
\end{eqnarray}
{\em End of the proof of Claim 2.}\\

It is clear that for every $(j,\lambda)$, where $d=\width(\lambda)$ and $j+d\leq\ell$,
the probability that $(j,\lambda)$ is an occurrence of a random word from $[k]^\ell$
equals to probability that $(0,\lambda)$ is such an occurrence
and therefore equals $\Pr(B_\lambda)$. 
Thus, this probability does not depend on $j$.

We are ready to prove an an upper bound for $\Delta_{n,k,\ell}$, where 
we note that any occurrence $(j,\lambda)$ of $Z_n$ in a random word of length
$\ell$ must satisfy $\width(\lambda)\geq 2^n-1$.

\begin{eqnarray}
\Delta_{n,k,\ell} & \leq &
\sum_{d=2^n-1}^\ell\sum_{j=0}^{\ell-d}
\sum_{\lambda:\X_n\rightarrow\N^+\atop
\width(\lambda)=d}\Pr\left[(j,\lambda)\text{ is an occ. in a random word in $[k]^\ell$}\right]\nonumber\\
&\leq&\sum_{d=2^n-1}^\ell\sum_{j=0}^{\ell-d}
\sum_{\lambda:\X_n\rightarrow\N^+\atop
\width(\lambda)=d}\ \Pr(B_\lambda)\nonumber\\
\nonumber\\
&\stackrel{\text{Claim 2}}{\leq}&
\sum_{d=2^n-1}^\ell\sum_{j=0}^{\ell-d}\sum_{\lambda:\X_n\rightarrow\N^+\atop
\width(\lambda)=d}\ k^{n-2^n+1}\nonumber\\
&\leq&
\sum_{d=2^n-1}^\ell\sum_{j=0}^{\ell-d}\quad
d^n\ \cdot\ k^{n-2^n+1}\nonumber\\
&\leq&
\sum_{d=2^n-1}^\ell
\ell\ \cdot\ d^n\ \cdot\ k^{n-2^n+1}\nonumber\\
&\leq&
\frac{\ell^2\ \cdot\ \ell^n}{k^{2^n-n-1}}\nonumber\\
&=&
\frac{\ell^{n+2}}{k^{2^n-n-1}}
\label{Eq}
\end{eqnarray}

\noindent
We finally determine \modified{the largest value of $\ell$ that still guarantees that}  $\Delta_{n,k,\ell}<1$.
\begin{eqnarray*}
\Delta_{n,k,\ell}<1
&\quad\stackrel{(\ref{Eq})}{\Longleftarrow}\quad&
\frac{\ell^{n+2}}{k^{2^n-n-1}}<1\\
&\quad\Longleftarrow\quad&
\ell^{n+2}<k^{2^n-n-1}\\
&\Longleftarrow&
(n+2)\log_k \ell<2^n-n-1\\
&\Longleftarrow&
\log_k \ell<\frac{2^n-n-1}{n+2}\\
&\Longleftarrow&
\ell<k^{\frac{2^n-n-1}{n+2}}\\
&\Longleftarrow&
\ell<k^{\frac{2^n}{n+2}-\frac{n+1}{n+2}}\\
&\Longleftarrow&
\ell=k^{\left\lfloor\frac{2^{n}}{n+2}\right\rfloor-1}
\end{eqnarray*}
\end{proof}

\subsection{A doubly-exponential upper bound\label{S abelian Upper}} 

Let us finally prove an upper bound for $g(n,k)$ that is doubly-exponential in $n$.
\begin{theorem}
$g(n,k)\leq 2^{(4k)^{n}(n-1)!}$.
\end{theorem}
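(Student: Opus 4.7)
The plan is to induct on $n$, exploiting the fact that an abelian match of $Z_n = Z_{n-1}\,x_n\,Z_{n-1}$ is nothing but two abelian matches of $Z_{n-1}$ of the same \emph{abelian type} separated by a non-empty word (playing the role of $x_n$). Define the abelian type of a $Z_m$-factorization $v_1\cdots v_{2^m-1}$ to be the tuple $(\vec v_1,\dots,\vec v_m)\in(\N^k)^m$ where $\vec v_i$ is the common Parikh vector of all $v_j$ with $y_j=x_i$; then concatenating any two abelian $Z_{n-1}$-matches of the same type with an arbitrary non-empty word in between clearly produces an abelian match of $Z_n$. If the total length of each match is at most $M$, the number of possible types is bounded by the number of $(n-1)$-tuples of vectors in $\N^k$ of $1$-norm at most $M$, i.e.\ at most $(M+1)^{k(n-1)}$.

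Second, set $M := g(n-1,k)$ and, given a word $w\in[k]^\ell$, partition it into $\lfloor \ell/M\rfloor$ consecutive windows of length $M$. By induction each window contains an abelian occurrence of $Z_{n-1}$, whose type falls into one of at most $(M+1)^{k(n-1)}$ classes. A pigeonhole argument shows that if $\lfloor \ell/M\rfloor\geq 2(M+1)^{k(n-1)}+1$ then some type is realised in at least three windows $W_i,W_j,W_{j'}$ with $i<j<j'$, and in particular $j'\geq i+2$, so the matches inside $W_i$ and $W_{j'}$ are two abelian $Z_{n-1}$-matches of identical type with the whole window $W_j$ (of length $M\geq 1$) strictly between them; by the first paragraph this yields an abelian match of $Z_n$. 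This produces the recurrence
\[
 g(n,k) \;\leq\; M\bigl(2(M+1)^{k(n-1)}+1\bigr)\;\leq\;3\,g(n-1,k)\,\bigl(g(n-1,k)+1\bigr)^{k(n-1)}.
\]

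Third, solve the recurrence: taking $\log_2$ and writing $L_n=\log_2 g(n,k)$ gives
\[
 L_n \;\leq\;(k(n-1)+1)\,L_{n-1}+k(n-1)+2,
\]
with $L_1=0$. A straightforward induction shows $L_n\leq (4k)^n(n-1)!$; indeed, the inductive step reduces, after dividing by $(4k)^{n-1}(n-2)!$, to checking that $(k(n-1)+1)+o(1)\leq 4k(n-1)$, which is trivially true for $k\geq 1$ and $n\geq 2$. The main subtlety I expect is the off-by-one in the pigeonhole step: one has to find three windows (not two) of the same type in order to guarantee a genuinely non-empty separator between the two chosen matches, because the $x_n$-factor in the definition of an abelian $Z_n$-match must be non-empty; once this is pinned down, everything else is a routine induction.
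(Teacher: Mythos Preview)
Your proof is correct and follows essentially the same inductive strategy as the paper: cut a long word into blocks of length $g(n-1,k)$, find an abelian $Z_{n-1}$-match in each, pigeonhole on the tuple of Parikh vectors (your ``type''), and combine two matches of equal type across a non-empty gap into an abelian $Z_n$-match; then solve the resulting recurrence. The only noteworthy difference is how the non-empty $x_n$-factor is secured. The paper writes $w = w_1 a_1 w_2 a_2 \cdots w_m a_m z$ with $|w_j| = g(n-1,k)$ and $a_j \in [k]$, so any two blocks are already separated by at least one letter and two blocks of the same type suffice; this yields $g(n,k) \leq (g(n-1,k)+1)\bigl(g(n-1,k)^{k(n-1)}+1\bigr)$. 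You instead take contiguous windows and, to rule out the degenerate case where two matches abut, pigeonhole for three windows of the same type and use the outer two; this gives $g(n,k) \leq g(n-1,k)\bigl(2(g(n-1,k)+1)^{k(n-1)}+1\bigr)$. The two recurrences differ only by harmless constants and both feed the same inductive verification of $\log_2 g(n,k) \leq (4k)^n(n-1)!$. The paper's separator-letter trick is marginally slicker (it removes your ``three versus two'' wrinkle), but your workaround is perfectly valid.
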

\begin{proof}
We prove the statement by induction on $n$.
For $n=1$ we have
$$
g(1,k)\quad=\quad1\quad\leq\quad 2^{(4k)^1(1-1)!}\quad.
$$
For the induction step, let $n\geq 1$ and let us assume induction hypothesis for $g(n,k)$.
To determine an upper bound $g(n+1,k)$ we consider any sufficiently long word $w\in[k]^+$ that we 
can factorize as $w=w_1a_1w_2a_2\cdots w_m a_m z$, where $|w_j|=g(n,k)${,}
$a_j\in[k]$ for all $j\in[m]$ and $z\in[k]^*$, where $m$ is assumed sufficiently large 
for the following arguments to work.
By induction hypothesis for all $j\in[m]$, $w_j$ encounters $Z_n$ in the abelian sense, 
witnessed in some infix $v_j$ and some abelian factorization
$v_j=v_j^{(1)}\cdots v_j^{(2^n-1)}$ for $Z_n$ .
To each such abelian factorization we can assign the Parikh image how the 
word $v_j$ matches each variable $x_i$ (with $i\in[n]$) that appears in $Z_n$.
Formally, each of the above abelian factorizations $v_j=v_j^{(1)}\cdots v_j^{(2^n-1)}$ induces a mapping
$\psi_j:\X_n\rightarrow\N^{[k]}$ such that
${\psi_j}(x_i)(t)=|v_j^{(2^i-1)}|_t$ for all $j\in[m]$, all $i\in[n]$ and all $t\in[k]$.
As expected, we write $\psi_j\equiv\psi_h$ if $\psi_j(x_i)=\psi_j(x_i)$ for all $i\in[n]$.
Note that if there are distinct $j,h\in[1,m]$ with $\psi_j\equiv\psi_h$, then clearly
$w$ encounters $Z_{n+1}=Z_nx_{n+1}Z_n$ in the abelian sense.
Let us therefore estimate a sufficiently large bound on $m$ such that there are always
 two distinct indices $i,j\in[1,m]$ that satisfy $\psi_i\equiv\psi_j$.

It is easy to see that there are at most $g(n,k)^{kn}$ different equivalence classes
for the $\psi_j$ with respect to $\equiv$. 

Therefore by setting $m= g(n,k)^{kn}+1$ we have shown
\begin{eqnarray}
g(n+1,k)\quad\leq\quad (g(n,k)+1)(g(n,k)^{kn}+1)\quad.\label{E Recurrence}
\end{eqnarray}
Hence, we obtain
\begin{eqnarray*}
g(n+1,k)&\quad\stackrel{(\ref{E Recurrence})}{\leq}\quad& (g(n,k)+1)(g(n,k)^{kn}+1)\\
&\stackrel{g(n,k)\geq 1}{\leq}& 2 \cdot g(n,k) \cdot 2\cdot g(n,k)^{kn}\\
&=&4\cdot g(n,k)^{kn+1}\\
&\stackrel{n\geq 1}{\leq}&
4\cdot g(n,k)^{2kn}\\
&\stackrel{\text{IH}}{\leq}&
4\cdot \left(2^{(4k)^n(n-1)!}\right)^{2kn}\\
&=&
4\cdot 2^{2\cdot 4^n k^{n+1}n!}\\
&=&
2^{2\cdot 4^nk^{n+1}n!+2}\\
&\stackrel{n\geq 1}{\leq}&
2^{2(2\cdot 4^nk^{n+1}n!)}\\
&=&
2^{(4k)^{n+1}n!}
\end{eqnarray*}

\end{proof}

\section{Conclusion}\label{S Conclusion}

We have established a lower bound for $f(n,k)$ that is already non-elementary when $k=2$.
A first element of an answer is that the first moment method used in \cite{CR14} cannot be used to obtain a lower bound that is asymptotically above doubly-exponential. 
Indeed, as for a length \modified{$\ell \geq  k^{2^n-n-1}+2^n$},
the expected number $\Delta_{n,k,\ell}$ of occurrences $Z_n$ in 
a random word in $[k]^\ell$ is greater than $1$. 

To see this, recall that $|Z_n|=2^n-1$ and hence there is at most one possible occurrence of
$Z_n$ in any word of length $2^n-1$. Let $A_n$ denote the event that $Z_n$ is encountered in a random word in $[k]^{2^n-1}$.
We have
$$
\modified{\Pr(A_n)=\prod_{i=1}^n (1/k)^{2^{n-i}-1}=k^{-2^n+n+1}.}
$$

Assume that $\ell \geq k^{2^n-n-1}+2^n$. For each $i \in [0,k^{2^n-n-1}]$, let $X_i$ be the indicator random variable marking that the 
infix, of a random word in $[k]^\ell$, occurring at $i$ and of length $2^n-1$ matches $Z_n$. By linearity of the expectation, it follows that
\[
\Delta_{n,k,\ell} \geq \sum_{i=0}^{k^{2^n-n-1}} E(X_i) \geq  (k^{2^n-n-1}+1) \Pr(A_n) = 1 + \dfrac{1}{k^{2^n-n-1}} \geq 1.
\]

{Thus, more advanced probabilistic method techniques are necessary. Indeed, very recently~\cite{Colon17} Condon, Fox and Sudakov  have applied the local lemma to obtain non-elementary lower bounds on $f(n,k)$. 

For the abelian case, an explicit family of words witnessing the doubly-exponential lower bound seems worth investigating. 
}



\bibliographystyle{plain}

\bibliography{biblio}


\end{document}
